\documentclass[12pt, letterpaper]{article}
\usepackage[margin=1.25in]{geometry}
\usepackage{multirow, multicol}
\usepackage{booktabs}
\usepackage{dsfont,amssymb,amsmath,amsthm,centernot,graphicx,accents}
\usepackage{enumitem}
\usepackage{natbib}
\usepackage[colorlinks = true, citecolor = blue]{hyperref}
\usepackage{cleveref}
\usepackage{setspace}
\usepackage{comment}
\usepackage{algorithm}
\usepackage{algpseudocode}
\usepackage{ragged2e}
\usepackage{caption}
\usepackage{subcaption}
\usepackage{float}

\crefname{appendix}{appendix}{appendices}
\Crefname{appendix}{Appendix}{Appendices}

\usepackage{pgfplots} 
\pgfplotsset{compat = newest} 
\usetikzlibrary{positioning, arrows.meta} 
\usepgfplotslibrary{fillbetween}

\newtheorem{theorem}{Theorem}
\newtheorem*{theorem*}{Theorem}
\Crefname{theorem}{Theorem}{Theorems}
\crefname{theorem}{theorem}{theorems}

\newtheorem{proposition}{Proposition}
\newtheorem*{proposition*}{Proposition}
\Crefname{proposition}{Proposition}{Propositions}
\crefname{proposition}{proposition}{propositions}

\newtheorem{corollary}{Corollary}
\newtheorem*{corollary*}{Corollary}
\Crefname{corollary}{Corollary}{Corollaries}
\crefname{corollary}{corollary}{corollaries}

\newtheorem*{claim*}{Claim}
\Crefname{claim}{Claim}{Claims}
\crefname{claim}{claim}{claims}

\newtheorem{lemma}{Lemma}
\newtheorem*{lemma*}{Lemma}
\Crefname{lemma}{Lemma}{Lemmas}
\crefname{lemma}{lemma}{lemmas}

\newtheorem*{conjecture*}{Conjecture}
\Crefname{conjecture}{Conjecture}{Conjectures}
\crefname{conjecture}{conjecture}{conjectures}

\theoremstyle{definition}

\newtheorem{definition}{Definition}
\newtheorem*{definition*}{Definition}
\Crefname{definition}{Definition}{Definitions}
\crefname{definition}{definition}{definitions}

\newtheorem{assumption}{Assumption}
\newtheorem*{assumption*}{Assumption}
\Crefname{assumption}{Assumption}{Assumptions}
\crefname{assumption}{assumption}{assumptions}

\newtheorem{remark}{Remark}
\newtheorem*{remark*}{Remark}
\Crefname{remark}{Remark}{Remarks}
\crefname{remark}{remark}{remarks}

\newtheorem{example}{Example}
\newtheorem*{example*}{Example}
\Crefname{example}{Example}{Examples}
\crefname{example}{example}{examples}

\newcommand{\E}{\mathbb{E}}
\newcommand{\R}{\mathds{R}}

\newcommand{\Var}{\textup{Var}}
\newcommand{\diag}{\textup{diag}}

\newcommand{\tr}{\operatorname{tr}}
\newcommand{\spann}{\operatorname{span}}

\newcommand{\HAC}{\text{HAC}}
\newcommand{\clu}{\text{clu}}

\newcommand{\Sym}{\mathrm{Sym}}
\newcommand{\range}{\mathrm{range}}

\newcommand{\Cov}{\mathrm{Cov}}

\allowdisplaybreaks

\title{Variance Estimation with Dependence and Heterogeneous Means}

\date{\today }
\author{Luther Yap\thanks{Email: \texttt{lyap@nus.edu.sg}. Department of Economics, National University of Singapore. The latest version is available at \url{https://lutheryap.github.io/files/twclus_het_means_wp.pdf}}}

\begin{document}


\maketitle

\begin{abstract}

This paper develops a framework for variance estimation under dependence and heterogeneous means. 
This paper shows that consistent estimation of the variance target is impossible in general, and  characterizes necessary and sufficient conditions for conservative variance estimation using dual cones. 
To choose among the valid estimators, this paper formulates three criteria --- minimal correction, pointwise level estimand, and pointwise MSE --- and shows how an eigenvalue truncation solution is optimal under all three criteria.
This characterization and solution allow us to assess if existing variance estimators are valid and optimal in their respective settings, and construct the first optimal variance estimator that is simultaneously robust to heterogeneous means and cross-cluster serial correlation.
\end{abstract}

\vskip 0.2in
\noindent\textbf{Keywords:} variance estimation, heterogeneous means,
two-way clustering, HAC, conic duality, design-based inference.

\noindent\textbf{JEL codes:} C12, C13, C23, C33.
\vskip 0.2in


\section{Introduction}

Suppose $Y_n \in \mathbb{R}^n$ is a vector of random variables with $E[Y_n] = \mu \in \mathcal{M}_n$ and $Var(Y_n) = \Sigma \in \mathcal{S}_n$. 
We are interested in a variance target $\theta_n = \tr(W_n \Sigma)$ for some known weight matrix $W_n$.
The object $\theta_n$ can be interpreted as the oracle variance for $\sum_i Y_{n,i}$, which is used for inference after applying a central limit theorem.
As an example, with independent observations, $W_n = I_n$, but $W_n$ also encodes dependent observations such as one/ two- way clustering and serial correlation.
To obtain the variance target, we rely on a variance estimator that takes a quadratic form $Y_n^\prime A_n Y_n$, for some $A_n$. 
Observe that  $E[Y_n' A_n Y_n] = \tr(A_n \Sigma) + \mu' A_n \mu$. 
When the mean is known or consistently estimable, the second term $\mu' A_n \mu$ is innocuous and can be removed by demeaning, and the first term can recover the target. 
The difficulty confronted in this paper is when means can be arbitrarily heterogeneous, so the means are no longer consistently estimable.
Considering heterogeneous means with a general dependence structure is the novel problem addressed in this paper.

Heterogeneous means arise in several settings. 
Under design-based uncertainty, where potential outcomes are fixed and cluster dependence reflects correlated assignment or sampling \citep{abadie2023should}, the score for each observation is heterogeneous and not consistently estimable by construction. 
In asset pricing, when we condition on the set of assets and time observed, any omitted priced factor leaves the regression score with a mean that varies across assets and time. 
In both cases the individual means are not consistently estimable even though the grand mean is, which is the regime in which a central limit theorem governs the sum while the summands' means remain unidentified and the variance of the sum becomes difficult to estimate.

This paper makes three contributions.

First, this paper establishes the limits of identification in the problem in \Cref{sec:theory}. 
With dependence and arbitrary mean heterogeneity, this paper shows that consistent estimation of the variance target $\theta_n = \tr(W_n\Sigma)$ is impossible in general (\Cref{thm:impossibility}), because the second moment alone is uninformative of the split between the covariance component and the mean component. 
The obstruction is one of identification, not of finite samples. 
Since consistency is unattainable, we can only pursue the second best goal, a conservative variance estimand, one that is guaranteed not to fall below the target, so that the associated statistical test remains valid (i.e., with size guarantee). 
Seemingly disparate estimators are quadratic forms governed by the same two objects: a linear subspace $\mathcal M_n$ of admissible mean vectors and a closed convex cone $\mathcal S_n$ of admissible covariance matrices. 
Necessary and sufficient conditions for conservativeness in terms of these spaces are derived (\Cref{thm:conservative}): $Y_n'A_nY_n$ is uniformly conservative for $\theta_n$ if and only if $A_n-W_n$ lies in the dual cone $\mathcal S_n^*$ and $A_n$ is positive semidefinite on $\mathcal M_n$. 

In light of how there are many variance estimators that are valid, the second contribution of the paper is to propose an optimal variance estimator in \Cref{sec:optimal-conservative}. 
Optimality is defined relative to criteria, so this paper proposes three natural criteria --- minimal correction (which respects the dependence structure), minimum level of the variance estimand (in the interest of having the smallest variance used in the t-statistic and hence the most powerful test), and mean-squared error (MSE) (which is a common benchmark that assesses the quality of an estimator). 
The solution proposed in this paper is to set $A_n$ to be an eigenvalue truncation (ET) of $W_n$.
Under some conditions, this solution is optimal under all three criteria. (\Cref{thm:eig-truncation})
If these conditions do not hold, then these criteria can nonetheless be transformed into constrained conic optimization problems, so an optimal solution can be obtained numerically. (\Cref{prop:conicity})

Given the framework and conditions for validity and optimality, the third contribution of the paper is to assess if existing variance estimators in economically relevant settings with dependence are valid and optimal under heterogeneity, and if they are not, obtain the optimal conservative variance estimator.
\Cref{sec:examples-and-optimal} considers several existing examples. 
The plug-in one-way cluster robust variance estimator is valid under mean heterogeneity (a result known from \citet{abadie2023should}), but now we also certify that it is optimal. 
In two-way clustering, the plug-in variance estimator from \citet{cameron2011robust} (CGM) is not valid under mean heterogeneity, while the CGM2 modification from \citet{davezies2021empirical} is valid --- these results are known from \citet{xu2024clustering}, and are corollaries of our setting. 
While CGM2 is valid, the new result is that CGM2 is not optimal as it does not match the ET solution --- the difference is that CGM2 adds observation-specific variances to CGM, while ET adds the observation-specific within-transformed variances, which results in a smaller variance estimand.
\Cref{sec:application_panel_dep} extends the two-way cluster dependence to allow for cross-cluster weak dependence, a dependence structure considered in \citet{chiang2024standard} (CHS).
The plug-in variance estimator from CHS is invalid under mean heterogeneity, so a novel optimal conservative variance estimator is proposed as a corollary of this paper's theory. 
This section also develops the $\psi$-dependence framework, the central limit theorem, and consistency.
Simulations and an empirical application close the section.

\textbf{Related literature.}
This paper relates to three strands of literature. 

The first is variance estimation in dependent arrays. 
\citet{cameron2011robust} and \citet{chiang2024standard} provide variance estimators under the two-way clustering and two-way clustering with cross cluster correlations, but both assume mean homogeneity. 
This paper extends their variance estimators to be robust to mean heterogeneity. 
Beyond variance estimation in the setting with two-way clustering with cross cluster correlations, this paper also provides a central limit theorem that is weaker than the one used in CHS, as this paper relies on the $\psi$-dependence limit theory from \citet{kojevnikov2021limit} (KMS) that does not rely on an Aldous-Hoover representation, thereby accommodating data-generating processes outside the representation, similar to \citet{yap2025asymptotic}.

The second is distributional and mean heterogeneity. 
In time series, \citet{chan2022optimal} and \citet{casini2023theory} permit time-varying means and second moments but require enough regularity to estimate the mean function consistently, whereas we impose no regularity on the mean sequence and instead modify the estimand to guarantee conservativeness. 
The interaction between dependence and mean heterogeneity has been rather disparate in design-based settings: settings with independence \citep{abadie2020sampling} or one-way clustering \citep{abadie2023should} have conservative plug-in variances, while two-way clustering does not \citep{xu2024clustering}. 
This paper's framework unifies these dependence structures, and the conservativeness criterion gives a general and practical way to check if a dependence-robust plug-in is conservative. 
Further, this paper extends conservative variance estimation under mean heterogeneity beyond these settings to the CHS dependence structure, and additionally derives optimal variances.  

Third, a long tradition in time-series and clustered variance estimation is concerned that a quadratic form $Y_n'W_nY_n$ can return a negative number in finite samples when $W_n$ is not PSD. 
\citet{newey1987asimple} address this ex ante by adopting the Bartlett kernel; \citet{andrews1991heteroskedasticity} characterizes the class of kernels delivering positive-semidefinite estimates and identifies the quadratic-spectral kernel as minimizing asymptotic mean-squared error within it. 
A complementary, ex post tradition takes a possibly non-PSD estimate and projects it onto the PSD cone by eigenvalue correction, the operation which appears in clustered inference as the eigenvalue correction of \citet{chiang2024standard}. 
Our framework differs in two respects. 
First, the non-PSD-ness we confront is not a sampling-noise artifact of a particular realization but a structural feature of the population weight matrix that the dependence design dictates so it does not vanish as the sample grows and cannot be diagnosed from any single estimate. 
Second, and consequently, our optimality is stated for the estimand, not the estimate: where the classical corrections enforce a threshold on a realization or minimize the distance of an estimate to the positive-semidefinite cone, ours minimizes the mean-induced bias of the targeted variance subject to conservativeness.

\section{Framework and Quadratic Representation}
\label{sec:framework}

\subsection{Statistical model and parameter}

For each $n\geq 1$, let $Y_n=(Y_{n,1},\dots,Y_{n,n})'\in\mathbb{R}^n$ be
observed, distributed according to $P_n$. Define
\[
\mu(P_n):=E_{P_n}[Y_n],\qquad
\Sigma(P_n):=\Var_{P_n}(Y_n).
\]
This paper will drop the indexing on $P_n$ when it is without ambiguity. 

The researcher commits to two structural objects:
\begin{itemize}
  \item a linear subspace $\mathcal{M}_n\subseteq\mathbb{R}^n$ of admissible
        mean vectors;
  \item a closed convex cone $\mathcal{S}_n\subseteq\mathrm{Sym}_n^+$ of
        admissible covariance matrices, where $\mathrm{Sym}_n^+$ denotes the
        positive semidefinite cone in the space of symmetric $n\times n$
        matrices $\mathrm{Sym}_n$.
\end{itemize}
The model is
\[
\mathcal{P}_n(\mathcal{M}_n,\mathcal{S}_n)
:=
\bigl\{P_n: E_{P_n}\|Y_n\|^2<\infty,\
\mu(P_n)\in\mathcal{M}_n,\ \Sigma(P_n)\in\mathcal{S}_n\bigr\}.
\]
Let $W_n\in\mathrm{Sym}_n$ be a known symmetric weight matrix. The parameter of
interest is
\[
\theta_n(P_n):=\tr\!\bigl(W_n\Sigma(P_n)\bigr),
\qquad P_n\in\mathcal{P}_n(\mathcal{M}_n,\mathcal{S}_n).
\]

An important context where $\theta_n$ is interesting is whenever some central limit theorem is applied while allowing for dependence across observations and mean heterogeneity. The central limit theorem yields $\left( \sum_{i=1}^n Y_{n,i} - E[Y_{n,i}] \right) / \sqrt{\Var \left(  \sum_{i=1}^n Y_{n,i} \right)} \xrightarrow{d} N(0,1)$. While $\frac{1}{n} \sum_{i=1}^n E[Y_{n,i}]$ is consistently estimable from the theorem, the individual means may not be when there is sufficient heterogeneity, which makes the estimation of $\Var \left(  \sum_{i=1}^n Y_{n,i} \right)$ tricky when conducting statistical inference. 
When $W_n$ corresponds to an adjacency matrix and encodes several dependence structures, this parameter of interest $\theta_n$ converges to $\Var \left(  \sum_{i=1}^n Y_{n,i} \right)$ under some conditions, even when dependence across $Y_{n,i}$ is permitted. 
For instance, $W_n= I_n$ corresponds to independent observations. In time series settings with a kernel adjustment for the variance, $W_n$ would reflect these kernel adjustments.

The pair $(\mathcal{M}_n,\mathcal{S}_n)$ encodes the researcher's structural commitments. 
Larger $\mathcal{M}_n$ and $\mathcal{S}_n$ correspond to weaker assumptions and stronger demands for robustness. 
Two extreme cases are instructive. 
At one end, $\mathcal{S}_n=\mathrm{Sym}_n^+$ imposes no structure on the covariance and yields the maximally dependence-robust posture: the researcher seeks conservativeness against any covariance compatible with $\Sigma\succeq 0$. 
At the other end, $\mathcal{S}_n=\{\sigma^2I_n:\sigma^2\geq0\}$ encodes iid sampling and admits the classical demeaned sample variance. 
Intermediate cones encode block independence (clusters), bandedness ($m$-dependence), or diagonal structure (without correlation), each yielding its own optimal estimator.

\textbf{Examples of $\mathcal{S}_n$.}
\begin{itemize}
\item $\mathcal{S}_n^{\mathrm{iid}}:=\{\sigma^2I_n:\sigma^2\geq0\}$:
      iid sampling.
\item $\mathcal{S}_n^{\mathrm{diag}}:=\{\Sigma\succeq0:\Sigma\text{ diagonal}\}$:
      heterogeneous variances but uncorrelated.
\item $\mathcal{S}_n^{\mathrm{block}}:=\{\Sigma\succeq0:\Sigma_{ij}=0\text{ if }g(i)\neq g(j)\}$:
      cluster independence.
\item $\mathcal{S}_n^{\mathrm{band}(b)}:=\{\Sigma\succeq0:\Sigma_{ij}=0\text{ if }|i-j|>b\}$:
      $b$-dependence.
\end{itemize}

Similarly, $\mathcal{M}_n$ can encode the known structure of the problem. Fully unrestricted means corresponds to $\mathcal{M}_n = \mathbb{R}^n$, while constant means corresponds to $\mathcal{M}_n = \mathrm{span} \{1_n\}$, as two extreme cases. If we have heterogeneous means but are able to consistently estimate the mean of all elements in $\mu$, then after demeaning by the grand mean, we may restrict our attention to $\mathcal{M}_n =  \{ \mu: 1_n^\prime \mu =0 \} = 1_n^\perp$.

\subsection{Quadratic-form estimators and dual cones}

A quadratic-form estimator takes the form $\widehat{V}_{A,n}(Y_n):=Y_n'A_nY_n$ for some $A_n\in\mathrm{Sym}_n$, with expectation
\begin{equation}
V_{A,n}(P_n)
:=
E_{P_n}[Y_n'A_nY_n]
=
\tr\!\bigl(A_n\Sigma(P_n)\bigr)+\mu(P_n)'A_n\mu(P_n).
\label{eq:plug-decomp}
\end{equation}
$V_{A,n}(P_n)$ is an estimand, which is not necessarily the target $\theta_n(P_n)$. 

The canonical plug-in estimator for the sum of $Y_n$ (i.e., $1_n^\prime Y_n$) corresponds to $A_n=W_n$, with bias
\[
V_{W,n}(P_n)-\theta_n(P_n)=\mu(P_n)'W_n\mu(P_n).
\]
This bias can be negative when $W_n$ has negative directions inside $\mathcal{M}_n$, arising in anticonservativeness in inference.

To analyze conservativeness across the full pair $(\mathcal{M}_n,\mathcal{S}_n)$, we use cone duality. 
The \emph{dual cone} of $\mathcal{S}_n$ is
\begin{equation}
\mathcal{S}_n^*
:=
\bigl\{B\in\mathrm{Sym}_n:\tr(B\Sigma)\geq0\text{ for all }\Sigma\in\mathcal{S}_n\bigr\}.
\label{eq:dual-cone}
\end{equation}
Since $\mathcal{S}_n\subseteq\mathrm{Sym}_n^+$ and $\mathrm{Sym}_n^+$ is self-dual, $\mathcal{S}_n^*\supseteq\mathrm{Sym}_n^+$, with equality when $\mathcal{S}_n=\mathrm{Sym}_n^+$. 
For strict subcones, $\mathcal{S}_n^*$ contains symmetric matrices that are not positive semidefinite, and this larger admissible region is precisely what licenses non-PSD corrections such as demeaning.

Examples of dual cones, all easily verified from \eqref{eq:dual-cone}:
\[
(\mathcal{S}_n^{\mathrm{iid}})^*=\{A:\tr(A)\geq0\},\quad
(\mathcal{S}_n^{\mathrm{diag}})^*=\{A:\diag(A)\geq0\}.
\]
For $\mathcal{S}_n^{\mathrm{block}}$, $A\in\mathcal{S}_n^{\mathrm{block},*}$ if and only if each within-cluster diagonal block of $A$ is positive semidefinite, with arbitrary off-block entries.
The covariance cones sit in a partial order: $\mathcal{S}_n^{\mathrm{iid}}\subset\mathcal{S}_n^{\mathrm{diag}}\subset \mathcal{S}_n^{\mathrm{block}}\subset \mathrm{Sym}_n^+$.
Dual cones reverse the order.

\section{Impossibility and Conservativeness}
\label{sec:theory}

\subsection{When consistent estimation is impossible}

The same second moment matrix may admit different decompositions into a mean component and a covariance component:
\[
E[Y_nY_n']
=
\Sigma(P_n)+\mu(P_n)\mu(P_n)'.
\]
The target \(\theta_n(P_n)=\tr(W_n\Sigma(P_n))\), however, depends only on the covariance component.
Hence, if the model permits a rank-one term \(\mu\mu'\) to be shifted between covariance and mean while preserving admissibility, then the target changes but the expectation of every quadratic estimator does not. 
This yields an impossibility result for uniform consistency over the quadratic class.

\begin{theorem}[Impossibility of consistency]
\label{thm:impossibility}
Fix $n$ and consider the model $\mathcal{P}_n(\mathcal{M}_n, \mathcal{S}_n)$.
Suppose there exist $\Sigma_0 \in \mathcal{S}_n$ and $v_n \in \mathcal{M}_n$ with $\Sigma_0 + v_n v_n' \in \mathcal{S}_n$. 
Then, for any $A_n \in \mathrm{Sym}_n$,
\[
\sup_{P_n \in \mathcal{P}_n(\mathcal{M}_n, \mathcal{S}_n)} \big|\mathbb{E}_{P_n}[\widehat{V}_{A,n}] - \theta_n(P_n)\big| \geq \frac{1}{2} |v_n' W_n v_n|.
\]
\end{theorem}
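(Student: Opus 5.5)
The plan is a two-point (Le Cam–style) argument that exhibits two distributions in $\mathcal{P}_n(\mathcal{M}_n,\mathcal{S}_n)$ that every quadratic estimator cannot distinguish in expectation but whose targets differ by exactly $v_n'W_nv_n$.

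\textbf{Step 1: construct the pair.} Take $P_n^{(1)}$ with mean $\mu_1 = v_n$ and covariance $\Sigma_1 = \Sigma_0$. Take $P_n^{(2)}$ with mean $\mu_2 = 0$ and covariance $\Sigma_2 = \Sigma_0 + v_nv_n'$. Both memberships must be checked:
\begin{itemize}
\item Since $\mathcal{M}_n$ is a linear subspace, $0\in\mathcal{M}_n$, and $v_n\in\mathcal{M}_n$ by hypothesis.
\item $\Sigma_0\in\mathcal{S}_n$ and $\Sigma_0+v_nv_n'\in\mathcal{S}_n$, both by hypothesis.
\end{itemize}
Concrete laws realizing these moments exist: for instance $Y_n\sim N(\mu_k,\Sigma_k)$, which has finite second moments. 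Gaussianity is only a convenience, since any law with these first two moments works. This step is the only place where the hypotheses are used. I expect it to be the main obstacle only in the mild sense of confirming that $\mathcal{P}_n$ contains a law for every admissible mean–covariance pair.

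\textbf{Step 2: equal expectations.} By \eqref{eq:plug-decomp},
\[
E_{P^{(1)}}[\widehat V_{A,n}] = \tr(A_n\Sigma_0)+v_n'A_nv_n = \tr\bigl(A_n(\Sigma_0+v_nv_n')\bigr) = E_{P^{(2)}}[\widehat V_{A,n}],
\]
using $\tr(A_nv_nv_n') = v_n'A_nv_n$. Call this common value $c$. The second-moment matrix $\Sigma+\mu\mu'$ is identical under the two laws, and the expectation of any quadratic form depends only on that matrix.

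\textbf{Step 3: distinct targets.} The targets satisfy
\[
\theta_n(P^{(2)})-\theta_n(P^{(1)}) = \tr(W_nv_nv_n') = v_n'W_nv_n .
\]

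\textbf{Step 4: conclude.} By the triangle inequality,
\[
|v_n'W_nv_n| \le |c-\theta_n(P^{(1)})| + |c-\theta_n(P^{(2)})| \le 2\sup_{P_n\in\mathcal{P}_n(\mathcal{M}_n,\mathcal{S}_n)}\bigl|E_{P_n}[\widehat V_{A,n}]-\theta_n(P_n)\bigr|,
\]
which gives the stated bound with the factor $\tfrac12$. The argument holds for arbitrary $A_n\in\mathrm{Sym}_n$, so no quadratic estimator escapes the bound.
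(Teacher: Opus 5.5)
Your proposal is correct and is essentially the paper's proof: the same pair of Gaussian laws, $N(v_n,\Sigma_0)$ and $N(0,\Sigma_0+v_nv_n')$, the same observation that they share the second-moment matrix and hence every quadratic-form expectation, and the same triangle-inequality step giving the factor $\tfrac12$ from the target gap $v_n'W_nv_n$.
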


To gain some intuition for the result, suppose $\mathcal{S}_n$ is the set of diagonal matrices, and $v_n = (1,0,\cdots, 0)^\prime \in \mathcal{M}_n$. 
Then, the second moment alone cannot distinguish $\mu_1 =1$ and $\Sigma_{11} =1$ from $\mu_1 =0$ and $\Sigma_{11} =2$. 

As a corollary, write $\lambda_n \asymp \theta_n$ for the scale of the target. 
If $v_n'W_nv_n \ne o(\lambda_n)$, then the relative worst-case bias of every quadratic estimator is bounded below,
\[
\frac{1}{\lambda_n}\sup_{P_n\in\mathcal P_n(\mathcal M_n,\mathcal S_n)}
\big|\E_{P_n}[\widehat V_{A,n}]-\theta_n(P_n)\big|\ \ge\ \frac{|v_n'W_nv_n|}{2\lambda_n}
\ \not\to\ 0 .
\]
Consequently no quadratic estimator is consistent in that $\widehat V_{A,n}/\theta_n\xrightarrow{p}1$, uniformly over the model: at the two Gaussian points $P_n,Q_n$ of \Cref{thm:impossibility} the normalized forms $\widehat V_{A,n}/\theta_n$ are $L^2$-bounded along the sequence, hence uniformly integrable, so consistency would force relative asymptotic unbiasedness $\E_{P_n}[\widehat V_{A,n}]/\theta_n\to1$ at each point, contradicting the display.

The proof rests on a two-point moment-matching argument: two distributions $P_n, Q_n \in \mathcal{P}_n$ can have the same expectation of every quadratic functional yet different values of $\theta_n$, rendering them indistinguishable in expectation to the quadratic class. 
The factor reflects that any estimator's expectation is a single number that cannot simultaneously match two admissible distributions with the same quadratic-form expectation but targets differing by $v_n^\prime W_n v_n$, so its bias against the worse of the two is at least half that gap.

\begin{example}[Demeaned sample variance]
\label{ex:demean}
The following example does not satisfy the conditions of the impossibility result. 
Let $W_n=I_n$, $\mathcal{M}_n=\spann(\mathbf{1}_n)$, and $\mathcal{S}_n=\mathcal{S}_n^{\mathrm{iid}}$, so $\theta_n=n\sigma^2$. 
Then $\mathcal{S}_n^*=\{A:\tr(A)\geq0\}$, and
\[
A_n=\frac{n}{n-1}M_{\mathbf{1}_n},\qquad
M_{\mathbf{1}_n}:=I_n-\frac{1}{n}\mathbf{1}_n\mathbf{1}_n'.
\]
The estimator $Y_n'A_n Y_n=\frac{n}{n-1}\sum_i(Y_{n,i}-\bar Y_n)^2$ is exactly unbiased for $n\sigma^2$ on the model.
 
The impossibility in \Cref{thm:impossibility} requires the cone $\mathcal{S}_n$ to admit a perturbation along the mean direction: there must exist $\Sigma_0 \in \mathcal{S}_n$ and $v_n \in \mathcal{M}_n$ with $\Sigma_0 + v_n v_n' \in \mathcal{S}_n$ and $v_n' W_n v_n \neq 0$. 
In the iid case this feasibility condition fails.
Any $v_n \in \mathcal{M}_n = \mathrm{span}\{1_n\}$ has the form $v_n = \alpha 1_n$, so $v_n v_n' = \alpha^2 1_n 1_n'$, a rank-one matrix with nonzero off-diagonal entries. 
For any $\Sigma_0 = \sigma_0^2 I_n \in \mathcal{S}_n^{\mathrm{iid}}$, the perturbed covariance $\Sigma_0 + \alpha^2 1_n 1_n'$ is not a scalar multiple of $I_n$ and therefore lies outside $\mathcal{S}_n^{\mathrm{iid}}$ whenever $\alpha \neq 0$.
The cone is too narrow for the adversary to hide mean variation inside a covariance perturbation, and consistent estimation of $\theta_n = n\sigma^2$ is restored.
\end{example}

\subsection{Characterization of conservative estimands}

In light of how consistent variance estimation may be impossible, we may use conservative variance estimators to avoid being oversized in inference. 
The following result provides necessary and sufficient conditions for a variance estimand to be conservative using the dual cone representation.

\begin{theorem}[Conservativeness via dual cones]
\label{thm:conservative}
The estimand $V_{A,n}(P_n)=E_{P_n}[Y_n'A_nY_n]$ satisfies $V_{A,n}(P_n)\geq
\theta_n(P_n)$ for every $P_n\in\mathcal{P}_n(\mathcal{M}_n,\mathcal{S}_n)$
if and only if
\begin{align}
\text{(C1)}\quad & A_n-W_n\in\mathcal{S}_n^*, \label{eq:C1}\\
\text{(C2)}\quad & \mu'A_n\mu\geq0\text{ for all }\mu\in\mathcal{M}_n.
\label{eq:C2}
\end{align}
\end{theorem}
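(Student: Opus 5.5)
The plan is to work directly from the decomposition \eqref{eq:plug-decomp}. It gives, for every $P_n$ in the model,
\[
V_{A,n}(P_n)-\theta_n(P_n)=\tr\!\bigl((A_n-W_n)\Sigma(P_n)\bigr)+\mu(P_n)'A_n\mu(P_n).
\]
Conservativeness is therefore equivalent to the nonnegativity of this sum over all admissible pairs $(\mu,\Sigma)\in\mathcal{M}_n\times\mathcal{S}_n$ that are realized by some $P_n$. The proof then splits into sufficiency and necessity.

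Sufficiency is immediate. Under (C1), $\Sigma(P_n)\in\mathcal{S}_n$ gives $\tr((A_n-W_n)\Sigma(P_n))\ge 0$ by the definition \eqref{eq:dual-cone} of the dual cone. Under (C2), $\mu(P_n)\in\mathcal{M}_n$ gives $\mu(P_n)'A_n\mu(P_n)\ge0$. Adding the two yields $V_{A,n}\ge\theta_n$.

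Necessity needs one realizability fact: for any $\mu\in\mathcal{M}_n$ and any $\Sigma\in\mathcal{S}_n$, the Gaussian law $N(\mu,\Sigma)$ belongs to $\mathcal{P}_n(\mathcal{M}_n,\mathcal{S}_n)$. This holds because $\Sigma\succeq0$, so the (possibly degenerate) Gaussian exists and has finite second moments. With this, each condition follows by isolating one term.
\begin{itemize}
\item For (C1), take $\mu=0$, which lies in $\mathcal{M}_n$ since it is a linear subspace, and an arbitrary $\Sigma\in\mathcal{S}_n$. Conservativeness then gives $\tr((A_n-W_n)\Sigma)\ge0$ for every such $\Sigma$, which is exactly $A_n-W_n\in\mathcal{S}_n^*$.
\item For (C2), take $\Sigma=0$, which lies in $\mathcal{S}_n$ since it is a closed cone, and an arbitrary $\mu\in\mathcal{M}_n$. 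The point mass at $\mu$ is admissible, and conservativeness gives $\mu'A_n\mu\ge0$.
\end{itemize}
If one prefers to avoid degenerate laws, use $\Sigma=\varepsilon\Sigma_1$ for some fixed $\Sigma_1\in\mathcal{S}_n$ and let $\varepsilon\downarrow0$. Alternatively, scale $\mu$ to $t\mu$ and use the homogeneity in $t^2$ against the fixed covariance term, then let $t\to\infty$. Either way the same inequality follows.

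The main point to check is that the model places no joint restriction linking $\mu$ and $\Sigma$, so that the two terms can be switched off independently. This is clear from the definition of $\mathcal{P}_n$ as a product of constraints together with $0\in\mathcal{M}_n$ and $0\in\mathcal{S}_n$. The only other subtlety is the admissibility of degenerate distributions, which the scaling argument above handles if needed.
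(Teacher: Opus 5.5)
Your proposal is correct and follows essentially the same route as the paper: sufficiency read off the decomposition \eqref{eq:plug-decomp}, and necessity by setting $\mu=0$ and $\Sigma=0$ separately, using $0\in\mathcal{M}_n$ and $0\in\mathcal{S}_n$. Your explicit check that the (possibly degenerate) Gaussian $N(\mu,\Sigma)$ lies in $\mathcal{P}_n(\mathcal{M}_n,\mathcal{S}_n)$ supplies a realizability step the paper leaves implicit.
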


When $\mathcal{S}_n=\mathrm{Sym}_n^+$, self-duality gives $\mathcal{S}_n^*=\mathrm{Sym}_n^+$ and (C1) reduces to $A_n-W_n\succeq0$ on $\mathbb{R}^n$, since $\tr(B\Sigma)\geq0$ for every PSD $\Sigma$ if and only if $B$ is itself PSD. 
For strict subcones, (C1) is weaker, admitting estimators that exploit covariance structure to sharpen the correction.
A corollary of this lemma is that the plug-in variance estimator with $W_n$ is conservative whenever $W_n$ is positive semidefinite.

However, if $A_n = W_n$ is not positive semidefinite, (C1) can still be satisfied, and (C2) is still satisfied if $\mathcal{M}_n$ is sufficiently restricted (e.g., $\mathcal{M}_n = \{ 0 \}$). 
Nonetheless, if $\mathcal{M}_n = \mathbb{R}^n$, then (C2) is equivalent to positive semidefiniteness of $A_n= W_n$, so $W_n$ not being positive semidefinite implies that $V_{A,n}(P_n) < \theta_n (P_n)$ for some $\mu \in \mathcal{M}_n$. This result will be evident in CGM and $m$-dependence.

The characterization of conservativeness appears similar to the classical positive-semidefiniteness corrections in the variance-estimation literature. 
\Cref{prop:estimand-vs-estimate} states the connection between the estimator and the estimand: if $W_n$ is positive semidefinite, then both the estimator is always positive semidefinite, and the estimand will be valid.

\begin{proposition}[Correction of Estimand vs Estimate]
\label{prop:estimand-vs-estimate}
Consider the plug-in $\widehat V_{W,n}=Y_n'W_nY_n$ with target
$\theta_n(P_n)=\tr(W_n\Sigma(P_n))$.
The following are equivalent:
  \begin{enumerate}
  \item[(i)] $\widehat V_{W,n}\ge0$ for every realization $Y_n\in\mathbb R^n$
        (realized non-negativity);
  \item[(ii)] $E_{P_n} \left[ \widehat V_{W,n} \right] \geq \theta_n$ on
        $\mathcal P_n(\mathbb R^n,\mathcal S_n)$ for every closed convex cone
        $\mathcal S_n\subseteq\mathrm{Sym}_n^+$;
  \item[(iii)] $W_n\succeq0$.
  \end{enumerate}
\end{proposition}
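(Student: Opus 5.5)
We will prove the cycle of implications (iii)$\Rightarrow$(i), (i)$\Rightarrow$(iii), (iii)$\Rightarrow$(ii) and (ii)$\Rightarrow$(iii). This establishes all three equivalences, with \Cref{thm:conservative} carrying the estimand part.

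\textbf{Equivalence of (i) and (iii).} This is the definition of positive semidefiniteness. The condition $\widehat V_{W,n}=Y_n'W_nY_n\ge0$ for every realization $Y_n\in\mathbb R^n$ says exactly that $x'W_nx\ge0$ for all $x\in\mathbb R^n$, i.e.\ $W_n\succeq0$. No distributional input is needed here.

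\textbf{(iii) implies (ii).} Fix any closed convex cone $\mathcal S_n\subseteq\mathrm{Sym}_n^+$ and take $\mathcal M_n=\mathbb R^n$. We apply \Cref{thm:conservative} with $A_n=W_n$.
\begin{itemize}
\item Condition (C1) requires $0\in\mathcal S_n^*$, which holds trivially.
\item Condition (C2) requires $\mu'W_n\mu\ge0$ for all $\mu\in\mathbb R^n$, which is (iii).
\end{itemize}
Alternatively, one can argue directly from \eqref{eq:plug-decomp}: the bias equals $\mu'W_n\mu\ge0$.

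\textbf{(ii) implies (iii).} Statement (ii) is required for every cone, so it suffices to use one convenient cone, for instance $\mathcal S_n=\{0\}$ or $\mathcal S_n=\mathrm{Sym}_n^+$. Under (ii), the necessity direction of \Cref{thm:conservative} with $\mathcal M_n=\mathbb R^n$ gives (C2): $\mu'W_n\mu\ge0$ for every $\mu\in\mathbb R^n$, which is $W_n\succeq0$.

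To avoid leaning on the necessity direction, we can also argue directly. Suppose $v'W_nv<0$ for some $v$. Take $P_n$ to be the point mass at $v$, so that $\Sigma(P_n)=0\in\mathcal S_n$; this lies in any closed cone and has finite second moment. Then $E_{P_n}[\widehat V_{W,n}]-\theta_n=v'W_nv<0$, contradicting (ii).

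The only point needing care is the quantifier in (ii). It must be read as holding for all cones, or at least for one cone, and with $\mathcal M_n=\mathbb R^n$. Since $0$ belongs to every closed convex cone, the degenerate-distribution construction works for any cone. Hence (ii) holds for all cones if and only if it holds for any single cone, and the equivalence is robust to this reading. I do not expect a genuine obstacle beyond stating this clearly.
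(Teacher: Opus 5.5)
Your proof is correct and follows the paper's route: (i)$\Leftrightarrow$(iii) by the definition of positive semidefiniteness, and (ii)$\Leftrightarrow$(iii) via \Cref{thm:conservative}, where (C1) holds trivially because $A_n-W_n=0$, leaving (C2) at $\mathcal M_n=\mathbb R^n$. Your point-mass argument simply spells out the $\Sigma=0$ case behind that theorem's necessity direction. Your remark that (ii) does not depend on which cone is used matches the paper's observation that the condition holds independently of $\mathcal S_n$.
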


Beyond this similarity, the purpose and objective of the conservativeness are different. 
Corrections of the variance estimator are motivated by realized non-negativity: a quadratic form $Y_n'W_nY_n$ with $W_n\not\succeq0$ can return a negative number in a given sample, and the standard remedy enforces $\widehat V\ge0$ directly.
A realized-non-negativity fix acts on the \emph{estimate}: it adjusts the random variable $\widehat V_{W,n}$ so that no draw falls below zero, but it does not change which estimand $V_{W,n}(P_n)=\tr(W_n\Sigma)+\mu'W_n\mu$ is being targeted. 
When the plug-in is anticonservative, the problem is that $V_{W,n}(P_n)<\theta_n(P_n)$, i.e.\ $\mu'W_n\mu<0$, even though $V_{W,n}(P_n)$ itself remains positive because the covariance term $\tr(W_n\Sigma)=\theta_n$ dominates; a floor at zero leaves the gap $\mu'W_n\mu$ untouched. 
The correction developed in this paper acts instead on the \emph{estimand}: 
replacing $W_n$ with an $A_n$ that satisfies the conditions of \Cref{thm:conservative} changes the targeted quantity.

\section{Optimal Conservative Variance Estimation}
\label{sec:optimal-conservative}

This section develops criteria that define an optimal variance estimator. 
Section~\ref{sec:opt-problems} introduces these optimization criteria while  Section~\ref{sec:eigtruncation} shows that an eigenvalue truncation (ET) solution $A_n^\star$ solves all problems under some conditions. 
When the conditions for ET optimality fail, these criteria can nonetheless be formulated as conic programs that can be solved numerically --- the topic of \Cref{sec:pi-programs}.

\subsection{Three optimization problems}
\label{sec:opt-problems}
By \Cref{thm:conservative}, every uniformly conservative quadratic estimator on $\mathcal{P}_n(\mathcal{M}_n, \mathcal{S}_n)$ corresponds to a matrix $A_n \in \mathrm{Sym}_n$ satisfying (C1)--(C2). 
Among such estimators we seek one that is ``best". 
Three natural criteria arise: the first is a minimal correction criterion that ensures $A_n$ departs from $W_n$ as little as possible; the second is a level criterion that minimizes the variance estimand subject to validity, in the interest of the most powerful tests; and the third is the MSE, the common benchmark for the quality of an estimator.

The target $\theta_n = \tr(W_n\Sigma)$ is a known function of $W_n$, and the plug-in $A_n = W_n$ estimates it with mean bias $\mu'W_n\mu$.
The plug-in is anticonservative precisely when ${W}_n \not\succeq 0$ (\Cref{thm:conservative}); when it is conservative, no correction is warranted. 
Where a correction is needed, the natural object is the one that repairs anticonservativeness while disturbing the plug-in as little as possible. 
We measure the disturbance by the worst-case upward adjustment of $A_n$ relative to the plug-in $W_n$ over admissible mean directions,
\begin{equation}
\mathcal{R}_n(A_n)
:= \sup_{\mu \in \mathcal{M}_n,\ \|\mu\| = 1}\ \mu'(A_n - W_n)\mu.
\label{eq:risk-bias}
\end{equation}

\textbf{Criterion 1: Minimal Correction.} Let $P_{\mathcal{M}_n}$ denote the orthogonal projector onto $\mathcal{M}_n$. The minimal-correction problem is
\begin{equation}
\min_{A_n \in \mathrm{Sym}_n}\ \mathcal{R}_n(A_n)
\qquad \text{subject to}\quad
A_n - W_n \in \mathcal{S}_n^*,\ \ \
P_{\mathcal{M}_n} A_n P_{\mathcal{M}_n} \succeq 0,
\label{eq:opt-min-corr}
\end{equation}
and it depends on $(\mathcal{M}_n, W_n)$ alone, so its solution applies uniformly across the model class. 
In the optimization problem \eqref{eq:opt-min-corr} (C2) is imposed in the equivalent single-matrix form $P_{\mathcal{M}_n}A_nP_{\mathcal{M}_n}\succeq0$. 
The equivalence is immediate: for every $\mu\in\mathcal{M}_n$ one has $P_{\mathcal{M}_n}\mu=\mu$, so $\mu'A_n\mu=\mu'P_{\mathcal{M}_n}A_nP_{\mathcal{M}_n}\mu$, and ranging over $\mu\in\mathcal{M}_n$ shows (C2) holds if and only if the restriction of $A_n$ to $\mathcal{M}_n$ is positive semidefinite. 
The projected form is the usable one for computation: it is a single linear matrix inequality rather than an infinite family of scalar constraints.

\textbf{Criterion 2: Level.} The level of the estimand targeted by $A_n$ is its expectation
\[
V_{A,n}(\mu,\Sigma)=\E_{P_n}[Y_n'A_nY_n]
=\tr(A_n\Sigma)+\mu'A_n\mu
=\tr\!\big(A_n(\Sigma+\mu\mu')\big),
\]
and among conservative corrections one would like the one whose estimand is smallest, so as not to over-inflate the variance and lose power. 
The pointwise level problem at a given admissible $(\mu,\Sigma)$ is
\begin{equation}
\min_{A_n\in\mathrm{Sym}_n}\ \tr\!\big(A_n(\Sigma+\mu\mu')\big)
\qquad\text{s.t.}\qquad
A_n-W_n\in\mathcal S_n^*,\quad
P_{\mathcal M_n}A_nP_{\mathcal M_n}\succeq0.
\label{eq:opt-level-pointwise}
\end{equation}

\textbf{Criterion 3: MSE.} With $\mathrm{MSE}_n(A_n;\mu,\Sigma):=\E_{P_n}\!\big[(Y_n'A_nY_n-\theta_n)^2\big]$, the pointwise MSE problem at a given admissible $(\mu,\Sigma)$ is
\begin{equation}
\min_{A_n\in\mathrm{Sym}_n}\ \mathrm{MSE}_n(A_n;\mu,\Sigma)
\qquad\text{s.t.}\qquad
A_n-W_n\in\mathcal S_n^*,\quad
P_{\mathcal M_n}A_nP_{\mathcal M_n}\succeq0.
\label{eq:opt-mse-pointwise}
\end{equation}

Both \eqref{eq:opt-level-pointwise} and \eqref{eq:opt-mse-pointwise} reference the unknown $(\mu,\Sigma)$, and---unlike minimal correction---neither admits a worst-case formulation over the model: along $\Sigma=c\Sigma_0$ the MSE variance term grows like $c^2$ and the level term $\tr(A_n\Sigma)$ grows like $c$, so the supremum over the scale-free cone $\mathcal S_n$ diverges for every feasible $A_n$ and cannot rank candidates.
\Cref{sec:pi-programs} resolves this by replacing the worst case with an average case.

\begin{remark}[The bias and level criteria coincide]
\label{rem:bias-equals-level}
A natural objective to consider is to minimize the bias $|V_{A,n}(\mu, \Sigma) - \theta_n|$, but this problem coincides with the level problem. 
The bias of $A_n$ decomposes as $V_{A,n}(\mu,\Sigma)-\theta_n =\tr\!\big(A_n(\Sigma+\mu\mu')\big)-\tr(W_n\Sigma)$, and $\tr(W_n\Sigma)$ does not depend on $A_n$. 
Hence, minimizing the bias and minimizing the level are therefore the \emph{same} optimization problem.
This result is also intuitive: validity implies $V_{A,n}(\mu,\Sigma) \geq \theta_n$, so for a given $(\mu,\Sigma)$, the bias is simply $V_{A,n}(\mu, \Sigma) - \theta_n$, and minimizing that object is the same as minimizing  $V_{A,n}(\mu,\Sigma)$, subject to validity.
\end{remark}

\begin{remark}[A degenerate minimizer]
\label{rem:degenerate-minimizer}
Since \eqref{eq:opt-min-corr} reads $A_n$ only through its $\mathcal M_n$-block, it
cannot exclude estimators that are cheap on $\mathcal M_n$ but wild off it. Take
$\mathcal S_n^{\mathrm{diag}}$, $\mathcal M_n=\mathbf 1_n^\perp$, and $W_n=I_n$. Because $W_n=I_n$ acts
as the identity on $\mathcal M_n$, every feasible $A_n$ satisfies
\[
\mathcal R_n(A_n)
=\sup_{\mu\in\mathcal M_n,\ \|\mu\|=1}\mu'(A_n-I_n)\mu
=\Big(\sup_{\mu\in\mathcal M_n,\ \|\mu\|=1}\mu'A_n\mu\Big)-1\;\ge\;-1,
\]
the inequality by (C2), with equality if and only if $\mu'A_n\mu=0$ for every
$\mu\in\mathcal M_n$. The value $-1$ is therefore the minimum of
\eqref{eq:opt-min-corr}, and it is attained by $A_n=\mathbf 1_n\mathbf 1_n'$, which
is feasible --- (C1) holds since $\diag(\mathbf 1_n\mathbf 1_n'-I_n)=0$, and (C2)
since $P_{\mathcal M_n}\mathbf 1_n=0$ --- and which yields the trivial estimator
$\widehat V_{A,n}=(\sum_i Y_{n,i})^2$. 
The minimal-correction criterion thus selects
an estimator with no sampling content at all, strictly preferring it to the
truncation; it is insufficient for selecting an estimator, and the variance
discipline of the MSE problem~\eqref{eq:opt-mse-pointwise} is what rules such
minimizers out. 
\end{remark}

The MSE criterion controls sampling variance directly, eliminating the degeneracy illustrated by $(\sum_i Y_{n,i})^2$. The cost is that the objective involves the third- and fourth-moment tensors.
Specifying these requires distributional assumptions beyond the second-moment structure encoded by $(\mathcal{M}_n, \mathcal{S}_n)$.

\subsection{The eigenvalue-truncation solution}
\label{sec:eigtruncation}

The weight matrix $W_n$ has the following spectral decomposition
\[
W_n=U_n\Lambda_nU_n',
\qquad
\Lambda_n=\diag(\lambda_{n,1},\dots,\lambda_{n,n}).
\]
Write $\lambda^{+}:=\max\{\lambda,0\}$
and $\lambda^{-}:=\max\{-\lambda,0\}$, set
$\Lambda_n^{\pm}:=\diag(\lambda_{n,1}^{\pm},\dots,\lambda_{n,n}^{\pm})$, and define
\[
W_n^{+}:=U_n\Lambda_n^{+}U_n',
\qquad
W_n^{-}:=U_n\Lambda_n^{-}U_n',
\qquad
\lambda_n^{\min}:=\min_j\lambda_{n,j}=\lambda_{\min}(W_n),
\]
so $W_n=W_n^{+}-W_n^{-}$, both terms are positive semidefinite, and $\lambda_n^{\min}$
is the smallest eigenvalue of $W_n$. The \emph{eigenvalue-truncation} (ET) weight matrix
is:
\begin{equation}
A_n^\star :=W_n^{+}
=U_n\Lambda_n^{+}U_n' .
\label{eq:A-star}
\end{equation}
It is identical to $W_n$ except that it truncates the negative eigenvalues at zero.
$A_n^\star$ depends only on the known $W_n$; it never references $\mathcal S_n$ and never
requires an estimate of $\Sigma$.
Due to \Cref{thm:conservative}, $A_n^\star$ is valid for any $(\mathcal M_n, \mathcal S_n)$, since $W^+_n$ is positive semidefinite. 

The following definitions and assumptions are used to characterize optimality.

\begin{definition}[Spectral class and aligned cone]
\label{def:spectral-aligned}
Fix an orthonormal eigenbasis $\{q_{n,j}\}_{j=1}^n$ of $W_n$, so that
$W_n=\sum_{j=1}^n\lambda_{n,j}\,q_{n,j}q_{n,j}'$ (where $q_{n,j} = U_{n,\cdot j}$), and write
\[
\Delta(a):=\sum_{j=1}^{n}a_j\,q_{n,j}q_{n,j}',
\qquad a\in\R^n .
\]
The \emph{spectral class} is
\[
\mathcal A_n^{\mathrm{sp}}
:=\bigl\{A_n(a):=W_n+\Delta(a)\ :\ a\in\R^n\bigr\},
\]
and the \emph{aligned cone} is
\[
\mathcal S_n^{W\text{-}\mathrm{diag}}
:=\bigl\{\Delta(s):s\in\R^n_+\bigr\},
\]
the covariances commuting with $W_n$. 
\end{definition}

$\mathcal S_n^{W\text{-}\mathrm{diag}}$ is generated by the eigenprojectors
$q_{n,j}q_{n,j}'$, so
\begin{equation}
\mathcal S_n^{W\text{-}\mathrm{diag}}\subseteq\mathcal S_n
\iff
q_{n,j}q_{n,j}'\in\mathcal S_n\ \text{ for every }j .
\label{eq:containment}
\end{equation}
The containment~\eqref{eq:containment} is the operative restriction on the cone in (ii) and (iii) below. It is a
demand on $\mathcal S_n$ from below, and is inherited by every larger cone.

The MSE criterion is a fourth-order object, so it cannot be evaluated from
$(\mathcal M_n,\mathcal S_n)$ alone. We impose the following shape restriction, which makes
the MSE a function of $(\mu,\Sigma)$ only.

\begin{assumption}[No skewness, Gaussian fourth-cumulant structure]
\label{asmp:no-skew}
For every $P_n\in\mathcal P_n(\mathcal M_n,\mathcal S_n)$ the vector $Y_n$ has finite fourth
moments, and $Z:=Y_n-\mu$ satisfies
\[
\E_{P_n}[Z_iZ_jZ_k]=0\quad\text{for all }i,j,k,
\qquad
\kappa_{ijkl}(P_n)=0\quad\text{for all }i,j,k,l,
\]
where $\kappa_{ijkl}(P_n)=\E_{P_n}[Z_iZ_jZ_kZ_l]-\E[Z_iZ_j]\E[Z_kZ_l]
-\E[Z_iZ_k]\E[Z_jZ_l]-\E[Z_iZ_l]\E[Z_jZ_k]$.
\end{assumption}

The no-skewness condition is implied by symmetry of $Z$ about zero, $Z\overset{d}{=}-Z$; the
Gaussian fourth-cumulant condition is implied by Gaussianity, and for elliptical
distributions $\kappa^{(4)}$ is proportional to the Gaussian tensor scaled by a kurtosis
parameter. \Cref{asmp:no-skew} restricts the \emph{shape} of $Z$, not its location $\mu$ or
scale $\Sigma$, both of which remain arbitrary within
$\mathcal M_n\times\mathcal S_n$.

\begin{theorem}[Eigenvalue truncation: validity and optimality]
\label{thm:eig-truncation}
Let $A_n^\star$ be as in~\eqref{eq:A-star}, and assume $\mathcal M_n=\R^n$. 
\begin{enumerate}
\item[(i)] \textbf{Minimal correction.} Suppose $\mathcal{S}_n \ne \{ 0 \}$. Then, $A_n^\star=W_n^+$   solves~\eqref{eq:opt-min-corr} with optimal value $\mathcal R_n(A_n^\star)
=(-\lambda_n^{\min})_+$.
\end{enumerate}
Parts (ii)(a)--(b) and (iii) assume in
addition the containment~\eqref{eq:containment}.
\begin{enumerate}
\item[(ii)] \textbf{Level.}
  \begin{enumerate}
  \item[(a)] For every $\Psi=\Sigma+\mu\mu'\succeq0$, $A_n^\star$ solves the level
  problem~\eqref{eq:opt-level-pointwise} over the uniformly conservative members of
  $\mathcal A_n^{\mathrm{sp}}$. 
  This solution is unique if and only if $\psi_j:=q_{n,j}'\Psi q_{n,j}>0$ for every $j$.
  \item[(b)] If in addition $\Psi\in\mathcal S_n^{W\text{-}\mathrm{diag}}$, then $A_n^\star$ minimizes $\tr(A_n\Psi)$ over \emph{all} uniformly conservative $A_n\in\Sym_n$,
spectral or not.
  \item[(c)] Let $\mathcal S_n$ be closed.
  If $q_{n,k}q_{n,k}'\notin\mathcal S_n$ for some $k$ with $\lambda_{n,k}>0$, then some
  uniformly conservative $A_n\in\Sym_n$ satisfies $\tr(A_n\Psi)<\tr(A_n^\star\Psi)$ at
  $\Psi=q_{n,k}q_{n,k}'\succeq0$.
  \end{enumerate}
\item[(iii)] \textbf{MSE.} Suppose \Cref{asmp:no-skew} holds. Then for every $\mu\in\R^n$ and
every $\Sigma\in\mathcal S_n^{W\text{-}\mathrm{diag}}$,
\[
\mathrm{MSE}_n(A_n^\star;\mu,\Sigma)\le\mathrm{MSE}_n(A_n;\mu,\Sigma)
\qquad\text{for every uniformly conservative }A_n\in\mathcal A_n^{\mathrm{sp}},
\]
so $A_n^\star$ solves~\eqref{eq:opt-mse-pointwise} pointwise; the minimizer is
unique if $s_j:=q_{n,j}'\Sigma q_{n,j}>0$ for every $j$.
\end{enumerate}
\end{theorem}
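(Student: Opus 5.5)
The plan is to work throughout in the eigenbasis $\{q_{n,j}\}$ of $W_n$ and to use \Cref{thm:conservative} to reduce uniform conservativeness to (C1)--(C2). Since $\mathcal M_n=\R^n$, (C2) is simply $A_n\succeq0$. Feasibility of $A_n^\star=W_n^+$ is immediate. It is PSD, and $A_n^\star-W_n=W_n^-\succeq0$ lies in $\mathrm{Sym}_n^+\subseteq\mathcal S_n^*$.

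\textbf{Part (i).} I will compute $\mathcal R_n(A_n^\star)=\lambda_{\max}(W_n^-)=(-\lambda_n^{\min})_+$. For the lower bound, take any feasible $A_n$ and let $q$ be a unit eigenvector for $\lambda_n^{\min}$. Then
\[
\mathcal R_n(A_n)\ge q'(A_n-W_n)q=q'A_nq-\lambda_n^{\min}\ge-\lambda_n^{\min},
\]
using $A_n\succeq0$. If $\lambda_n^{\min}\ge0$, the bound $(-\lambda_n^{\min})_+=0$ is not automatic, since $\mathcal R_n$ could in principle be negative. Here I use $\mathcal S_n\neq\{0\}$. Pick a nonzero $\Sigma\in\mathcal S_n$. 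Then (C1) gives $\tr((A_n-W_n)\Sigma)\ge0$, so $A_n-W_n$ cannot be negative definite, and hence $\sup_{\|\mu\|=1}\mu'(A_n-W_n)\mu\ge0$. This is the only place the hypothesis enters.

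\textbf{Part (ii).} For (a), write $A_n(a)=W_n+\Delta(a)$. Its eigenvalues are $\lambda_j+a_j$ in the basis $q_j$, so (C2) reads $a_j\ge-\lambda_j$. Under the containment \eqref{eq:containment}, each $q_jq_j'\in\mathcal S_n$, and (C1) forces $\tr(\Delta(a)q_jq_j')=a_j\ge0$. Conversely, $a\ge0$ gives $\Delta(a)\succeq0\in\mathcal S_n^*$. Feasibility is therefore $a_j\ge\lambda_j^-$. The objective is $\tr(W_n\Psi)+\sum_ja_j\psi_j$ with $\psi_j\ge0$, which is minimized at $a_j=\lambda_j^-$, i.e.\ at $A_n^\star$. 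The minimizer is unique iff every $\psi_j>0$; if some $\psi_j=0$, then $a_j$ can be raised freely.

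For (b), take any feasible $A_n$ and $\Psi=\sum_j s_jq_jq_j'$ with $s_j\ge0$. Then
\[
\tr(A_n\Psi)=\sum_js_j\,q_j'A_nq_j .
\]
Containment and (C1) give $q_j'A_nq_j\ge\lambda_j$, and PSD gives $q_j'A_nq_j\ge0$. Hence $q_j'A_nq_j\ge\lambda_j^+=q_j'A_n^\star q_j$.

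For (c), I will use separation. If $q_kq_k'\notin\mathcal S_n$ with $\mathcal S_n$ closed and convex, there is $B\in\mathcal S_n^*$ with $\tr(Bq_kq_k')<0$. The candidate is $A_n=A_n^\star+t(\cdots)$, and I need to keep $A_n\succeq0$ while lowering $q_k'A_nq_k$ below $\lambda_k>0$. The natural choice is $A_n=W_n^++\varepsilon B+cI$-type corrections, but PSD may fail. A cleaner option is $A_n=W_n^++\varepsilon(B+\beta\,(I-q_kq_k'))$ with $\beta$ large, so that $B+\beta(I-q_kq_k')\in\mathcal S_n^*$. This makes the $q_k$ entry negative while $W_n^+$ contributes $\lambda_k>0$, so PSD holds for small $\varepsilon$. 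Making PSD, (C1) and strict decrease hold simultaneously is the step I expect to be the main obstacle. I will try first to show $B$ can be chosen with $B+\beta(I-q_kq_k')\succeq$-compatible, using that $\mathcal S_n^*\supseteq\mathrm{Sym}_n^+$ is a convex cone.

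\textbf{Part (iii).} Under \Cref{asmp:no-skew}, Gaussian moment formulas give
\[
\Var(Y'AY)=2\tr(A\Sigma A\Sigma)+4\mu'A\Sigma A\mu .
\]
The MSE is then
\[
\bigl(\tr((A-W)\Sigma)+\mu'A\mu\bigr)^2+2\tr(A\Sigma A\Sigma)+4\mu'A\Sigma A\mu .
\]
With $A=A_n(a)$ and $\Sigma=\Delta(s)$, everything diagonalizes. Writing $m_j=(q_j'\mu)^2$ and $b_j=\lambda_j+a_j\ge\lambda_j^+$, the MSE is
\[
\Bigl(\sum_j a_js_j+\sum_j b_jm_j\Bigr)^2+2\sum_j b_j^2s_j^2+4\sum_j b_j^2s_jm_j .
\]
The feasible set, by (ii)(a), is $a_j\ge\lambda_j^-$, equivalently $b_j\ge\lambda_j^+$ with $a_j\ge0$. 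Each term is coordinatewise nondecreasing in $b_j$ on this region, since all quantities inside are nonnegative and the bias term is $\ge0$ there. So the minimum is at $b_j=\lambda_j^+$, i.e.\ $A_n^\star$. When all $s_j>0$ the term $2b_j^2s_j^2$ is strictly increasing in each $b_j\ge0$. The only exception is $b_j=0$, where increasing it strictly raises the bias term through $a_js_j$. This gives uniqueness.
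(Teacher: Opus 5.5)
Your proof of (i), (ii)(a), (ii)(b) and (iii) is essentially the paper's. It uses the same two lower bounds in (i), the same feasible box $\{a_j\ge\lambda_{n,j}^-\}$ obtained from (C1) at the eigenprojectors together with (C2), and the same coordinatewise monotonicity of the diagonalized MSE. Your caveat about $b_j=0$ in the uniqueness step is unnecessary: $b\mapsto 2b^2s_j^2$ is strictly increasing on $[0,\infty)$ whenever $s_j>0$.

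The genuine difference is in (ii)(c). The paper separates in $\R^n$. It maps $\mathcal S_n$ to the cone of eigen-diagonals $\mathcal C=\{d(\Sigma):\Sigma\in\mathcal S_n\}$. It then uses closedness and a compactness argument to get $e_k\notin\overline{\mathcal C}$; this relies on $q_{n,k}q_{n,k}'$ being the only unit-trace PSD matrix with $q_{n,k}'\Sigma q_{n,k}=1$. From there it builds a \emph{spectral} competitor $A_n(a'')$, whose PSD check is just $\lambda_{n,j}+a''_j\ge0$.

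You instead separate directly in $\Sym_n$ via the bipolar theorem. This skips the closure argument but yields a non-spectral competitor. That suffices, because (c) only asks for some $A_n\in\Sym_n$. The paper's construction buys more: it shows ET already fails within $\mathcal A_n^{\mathrm{sp}}$, so the containment hypothesis is needed for (ii)(a) itself.

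The cost of your route is the PSD step, which you left open, and there you have misplaced the role of $\beta$. (C1) holds for every $\beta\ge0$, because $A_n-W_n=W_n^-+\varepsilon B+\varepsilon\beta(I-q_{n,k}q_{n,k}')$ is a sum of elements of the convex cone $\mathcal S_n^*\supseteq\Sym_n^+$. Large $\beta$ is what delivers PSD.

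Work in the split $\spann(q_{n,k})\oplus q_{n,k}^\perp$, where $W_n^+$ and $I-q_{n,k}q_{n,k}'$ are both block diagonal. Your $A_n$ then has:
\begin{itemize}
\item corner $\lambda_{n,k}+\varepsilon q_{n,k}'Bq_{n,k}$;
\item off-diagonal block $\varepsilon c$, where $c:=(I-q_{n,k}q_{n,k}')Bq_{n,k}$;
\item lower block $\succeq\varepsilon(\beta-\|B\|_{\mathrm{op}})I$.
\end{itemize}
Fix $\beta>\|B\|_{\mathrm{op}}$. The Schur complement is then at least $\lambda_{n,k}+\varepsilon q_{n,k}'Bq_{n,k}-\varepsilon\|c\|^2/(\beta-\|B\|_{\mathrm{op}})$. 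This is positive for small $\varepsilon>0$ because $\lambda_{n,k}>0$. Meanwhile $\tr(A_nq_{n,k}q_{n,k}')=\lambda_{n,k}+\varepsilon q_{n,k}'Bq_{n,k}<\lambda_{n,k}=\tr(A_n^\star q_{n,k}q_{n,k}')$. With this written out, your (c) is complete.
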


In (ii) and (iii), the spectral class $\mathcal{A}^{\mathrm{sp}}_n$ is the set of weight matrices built from the known $W_n$ by adjusting its eigenvalues but not its eigenvectors: every estimator in \Cref{sec:examples-and-optimal} lies in it, so it is not restrictive.
Within it, ET is the Loewner-least element. 
By (ii)(b), at any $\Psi$ diagonal in the eigenbasis of $W_n$, no uniformly conservative $A_n$ undercuts ET. 
What a non-spectral $A_n$ can do is beat ET at \emph{particular} $(\mu,\Sigma)$; capturing that gain means orienting weight toward the unknown second moment and, for MSE, toward its split into $\mu\mu'$ and $\Sigma$, which \Cref{thm:impossibility} forbids learning.
Parts (ii)(a) and (iii) differ only in that (iii) claims pointwise optimality in $\mathcal{S}_n^{\mathrm{W-diag}}$ while (ii) claims pointwise optimality for every $\Psi \succeq 0$.

\begin{remark}[The conditions constrain the cone]
\label{rem:richness-detectability}
The constraint \eqref{eq:containment} asks the cone to place mass on each eigendirection of $W_n$, which can be restrictive. 
Let the covariance model be a zero pattern, $\mathcal S_n=\{\Sigma\succeq0:\Sigma_{ij}=0$ whenever units $i$ and $j$ are not
permitted to covary$\}$. By~\eqref{eq:containment}, what matters is whether each $q_{n,j}q_{n,j}'$ vanishes at every forbidden cell.
Under two-way clustering, units are indexed by $(g,h)$ with $g\le G$ and $h\le H$, $G,H\ge2$;
two units may covary exactly when they share $g$ or share $h$.
The top eigenvector is $q=\mathbf 1_n/\sqrt n$, and its projector $\mathbf 1_n\mathbf 1_n'/n$ is nonzero at
\emph{every} cell, including the pairs sharing neither index. Hence $qq'\notin\mathcal S_n$, \eqref{eq:containment} fails, and by
\Cref{thm:eig-truncation}(ii)(c) some uniformly conservative $A_n$ strictly undercuts ET at
$\Psi=qq'$. Under a bare two-way cone, then, ET remains uniformly conservative and
minimal-correction optimal, but is not level- or MSE-optimal.
\end{remark}

A researcher unwilling to buy optimality with those covariance restrictions can instead enlarge the cone until the containment holds. 
Call a convex cone $\mathcal T\subseteq\Sym_n^+$ an \emph{admissible enlargement} of $\mathcal S_n$ if $\mathcal S_n\subseteq\mathcal T$ and $\mathcal T$ satisfies~\eqref{eq:containment}. 
Adding the aligned cone produces the smallest enlargement, formalized in the next proposition.

\begin{proposition}[The minimal aligned enlargement]
\label{prop:union-cone}
Let $\mathcal M_n=\R^n$, let $\mathcal S_n\subseteq\Sym_n^+$ be a convex cone, and let
\[
\mathcal S_n^{\oplus}:=\mathcal S_n+\mathcal S_n^{W\text{-}\mathrm{diag}}
=\{\Sigma_1+\Sigma_2:\Sigma_1\in\mathcal S_n,\ 
\Sigma_2\in\mathcal S_n^{W\text{-}\mathrm{diag}}\}
\]
denote their Minkowski sum.
\begin{enumerate}
\item[(i)] \textbf{Minimality.} $\mathcal S_n^{\oplus}$ is an admissible enlargement of
$\mathcal S_n$, and $\mathcal S_n^{\oplus}\subseteq\mathcal T$ for every admissible enlargement
$\mathcal T$.
\item[(ii)] \textbf{The dual price.} $(\mathcal S_n^{\oplus})^*
=\mathcal S_n^*\cap(\mathcal S_n^{W\text{-}\mathrm{diag}})^*$, where
\[
(\mathcal S_n^{W\text{-}\mathrm{diag}})^*
=\bigl\{D\in\Sym_n:\ q_{n,j}'Dq_{n,j}\ge0\ \text{ for every }j\bigr\} ,
\]
so $A_n$ is uniformly conservative on $\mathcal P_n(\R^n,\mathcal S_n^{\oplus})$ if and only if
it is uniformly conservative on $\mathcal P_n(\R^n,\mathcal S_n)$ \emph{and}
$q_{n,j}'(A_n-W_n)q_{n,j}\ge0$ for every $j$. Moreover
$\mathcal T^*\subseteq(\mathcal S_n^{\oplus})^*$ for every admissible enlargement
$\mathcal T$.
\end{enumerate}
\end{proposition}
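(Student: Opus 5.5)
The argument is elementary cone algebra, and I would prove the two parts in order, using only the definition of the dual cone \eqref{eq:dual-cone}, the containment criterion \eqref{eq:containment}, and \Cref{thm:conservative} with $\mathcal M_n=\R^n$.

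\textbf{Part (i).} First I check that $\mathcal S_n^{\oplus}$ is an admissible enlargement.
\begin{itemize}
\item It is a convex cone, because it is a Minkowski sum of two convex cones.
\item It lies in $\Sym_n^+$, because both summands are PSD.
\item It contains $\mathcal S_n$: take $\Sigma_2=0=\Delta(0)$.
\item It contains every $q_{n,j}q_{n,j}'$: take $\Sigma_1=0$ and $\Sigma_2=\Delta(e_j)$.
\end{itemize}
By \eqref{eq:containment}, the last point means $\mathcal S_n^{\oplus}$ satisfies the containment.

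For minimality, let $\mathcal T$ be any admissible enlargement.
\begin{itemize}
\item $\mathcal S_n\subseteq\mathcal T$ by definition.
\item $\mathcal S_n^{W\text{-}\mathrm{diag}}\subseteq\mathcal T$ by \eqref{eq:containment}, since each $\Delta(s)$ is a nonnegative combination of the $q_{n,j}q_{n,j}'$ and $\mathcal T$ is a convex cone.
\item A convex cone is closed under addition, so $\Sigma_1+\Sigma_2\in\mathcal T$. Hence $\mathcal S_n^{\oplus}\subseteq\mathcal T$.
\end{itemize}
One caveat: the paper elsewhere requires closed cones. If closedness of $\mathcal T$ is demanded, I would note that $\mathcal S_n^{\oplus}$ is closed when $\mathcal S_n$ is closed. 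The aligned cone is finitely generated (polyhedral), and both cones sit in $\Sym_n^+$, which is pointed. I would be cautious here and, if needed, remark that the dual computation in (ii) is unaffected by taking closures.

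\textbf{Part (ii).} I would use the standard identity $(K_1+K_2)^*=K_1^*\cap K_2^*$ for cones containing $0$.
\begin{itemize}
\item \emph{Inclusion $\subseteq$.} If $\tr(B(\Sigma_1+\Sigma_2))\ge0$ for all pairs, then setting $\Sigma_2=0$ or $\Sigma_1=0$ gives $B\in K_1^*$ and $B\in K_2^*$.
\item \emph{Inclusion $\supseteq$.} Linearity of the trace gives the reverse inclusion.
\item \emph{Dual of the aligned cone.} We have $\tr(D\Delta(s))=\sum_j s_j\,q_{n,j}'Dq_{n,j}$. This is nonnegative for all $s\in\R^n_+$ if and only if each $q_{n,j}'Dq_{n,j}\ge0$; for necessity, test $s=e_j$.
\end{itemize}

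The conservativeness equivalence then follows from \Cref{thm:conservative} with $\mathcal M_n=\R^n$. Condition (C2) is $A_n\succeq0$ and does not involve the cone. Condition (C1) on $\mathcal S_n^{\oplus}$ is $A_n-W_n\in\mathcal S_n^*$ together with $q_{n,j}'(A_n-W_n)q_{n,j}\ge0$ for every $j$. So uniform conservativeness on $\mathcal S_n^{\oplus}$ is exactly uniform conservativeness on $\mathcal S_n$ plus these eigendirection inequalities.

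Finally, dualization reverses inclusion: if $K\subseteq L$, then $L^*\subseteq K^*$. Applying this to $\mathcal S_n^{\oplus}\subseteq\mathcal T$ from (i) gives $\mathcal T^*\subseteq(\mathcal S_n^{\oplus})^*$.

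\textbf{Main obstacle.} The only delicate point is the closedness bookkeeping, if admissible enlargements must be closed. Everything else is a direct one-line verification.
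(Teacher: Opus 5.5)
Your proposal is correct and follows essentially the same route as the paper: for (i), the Minkowski sum of two convex cones in $\Sym_n^+$ is again such a cone, and it contains both summands; minimality follows because $\mathcal T$ contains both summands and is closed under addition. For (ii), you use $(K_1+K_2)^*=K_1^*\cap K_2^*$ by zeroing one summand, the identity $\tr(D\Delta(s))=\sum_j s_j\,q_{n,j}'Dq_{n,j}$, the fact that (C2) does not involve the cone, and order-reversal under dualization; your closedness caveat is unnecessary because the paper's admissible enlargements are only required to be convex cones (your side claim is nonetheless correct, since PSD summands of a convergent sequence have bounded trace and hence admit convergent subsequences).
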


Optimality is then a corollary. Since $\mathcal S_n^{\oplus}$ satisfies \eqref{eq:containment}, \Cref{thm:eig-truncation}(ii)(a)--(b) and (iii) apply to it verbatim, so $A_n^\star=W_n^{+}$ solves the level problem at every $\Psi\succeq0$ within $\mathcal A_n^{\mathrm{sp}}$, and, under \Cref{asmp:no-skew}, the MSE problem at every $\mu\in\R^n$ and every $\Sigma\in\mathcal S_n^{W\text{-}\mathrm{diag}}$. 
The enlargement changes the constraint set, not the estimator: by \eqref{eq:A-star}, $A_n^\star$ never references $\mathcal S_n$, so any convergence result for $Y_n'A_n^\star Y_n$ established under the dependence environment that generates $\mathcal S_n$ continues to hold unchanged.

\begin{remark}[Optimality over a restricted dual cone]
\label{rem:enlargement-cost}
Validity is unaffected. What the enlargement changes is the
feasible set of \eqref{eq:opt-min-corr}, \eqref{eq:opt-level-pointwise} and
\eqref{eq:opt-mse-pointwise}: by \Cref{prop:union-cone}(ii) the constraint (C1) tightens from
$A_n-W_n\in\mathcal S_n^*$ to
\[
A_n-W_n\in\mathcal S_n^*
\quad\text{and}\quad
q_{n,j}'(A_n-W_n)q_{n,j}\ge0\ \text{ for every }j ,
\]
which discards exactly the competitors that undercut ET along eigendirections \Cref{thm:eig-truncation}(ii)(c) constructs. 
Optimality of ET is therefore optimality \emph{over a restricted dual cone}, and by \Cref{prop:union-cone} the restriction is the smallest one that delivers it: any other admissible enlargement discards more competitors.


An $A_n$ beating ET under $\mathcal S_n$ places weight on the cells $\mathcal S_n$ \emph{declares} to be exactly zero. 
The extra dual constraint of \Cref{prop:union-cone}(ii), $q_{n,j}'(A_n-W_n)q_{n,j}\ge0$ for every $j$, is precisely the inequality it violates. 
Such an estimator draws its advantage from being valid if the zero pattern holds exactly, but fragile to misspecification of covariances. Declining that advantage is what $\mathcal S_n^{\oplus}$ formalizes.
\end{remark}

The ET variance estimator is shown to converge to the intended variance estimand under some primitive conditions in \Cref{lem:quadratic_LLN} in the appendix. 
Hence, when coupled with a central limit theorem, the variance estimator can be used for robust and optimal inference. 
These limit theorems are often derived under particular dependence structures, so the formal statement is deferred to \Cref{sec:application_panel_dep}.
We fix $A_n^\star=W_n^+$ as the closed-form benchmark. 

\subsection{Average-case criteria }
\label{sec:pi-programs}

Outside the regime where optimality is established, no correction dominates pointwise: the MSE- and level-minimizing $A_n$ move with $(\mu,\Sigma)$, orienting weight toward the mass of $\Sigma$ in a way Theorem~\ref{thm:impossibility} forbids learning. 
Similarly, the minimal correction problem may not be minimized at ET when $\mathcal M_n \ne \mathbb R^n$. 
A worst-case selection is unavailable as the entire identification problem is that the second moment $\E[Y_nY_n']=\Sigma+\mu\mu'$ does not reveal how its mass splits between $\mu\mu'$ and $\Sigma$. 
There is no canonical adversary, and selecting a unique estimator requires committing to \emph{how} the residual uncertainty in $(\mu,\Sigma)$ is weighted.

We make that weighting explicit. 
Let $\pi$ be a researcher-declared probability measure on $\mathcal M_n\times\mathcal S_n$ with finite fourth moments, and define the $\pi$-averaged MSE and the $\pi$-averaged level,
\begin{align}
\overline{\mathrm{MSE}}_n(A_n;\pi)
&:=\int \mathrm{MSE}_n(A_n;\mu,\Sigma)\,d\pi(\mu,\Sigma),
\\
\overline V_n(A_n;\pi)
&:=\int \tr\!\big(A_n(\Sigma+\mu\mu')\big)\,d\pi(\mu,\Sigma)
=\tr(A_n\Psi_\pi),
\end{align}
where $\Psi_\pi:=\E_\pi[\Sigma+\mu\mu']\succeq0$ is the prior second moment of $Y_n$. 
The associated programs are
\begin{align}
\min_{A_n\in\mathrm{Sym}_n}\ \overline{\mathrm{MSE}}_n(A_n;\pi)
&\qquad\text{s.t.}\qquad
A_n-W_n\in\mathcal S_n^*,\quad
P_{\mathcal M_n}A_nP_{\mathcal M_n}\succeq0,
\label{eq:opt-mse-bayes}\\[2pt]
\min_{A_n\in\mathrm{Sym}_n}\ \tr(A_n\Psi_\pi)
&\qquad\text{s.t.}\qquad
A_n-W_n\in\mathcal S_n^*,\quad
P_{\mathcal M_n}A_nP_{\mathcal M_n}\succeq0.
\label{eq:opt-level-bayes}
\end{align}
The measure $\pi$ requires no estimation of $\mu$ or $\Sigma$ and never invokes the extreme rays of $\mathcal S_n$. 
Averaging names no adversary, so both \eqref{eq:opt-mse-bayes} and \eqref{eq:opt-level-bayes} remain finite.
Notably, the choice of $\pi$ is irrelevant for validity, because the constraints are satisfied by the $A_n$ that solves the programs even if the $\pi$ is misspecified. 

The minimal-correction criterion does not require a declared $\pi$, because its objective $\mathcal R_n(A_n)=\sup_{\mu\in\mathcal M_n,\|\mu\|=1}\mu'(A_n-W_n)\mu$ depends only on $(\mathcal M_n,W_n)$, both known, and its worst case is taken over the compact unit sphere of $\mathcal M_n$.
The optimization problem for this criterion directly is also conic. 

\begin{proposition}[Conicity of the optimization problems]
\label{prop:conicity}
Suppose $\mathcal{S}_n$ is a closed convex cone and $\mathcal{M}_n$ is a linear
subspace.
\begin{enumerate}
\item[(i)] The minimal-correction problem~\eqref{eq:opt-min-corr} is a convex conic
program in $A_n$.
\item[(ii)] Let $\pi$ be a probability measure on $\mathcal{M}_n\times\mathcal{S}_n$
with finite fourth moments, and suppose that for $\pi$-almost every
$(\mu,\Sigma)$ the admissible moment tensors $(m^{(3)},\kappa^{(4)})$ are such
that $\mathrm{MSE}_n(A_n;\mu,\Sigma)$ is convex in $A_n$. Then the $\pi$-averaged
MSE program~\eqref{eq:opt-mse-bayes} is a convex conic program in $A_n$.
\item[(iii)] Let $\pi$ be a probability measure on
$\mathcal{M}_n\times\mathcal{S}_n$ with finite second moments. Then the
$\pi$-averaged level program~\eqref{eq:opt-level-bayes} is a \emph{linear} conic
program in $A_n$.
\end{enumerate}
\end{proposition}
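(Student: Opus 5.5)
The plan is to verify, part by part, the standard ingredients of a convex conic program: the decision variable $A_n$ ranges over the finite-dimensional space $\Sym_n$, the objective is convex (linear in (iii)), and the feasible set is the intersection of an affine preimage of a closed convex cone with a linear matrix inequality. The feasible set is the same in all three programs, so I will treat it once. Constraint (C1), $A_n-W_n\in\mathcal S_n^*$, is an affine translate of the dual cone. $\mathcal S_n^*$ is an intersection of closed half-spaces $\{B:\tr(B\Sigma)\ge0\}$ indexed by $\Sigma\in\mathcal S_n$, so it is a closed convex cone. Constraint (C2) in the form $P_{\mathcal M_n}A_nP_{\mathcal M_n}\succeq0$ is the preimage of $\Sym_n^+$ under the linear map $A_n\mapsto P_{\mathcal M_n}A_nP_{\mathcal M_n}$. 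Hence the feasible set is $\{A_n:(A_n-W_n,\,P_{\mathcal M_n}A_nP_{\mathcal M_n})\in\mathcal S_n^*\times\Sym_n^+\}$, which is an affine conic constraint.

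For (i), I would introduce an epigraph variable $t$ and rewrite $\mathcal R_n(A_n)\le t$ as $\mu'(A_n-W_n-tI_n)\mu\le0$ for all $\mu\in\mathcal M_n$. In matrix form this reads $t\,P_{\mathcal M_n}-P_{\mathcal M_n}(A_n-W_n)P_{\mathcal M_n}\succeq0$. The equivalence uses $P_{\mathcal M_n}\mu=\mu$ on $\mathcal M_n$ together with $\|\mu\|=1$; vectors in $\mathcal M_n^\perp$ give $0\ge0$ trivially. This is again an LMI, so (i) becomes minimizing the linear objective $t$ subject to a product of cones. Equivalently, $\mathcal R_n$ is a supremum of linear functionals of $A_n$ and is therefore convex. (If $\mathcal M_n=\{0\}$ the supremum is $-\infty$; I would note that this case is trivial.)

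For (iii), $\overline V_n(A_n;\pi)=\tr(A_n\Psi_\pi)$, where $\Psi_\pi=\E_\pi[\Sigma+\mu\mu']$ is finite because $\pi$ has finite second moments. The one step needing care is interchanging the integral and the trace. This follows from linearity and Fubini, since $|\tr(A_n(\Sigma+\mu\mu'))|\le\|A_n\|_F(\|\Sigma\|_F+\|\mu\|^2)$ is integrable. The objective is thus linear, and the program is a linear conic program over the cone above.

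For (ii), convexity of $A_n\mapsto\overline{\mathrm{MSE}}_n(A_n;\pi)$ follows from the pointwise convexity assumed $\pi$-a.e., because integrals of convex functions are convex. The step I expect to be the main obstacle is finiteness and measurability of the integrand. Expanding $\E[(Y_n'A_nY_n-\theta_n)^2]$ gives a polynomial in the entries of $A_n$ whose coefficients involve moments of $Y_n$ up to order four, namely $\mu$, $\Sigma$, $m^{(3)}$ and $\kappa^{(4)}$. I would therefore read ``finite fourth moments'' of $\pi$ as guaranteeing that these coefficients are $\pi$-integrable, so that the averaged objective is a finite convex quadratic in $A_n$. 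Concretely, it is $\tr$-linear plus a quadratic form $\langle A_n,\mathcal Q A_n\rangle$ with $\mathcal Q$ PSD as a limit of PSD pointwise forms. A convex quadratic objective can be cast conically via a rotated second-order cone epigraph after factoring $\mathcal Q$. Combining this with the common feasible set yields a convex conic program.
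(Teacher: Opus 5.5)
Your proposal is correct and follows the paper's argument essentially step for step. Both use a common conic feasible set, the epigraph LMI $t\,P_{\mathcal M_n}\succeq P_{\mathcal M_n}(A_n-W_n)P_{\mathcal M_n}$ for (i), convexity of an integral of convex functions for (ii), and linearity of the trace for (iii). Your extra details fill in points the paper leaves implicit: the half-space argument for closedness of $\mathcal S_n^*$, the Fubini bound in (iii), and the rotated second-order-cone epigraph, which makes the averaged MSE program literally conic rather than merely convex.
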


Conicity is practically attractive: any specific instance of these three programs can be solved with interior-point solvers once the problem data $(W_n, \mathcal{M}_n, \mathcal{S}_n)$ and the moment-tensor class are specified. 
The conic formulation is the fallback when no closed form is available, say, when the conditions of \Cref{thm:eig-truncation} do not hold, and restores a computable optimum.

Under the conditions of \Cref{thm:eig-truncation}, ET is also the optimal solution for these optimization problems regardless of the choice of $\pi$. 
Since ET is pointwise optimal for \emph{every} $\pi$, it continues to be optimal after integrating over $\pi$. 
The average-case criterion is thus a single criterion across regimes.

However, unlike the ET solution, it is in general more difficult to show that the variance estimators obtained from solving numerical conic programs converge to the intended estimand.

\section{Examples and Optimal Corrections}
\label{sec:examples-and-optimal}

This section explains how the framework's optimization problem applies to standard and new variance estimators. 
Each subsection states the weight matrix $W_n$, the admissible cone $\mathcal{S}_n$, and the optimal correction $A_n^\star$.

\subsection{One-way cluster-robust estimators}
\label{sec:oneway-cluster}

With observations partitioned into clusters $G_1, \ldots, G_{C_n}$ and indicator vectors $B_c \in \mathbb{R}^n$, the one-way cluster-robust estimator is
\[
\widehat{V}_{\clu, n} = \sum_{c=1}^{C_n}\!\left(\sum_{i \in G_c} Y_{n,i}\right)^{\!2} = Y_n' W_n^{\clu} Y_n, \qquad W_n^{\clu} = \sum_{c=1}^{C_n} B_c B_c'.
\]
Each $B_c B_c'$ is positive semidefinite, so $W_n^{\clu}$ is globally PSD, and the plug-in is uniformly conservative for every admissible mean space.
Hence $A_n^\star = W_n^{\clu}$, and heterogeneous means cannot induce anticonservativeness when the weight matrix is globally positive semidefinite. 
This result shows how the plug-in one-way cluster robust variance estimator is not only conservative in design-based settings (as established in \citet{abadie2023should}), it is also optimal.

\subsection{Time series: HAC estimators}
\label{sec:hac-revisit}

A generic HAC estimator is the quadratic form
\[
\widehat{V}_{\HAC,n}
= \sum_{i=1}^n\sum_{j=1}^n K\!\left(\frac{i-j}{b_n}\right)Y_{n,i}Y_{n,j}
= Y_n'W_n^{\HAC}Y_n,
\qquad
(W_n^{\HAC})_{ij}=K\!\left(\frac{i-j}{b_n}\right),
\]
with $W_n^{\HAC}$ symmetric Toeplitz. 
Whether the plug-in $A_T^\star=W_n^{\HAC}$ is already optimal turns entirely on the \emph{spectral window} of the kernel. 
For a symmetric banded Toeplitz matrix $B$ with value $c_k$ on its $k$-th diagonal, the spectral window is the generating function
\[
f(\omega):=c_0+2\sum_{k\ge1}c_k\cos(k\omega),\qquad\omega\in[-\pi,\pi],
\]
the Fourier transform of the kernel weights. 
Two facts connect $f$ to the eigenvalues of $B$: every eigenvalue lies in the range of $f$ (each Rayleigh quotient is a weighted average of $f$ values), and by Grenander--Szeg\H{o} the empirical eigenvalue distribution converges to that of $f$. 
The sign of $f$ therefore governs whether the plug-in is conservative. 
We contrast two kernels.

\subsubsection{Naive m-dependence plug-in}
\label{sec:m-dep-naive}

Consider a scalar time series $\{Y_t\}_{t=1}^T$ that is $m$-dependent, so $\mathrm{Cov}(Y_t,Y_{t+k})=0$ for $|k|>m$. 
The variance of the sum is
\[
\mathrm{Var}\!\left(\sum_{t=1}^T Y_t\right)
= \sum_t \mathrm{Var}(Y_t)
+ \sum_{k=1}^m \sum_t \big[\mathrm{Cov}(Y_t,Y_{t+k})+\mathrm{Cov}(Y_{t+k},Y_t)\big].
\]
The naive plug-in replaces each covariance with the product $Y_tY_{t+k}$ and weights all admissible lags equally,
\[
\widehat{V}^{\mathrm{naive}}_T
= \sum_t Y_t^2 + \sum_{k=1}^m \sum_t \big(Y_t Y_{t+k}+Y_{t+k}Y_t\big)
= Y'B_m Y,
\]
where $B_m\in\mathrm{Sym}_T$ is the symmetric banded Toeplitz matrix with ones on bands $0,1,\dots,m$. 
Its spectral window is the Dirichlet kernel of order $m$,
\[
f_{\mathrm{Dir}}(\omega)
= 1+2\sum_{k=1}^m\cos(k\omega)
= \frac{\sin\!\big((m+\tfrac12)\omega\big)}{\sin(\omega/2)},
\]
which oscillates in sign, vanishing at $\omega=2\pi j/(2m+1)$, $j=1,\dots,m$, and taking negative values between successive zeros. 
A positive fraction of the eigenvalues of $B_m$ are negative, so $B_m$ is indefinite. 
Two consequences follow. 
First, the realized $Y'B_m Y$ can be negative in a given sample regardless of the mean---the classical defect that motivates a positive-semidefinite kernel. 
Second, under the agnostic mean space $\mathcal{M}_n = \mathbb{R}^T$ the negative eigendirections lie inside $\mathcal{M}_n$, so the plug-in estimand is anticonservative for means loading on them. 
Due to \Cref{thm:conservative}, the variance estimand can be anticonservative --- an example is provided in \Cref{sec:panel_dep_appendix}.
The eigenvalue truncation $A_T^\star = B_m^+$ addresses both issues.

The optimal correction is the positive-part operator
\[
A_T^\star=B_m^+=U\max(\Lambda,0)U',
\]
where $B_m=U\Lambda U'$ is the spectral decomposition. In the frequency domain, the optimal estimator has spectral window $f^\star(\omega)=\max\{f_{\mathrm{Dir}}(\omega),0\}$, modifying $B_m$ only on the frequency band where $f_{\mathrm{Dir}}<0$.

\subsubsection{Newey--West with Bartlett kernel}
\label{sec:newey-west-revisit}

The Newey--West estimator replaces uniform weights by linearly declining Bartlett weights $\omega(k,M)=1-k/(M+1)$ for $|k|\le M$,
\[
\widehat{V}^{\mathrm{NW}}_T=Y'C_M Y,\qquad (C_M)_{ts}=\omega(|t-s|,M),
\]
with $C_M\in\mathrm{Sym}_T$ banded Toeplitz. 
Its spectral window is the Fej\'er kernel,
\[
f_{\mathrm{Fej}}(\omega)
= \frac{1}{M+1}\!\left(\frac{\sin((M+1)\omega/2)}{\sin(\omega/2)}\right)^{\!2}\ge0,
\]
the squared modulus of a Dirichlet kernel, hence nonnegative everywhere.
Because every eigenvalue of a symmetric Toeplitz matrix lies in the range of its spectral window, nonnegativity of $f_{\mathrm{Fej}}$ forces $C_M\succeq0$ exactly, in every finite sample---the positive-semidefiniteness for which the Bartlett kernel was introduced \citep{newey1987asimple}. 
No truncation is ever required: $A_T^\star=C_M$, and the identification problem of this paper does not bind in the Bartlett--HAC setting.

\subsection{Two-way clustering}
\label{sec:twoway_optimal}

We work in a balanced panel with $G$ groups and $T$ periods, one observation per cell, so $n = GT$, and identify $\mathbb{R}^n \cong \mathbb{R}^T \otimes \mathbb{R}^G$.
With row clusters indexed by $g$ and column clusters by $t$, the Cameron--Gelbach--Miller (CGM) weight matrix is
\[
W_n^{2\text{-way}} = W_n^G + W_n^T - W_n^{G\cap T},
\qquad
W_n^G = J_T \otimes I_G,\quad
W_n^T = I_T \otimes J_G,\quad
W_n^{G\cap T} = I_n,
\]
where $J_G = \mathbf{1}_G\mathbf{1}_G'$ and $J_T = \mathbf{1}_T\mathbf{1}_T'$.
Each component is positive semidefinite, but the subtraction renders $W_n^{2\text{-way}}$ indefinite, so under heterogeneous means the plug-in is anticonservative \citep{xu2024clustering}---a direct instance of the converse in Theorem~\ref{thm:conservative} at $\mathcal{M}_n=\mathbb{R}^n$.
Let
\[
P_{00} := I_n - \frac{W_n^G}{T} - \frac{W_n^T}{G} + \frac{J_n}{n},
\qquad J_n = \mathbf{1}_n\mathbf{1}_n',
\]
be the orthogonal projector implementing two-way demeaning,
$(P_{00}\,y)_{gt} = y_{gt} - \bar y_{\cdot t} - \bar y_{g\cdot} + \bar y_{\cdot\cdot}$, and the associated subspace $\mathcal{S}_{00} := \range (P_{00})$.

\begin{corollary}[Optimal two-way correction]
\label{cor:twoway_optimal}
Under $\mathcal{M}_n = \mathbb{R}^n$, the eigenvalue-truncation estimator
of Theorem~\ref{thm:eig-truncation} is
\[
A_n^\star = (W_n^{2\text{-way}})^+ = W_n^{2\text{-way}} + P_{00},
\]
equivalently $Y_n' A_n^\star Y_n = Y_n' W_n^{2\text{-way}} Y_n +
\sum_{g,t}\big(Y_{gt} - \bar Y_{\cdot t} - \bar Y_{g\cdot} + \bar Y_{\cdot\cdot}\big)^2$.
\end{corollary}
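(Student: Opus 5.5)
The plan is to compute the spectral decomposition of $W_n^{2\text{-way}}$ explicitly and then read off its positive part. Under the identification $\mathbb{R}^n\cong\mathbb{R}^T\otimes\mathbb{R}^G$, let $P_T := J_T/T$ and $P_G := J_G/G$ be the rank-one projectors onto constants, with complements $Q_T := I_T-P_T$ and $Q_G := I_G-P_G$. Then $I_n=(P_T+Q_T)\otimes(P_G+Q_G)$ splits into four mutually orthogonal projectors:
\[
P_{11}:=P_T\otimes P_G,\quad P_{10}:=P_T\otimes Q_G,\quad P_{01}:=Q_T\otimes P_G,\quad P_{00}':=Q_T\otimes Q_G .
\]
The first step is to check that $P_{00}'$ coincides with the $P_{00}$ defined before the corollary. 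Expanding gives
\[
Q_T\otimes Q_G = I_n - P_T\otimes I_G - I_T\otimes P_G + P_T\otimes P_G = I_n - W_n^G/T - W_n^T/G + J_n/n,
\]
using $J_T\otimes J_G=J_n$. This matches the definition exactly.

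The second step is to express each component of $W_n^{2\text{-way}}$ in this basis:
\[
W_n^G = J_T\otimes I_G = T\,P_T\otimes(P_G+Q_G) = T(P_{11}+P_{10}),
\]
\[
W_n^T = I_T\otimes J_G = G(P_{11}+P_{01}),
\]
\[
I_n = P_{11}+P_{10}+P_{01}+P_{00}.
\]
Hence
\[
W_n^{2\text{-way}} = (T+G-1)P_{11} + (T-1)P_{10} + (G-1)P_{01} - P_{00}.
\]
This is a spectral decomposition with eigenvalues $T+G-1$, $T-1$, $G-1$ and $-1$. Since $G,T\ge 1$, the first three eigenvalues are nonnegative, and the only negative eigenvalue is $-1$, with eigenspace $\range(P_{00})$ (nontrivial when $G,T\ge2$).

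The third step applies the definition \eqref{eq:A-star}. Eigenvalue truncation depends only on the eigenspaces and eigenvalues, not on the choice of orthonormal basis within an eigenspace, so $W_n^{+}$ replaces $-1$ by $0$ on $\range(P_{00})$ and leaves the other eigenvalues unchanged. Therefore $W_n^+ = W_n^{2\text{-way}} + P_{00}$. If $G=1$ or $T=1$, then $P_{00}=0$ and the identity still holds trivially.

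For the second displayed form, $P_{00}$ is symmetric and idempotent, so $Y_n'P_{00}Y_n=\|P_{00}Y_n\|^2$. Combining this with the stated formula $(P_{00}y)_{gt}=y_{gt}-\bar y_{\cdot t}-\bar y_{g\cdot}+\bar y_{\cdot\cdot}$ gives the sum of squared two-way-demeaned observations.

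The main obstacle is bookkeeping: one must verify that the Kronecker ordering, $J_T\otimes I_G$ versus $I_T\otimes J_G$, is matched to the correct group and time averages, and that the cross term carries the sign $+J_n/n$. Once the four-projector decomposition is written down, everything else is immediate. Optimality is not claimed in the statement itself; it is inherited from \Cref{thm:eig-truncation}(i) with $\mathcal M_n=\R^n$.
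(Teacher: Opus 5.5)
Your proposal is correct and follows essentially the same route as the paper. You decompose $W_n^{2\text{-way}}$ over the four commuting projectors $\{P_T,Q_T\}\otimes\{P_G,Q_G\}$, read off eigenvalues $T+G-1$, $T-1$, $G-1$, $-1$, zero out the $-1$ block $Q_T\otimes Q_G$, and identify that block with $P_{00}$ by inclusion--exclusion. Your extra remarks fill in details the paper leaves implicit: basis-independence of the truncation, the degenerate case $G=1$ or $T=1$, and $Y_n'P_{00}Y_n=\|P_{00}Y_n\|^2$ for the quadratic-form version.
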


The CGM matrix subtracts $W_n^{G\cap T} = I_n$ to undo the double-counting of cells entering both the row-cluster and column-cluster sums. 
Consequently, on the two-way within subspace (vectors $v$ orthogonal to every row and column mean), we have $W_n^{2\text{-way}}v = -v$, resulting in a negative eigenvalue of $-1$. 
Truncating its eigenvalue to zero adds exactly the projector $P_{00}$ onto that subspace. 
Since the within subspace is precisely where heterogeneous means generate anticonservativeness (Theorem~\ref{thm:conservative}), the correction places nonnegative mass only where the plug-in was deficient.

CGM2 uses weight matrix $W_n^G + W_n^T = A_n^\star + (I_n - P_{00})$. 
Since $I_n - P_{00}\succeq0$, CGM2 is uniformly conservative but non-optimal: it overshoots $A_n^\star$ on the between-group and between-time directions, where no correction is needed. 
This overshoot is the same order as the correction when means carry between-cluster heterogeneity, so there is a finite-sample effect, visible in the simulations of Section~\ref{sec:simulation}, where ET delivers tighter intervals than the conservative C2NW (\Cref{sec:two-estimators}) while retaining size control.

\subsection{Two-way clustering with serial correlation}
\label{sec:chs-optimal}

We now allow cross-cluster serial correlation on the time dimension. 
In the balanced panel ($n = GT$), the Chiang--Hansen--Sasaki (CHS) plug-in is
\[
W_n^{\mathrm{CHS}} = J_T \otimes I_G + C_M \otimes (J_G - I_G),
\]
where $C_M \in \mathbb{R}^{T\times T}$ is the Bartlett kernel matrix,
$(C_M)_{ts} = \omega(|t-s|, M)$ with $\omega(m,M) = \max\{1 - m/(M+1), 0\}$.
This result is stated in \Cref{lem:chs-matrix-form} of \Cref{sec:panel_dep_appendix}. 
Let $P_G := J_G/G$ and $Q_G := I_G - P_G$ be the group-mean and
group-demeaning projectors, and define the indefinite $T\times T$ block
\[
B := J_T - C_M,
\qquad
B^- := (C_M - J_T)_+ \succeq 0,
\]
the negative eigenmass of $B$. Under $\mathcal{M}_n = \mathbb{R}^n$,
$A_n^\star = (W_n^{\mathrm{CHS}})^+$ by Theorem~\ref{thm:eig-truncation}.
Due to \Cref{thm:conservative}, the naive CHS variance estimator can be anticonservative, as exemplified in \Cref{sec:panel_dep_appendix}.

\begin{corollary}[Optimal correction under serial correlation]
\label{cor:chs_optimal}
\[
A_n^\star = (W_n^{\mathrm{CHS}})^+ = W_n^{\mathrm{CHS}} + B^- \otimes Q_G,
\]
equivalently, with $\check Y_{g,t} := Y_{g,t} - \bar Y_{\cdot t}$ the
group-demeaned (cross-sectionally centered) score,
\[
Y_n' A_n^\star Y_n
= Y_n' W_n^{\mathrm{CHS}} Y_n
+ \sum_{g=1}^G \check Y_{g,\cdot}'\, B^-\, \check Y_{g,\cdot}.
\]
\end{corollary}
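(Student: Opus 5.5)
The plan is to block-diagonalize $W_n^{\mathrm{CHS}}$ along the cross-sectional decomposition $\mathbb{R}^G=\range(P_G)\oplus\range(Q_G)$. Once it is split this way, the positive-part operator can be computed separately on each Kronecker block.

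First, substitute $I_G=P_G+Q_G$ and $J_G=G P_G$, so that $J_G-I_G=(G-1)P_G-Q_G$. Expanding gives
\[
W_n^{\mathrm{CHS}}=\bigl(J_T+(G-1)C_M\bigr)\otimes P_G+\bigl(J_T-C_M\bigr)\otimes Q_G = X\otimes P_G + B\otimes Q_G,
\]
with $X:=J_T+(G-1)C_M$.

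Second, use the fact that for symmetric $X,B$ and complementary orthogonal projectors $P_G,Q_G$, the spectral decomposition of $X\otimes P_G+B\otimes Q_G$ comes from the eigenpairs of $X$ and $B$. If $Xu=\lambda u$ and $p\in\range(P_G)$, then $u\otimes p$ is an eigenvector with eigenvalue $\lambda$. Likewise, if $Bw=\beta w$ and $r\in\range(Q_G)$, then $w\otimes r$ is an eigenvector with eigenvalue $\beta$. These vectors span $\mathbb{R}^T\otimes\mathbb{R}^G$ and are mutually orthogonal across the two blocks. Applying $\lambda\mapsto\lambda^+$ eigenvalue by eigenvalue therefore yields
\[
(W_n^{\mathrm{CHS}})^+=X^+\otimes P_G+B^+\otimes Q_G .
\]
This is the step I would write most carefully, because it is where the functional calculus for Kronecker products is invoked. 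It is routine, but it carries the whole argument.

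Third, evaluate the two positive parts. The Bartlett matrix satisfies $C_M\succeq0$, by the Fej\'er-kernel argument of \Cref{sec:newey-west-revisit}. Since $J_T\succeq0$ and $G\ge1$, it follows that $X\succeq0$ and $X^+=X$. For $B$, the identity $B=B^+-B^-$ with $B^-=(-B)^+=(C_M-J_T)_+$ gives $B^+=B+B^-$. Substituting both back gives
\[
(W_n^{\mathrm{CHS}})^+=X\otimes P_G+B\otimes Q_G+B^-\otimes Q_G=W_n^{\mathrm{CHS}}+B^-\otimes Q_G .
\]
That this equals $A_n^\star$ is then just the definition \eqref{eq:A-star}, given $\mathcal M_n=\R^n$ and \Cref{thm:eig-truncation}.

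Finally, translate the quadratic form. Under the identification $\mathbb{R}^n\cong\mathbb{R}^T\otimes\mathbb{R}^G$, write $Y_n=\mathrm{vec}(\mathbf{Y})$ with $\mathbf{Y}$ the $G\times T$ array whose $(g,t)$ entry is $Y_{g,t}$. The mixed-product rule gives $Y_n'(B^-\otimes Q_G)Y_n=\tr\bigl(\mathbf{Y}'Q_G\mathbf{Y}B^-\bigr)$. Because $Q_G$ is an idempotent symmetric projector, $Q_G\mathbf{Y}=\check{\mathbf Y}$ has entries $Y_{g,t}-\bar Y_{\cdot t}$, and $\mathbf{Y}'Q_G\mathbf{Y}=\check{\mathbf Y}'\check{\mathbf Y}$. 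Hence
\[
\tr\bigl(\check{\mathbf Y}'\check{\mathbf Y}B^-\bigr)=\sum_g \check Y_{g,\cdot}'B^-\check Y_{g,\cdot},
\]
where $\check Y_{g,\cdot}$ is the $g$-th row of $\check{\mathbf Y}$. The only care needed here is keeping the vec ordering consistent with the convention $J_T\otimes I_G$ used for $W_n^G$.
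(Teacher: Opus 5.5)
Your proposal is correct and follows the paper's proof essentially step for step. Both write $W_n^{\mathrm{CHS}}=\bigl(J_T+(G-1)C_M\bigr)\otimes P_G+(J_T-C_M)\otimes Q_G$, take positive parts blockwise, note that the first block is PSD, and use $B^+=B+B^-$. Your explicit eigenvector and $\mathrm{vec}$/trace verification of the quadratic form fills in details the paper leaves implicit.
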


Split the cross-section into the group-mean direction $P_G$ and its complement $Q_G$; the CHS weight acts block-diagonally across them. 
On the group-symmetric block the weight is $J_T + (G-1)C_M$, a nonnegative combination of two PSD matrices, hence already conservative---no correction. 
On the group-demeaned block the weight is $J_T - C_M$: the rank-one temporal aggregator $J_T$, concentrated on the constant-in-time direction, is partly canceled by the Fej\'er kernel $C_M$, producing negative eigenvalues. 
Only this block needs repair, and the truncation adds back precisely its negative eigenmass $B^-$, supported on the group-demeaned scores. 
The correction is thus a single $T\times T$ eigendecomposition applied within clusters. 
At $M = 0$, $C_M = I_T$ and $B^- = I_T - J_T/T$ is the time-demeaning projector, so $B^-\otimes Q_G$ reduces to two-way demeaning and we recover \Cref{cor:twoway_optimal}.


\begin{remark}[EVC floors the estimate; ET repairs the estimand]
\label{rem:evc-vs-et}
\citet{chiang2024standard} render their plug-in non-negative through an
eigenvalue correction (EVC) that replaces the negative eigenvalues of the
\emph{realized} meat $\widehat\Omega_{NT}$ by zero. In the notation here this
acts on the data-space realization $Y_n'W_n^{\mathrm{CHS}}Y_n$ (scalar case:
$\max\{Y_n'W_n^{\mathrm{CHS}}Y_n,0\}$), addressing (i) of
\Cref{prop:estimand-vs-estimate}. It does not act on $W_n^{\mathrm{CHS}}$, so it cannot
address (ii). At the plug-in covariance the bias-free part
$\tr(W_n^{\mathrm{CHS}}\Sigma_n)>0$,
so the estimand
$V_{\mathrm{CHS}}=\tr(W_n^{\mathrm{CHS}}\Sigma_n)+\mu_n'W_n^{\mathrm{CHS}}\mu_n$ is positive and
$\widehat\Omega_{NT}\to V_{\mathrm{CHS}}>0$, leaving the correction
asymptotically inactive---yet $V_{\mathrm{CHS}}<V_{\mathrm{true}}$ whenever
$\mu_n'W_n^{\mathrm{CHS}}\mu_n<0$ (the counterexample of
\Cref{sec:panel_dep_appendix}). 
\end{remark}

\section{Inference under Panel Dependence}
\label{sec:application_panel_dep}

This section develops the inferential machinery needed to implement variance estimators in this setting of \Cref{sec:chs-optimal}.

\subsection{Dependence framework}
\label{sec:dependence-framework}

Consider a two-way cluster dependent setting, where one dimension exhibits cross-cluster weak dependence. 
To keep the exposition concrete, suppose we have a panel with both a cross sectional and a time series component. 
The time series component contains time indices $t \in \{1, 2, \cdots, T\}$ for every observation, and observations within the same time period may be arbitrarily correlated. 
The cross sectional component can be partitioned on some dimension $G$, with cluster indices $g \in \{1, 2, \cdots, G\}$ such that every observation within the same $g$ cluster may be arbitrarily correlated. 
Let $i \in \{1, 2, \cdots, n\} =: N_n$ index observations, $g(i)$ denote its cross sectional cluster, and $t(i)$ denote its time period.
For example, in a canonical balanced panel with each cross sectional unit observed in every time period, we have $i = (g, t)$. 
While the setting in this paper permits unbalanced panels, the balanced panel with $G \asymp T$ is used as a running example.

There is a triangular array of multivariate random vectors $\{Y_{n,i}\}, n \geq 1, Y_{n,i} \in \mathbb{R}^v$, where $n$ denotes the number of observations. 
Define the moment of this random vector as $\|Y_{n,i}\|_p = (E[\|Y_{n,i}\|^p])^{1/p}$. 
All observations $i$ may be serially correlated. 
The distance $d_n(i, j)$ between any two observations $i$ and $j$ is $0$ if they share a cluster on any dimension, and the difference in their time index otherwise.

Let $N_t^T$ denote the set of observations in time $t$, $N_g^G$ denote the set of observations in cluster $g$, and $N_{t,g}^{T \cap G} := N_t^T \cap N_g^G$ the set of observations in both time $t$ and cluster $g$. 
Define the following objects to measure how concentrated observations are across clusters and time:
\begin{align*}
\delta_n^\partial(s; k) &:= \frac{1}{n} \sum_{i \in N_n} \left(|N_{t(i)+s}^T| + |N_{t(i)-s}^T| + \mathbf{1}\{s = 0\}(|N_{g(i)}^G| - |N_{t(i)}^T|)\right)^k; \tag{\stepcounter{equation}\theequation} \\
\Delta_n(s, m; k) &:= \frac{1}{n} \sum_{i \in N_n} \left(\sum_{h=s}^m (|N_{t(i)+h}^T| + |N_{t(i)-h}^T| + \mathbf{1}\{h = 0\}(|N_{g(i)}^G| - |N_{t(i)}^T|))\right)^k \tag{\stepcounter{equation}\theequation} \\
c_n(s, m; k) &:= \inf_{\alpha > 1} [\Delta_n(s, m; k\alpha)]^{1/\alpha} [\delta_n^\partial(s; \tfrac{\alpha}{\alpha-1})]^{1 - 1/\alpha}. \tag{\stepcounter{equation}\theequation}
\end{align*}
These objects can be interpreted in the following manner.
$\delta_n^\partial(s; k)$ tells us the average number of observations
that are $s$ periods away, while $\Delta_n(s, m; k)$ tells us the number
of observations between $m$ and $s$ periods away (inclusive), both
raised to the power of $k$. Finally, $c_n(s, m; k)$ gives us an upper
bound on the convolution of $\delta_n^\partial(s; k)$ and
$\Delta_n(s, m; k)$ when H\"older's inequality is applied, which is
interpretable as a measure of how quickly neighborhoods grow and
overlap. In the running example of a balanced panel,
$c_n(s, m; k) \asymp (m - s)^k G^k$.

The rest of this section characterizes $\psi$-dependence, which is the
sense in which observations are weakly dependent across clusters and
across time. Let $\mathcal{L}_{v,a}$ denote the collection of bounded
Lipschitz real functions on $\mathbb{R}^{v \times a}$:
\begin{equation}
\mathcal{L}_{v,a} = \{f: \mathbb{R}^{v \times a} \to \mathbb{R}: \|f\|_\infty < \infty, Lip(f) < \infty\},
\end{equation}
where $Lip(f)$ is the Lipschitz constant of $f$ and $\|\cdot\|_\infty$ is
the sup-norm. For any positive integers $a, b, s$, let
$\mathcal{Q}_n(a, b; s)$ denote the collection of all pairs of
observation sets of sizes $a$ and $b$ at distance at least $s$:
\begin{equation}
\mathcal{Q}_n(a, b; s) = \{(A, B) : A, B \subset N_n, |A| = a, |B| = b, \text{and } d_n(A, B) \geq s\},
\end{equation}
where $d_n(A, B) = \min_{i \in A} \min_{j \in B} d_n(i, j)$.

Inference is conditional on a $\sigma$-field $\mathcal C_n$ that holds fixed the information common to the array---for instance the cluster- and time-level common factors, or, in the design-based reading, the fixed potential outcomes. 
All expectations, variances, and covariances in this section and \Cref{sec:consistency_appendix} are taken conditional on $\mathcal C_n$, and the dependence coefficients $\theta_n$ below are $\mathcal C_n$-measurable.
Write $\mu_n:=\E[Y_n\mid\mathcal C_n]$ and $\Sigma_n:=\Var(Y_n\mid\mathcal C_n)$ for the conditional first and second moments; these are the $\mathcal C_n$-measurable analogues of the $\mu,\Sigma$ of \Cref{sec:framework}.
The dependence coefficients $\{\theta_{n,s}\}_{s\ge0}$ below (notation following KMS) are unrelated to the scalar variance target $\theta_n=\tr(W_n\Sigma)$ of \Cref{sec:framework}, which does not recur in this section.

\begin{definition}[$\psi$-dependence]
A triangular array $\{Y_{n,i}\}, n \geq 1, Y_{n,i} \in \mathbb{R}^v$ is
conditionally $\psi$-dependent if for each $n \in \mathbb{N}$, there
exist a measurable sequence $\theta_n = \{\theta_{n,s}\}_{s \geq 0},
\theta_{n,0} = 1$, and a collection of nonrandom functions
$(\psi_{a,b})_{a,b \in \mathbb{N}}, \psi_{a,b}: \mathcal{L}_{v,a} \times \mathcal{L}_{v,b} \to [0, \infty)$,
such that for all $(A, B) \in \mathcal{Q}_n(a, b; s)$ with $s > 0$ and
all $f \in \mathcal{L}_{v,a}$ and $g \in \mathcal{L}_{v,b}$,
\begin{equation}
|\Cov(f(Y_{n,A}), g(Y_{n,B}) \mid \mathcal C_n)| \leq \psi_{a,b}(f, g)\, \theta_{n,s}
\quad\text{a.s.}
\end{equation}
where $Y_{n,A} := (Y_{n,i})_{i \in A}$. We call the sequence $\theta_n$
the dependence coefficients of $\{Y_{n,i}\}$.
\end{definition}

This $\psi$-dependence setting generalizes the more familiar
strong-mixing processes used in the two-way cluster dependence panel
literature with serial correlation.

The setup here generalizes the CHS setting that requires a representation
where $Y_{gt} = f(\alpha_g, \gamma_t, \varepsilon_{gt})$, for iid
$\alpha_g, \varepsilon_{gt}$ and $\gamma_t$ strictly stationary, which
generalizes a standard Aldous--Hoover representation that additionally
requires $\gamma_t$ iid. The following example is a DGP that satisfies
the conditions of this paper and has homogeneous means but does not
permit the above representation.

\begin{example}
Suppose every firm $g$ is observed for $T$ time periods, each indexed by
$t$. Data is generated from the following process. There is an
innovation sequence $(\nu_{gt})_{t \in \mathbb{Z}}$ iid $N(0, 1)$ over
$g, t$. For some $|\rho| \in (0, 1)$, $D_{gt} = \rho D_{gt-1} + \nu_{gt}$,
$t \in \mathbb{Z}$, with $D_{gt}$ stationary and processes independent
across $g$. Hence, within a firm, $\{D_{gt}\}_t$ is an AR(1) Markov chain
with parameter $\rho$, and across firms, $\{D_{gt}\}_g$ are independent
copies of this AR(1) process.

Such a process does not permit the CHS representation. Suppose to a
contradiction that $D_{gt}$ admits a representation where
$D_{gt} = f(\alpha_g, \gamma_t, \varepsilon_{gt})$ for iid
$\alpha_g, \varepsilon_{gt}$, and strictly stationary $\gamma_t$. In the
DGP, $\gamma_t = 0$, because if $\gamma_t$ is stochastic, then there must
be dependence across $g$ units, which is ruled out in this DGP. With the
resulting representation $D_{gt} = f(\alpha_g, 0, \varepsilon_{gt})$, it
follows that $\Cov(D_{gt}, D_{gs_1}) = \Cov(D_{gt}, D_{gs_2})$ for
$t \neq s_1 \neq s_2 \neq t$. However, this is untrue in the DGP as
$\Cov(D_{g,t}, D_{g,t-1}) = \rho/(1-\rho^2) \neq \rho^2/(1-\rho^2) = \Cov(D_{g,t}, D_{g,t-2})$.
\end{example}

\subsection{Central limit theorem}
\label{sec:clt}

The results are stated in a setting where
$\sum_{i \in N_n} E[Y_{n,i}] = 0$, and we are interested in constructing
the variance of $\sum_{i \in N_n} Y_{n,i}$. With
$V_{\mathrm{true}} := Var(\sum_{i \in N_n} Y_{n,i})$, let
$\lambda_n := \lambda_{\min}(V_{\mathrm{true}})$ denote its smallest
eigenvalue.

\begin{assumption}
\label{asmp:psi_dependence}
The triangular array $\{Y_{n,i}\}$ is $\psi$-dependent with the
dependence coefficients $\{\theta_n\}$ satisfying:
(a) For some constant $C > 0$,
$\psi_{a,b}(f, g) \leq C \times ab(\|f\|_\infty + Lip(f))(\|g\|_\infty + Lip(g))$;
(b) $\sup_{n \geq 1} \max_{s \geq 1} \theta_{n,s} < \infty$;
(c) For some $p > 4$, $\sup_{n \geq 1} \max_{i \in N_n} \|Y_{n,i}\|_p < \infty$ a.s.
\end{assumption}

\begin{assumption}
\label{asmp:clt}
There exists a positive sequence $m_n \to \infty$ such that for
$k = 1, 2$, and the $p$ in \Cref{asmp:psi_dependence},
\begin{enumerate}[label=(\alph*)]
\item $\frac{n}{\lambda_n^{1+k/2}} \sum_{s \geq 0} c_n(s, m_n; k) \theta_{n,s}^{1 - (2+k)/p} \to 0$; and
\item $\frac{n^2 \theta_{n,m_n}^{1-(1/p)}}{\lambda_n^{1/2}} \to 0$.
\end{enumerate}
\end{assumption}

These conditions balance the dependence strength, neighborhood growth,
and variance growth. In an AR(1) process with positive coefficient
$\rho$, $\lambda_n \asymp T$, and both conditions are satisfied with
$m_n = T^{1/3}$ as $T \to \infty$. In the running balanced panel example
with common random effects within clusters and an AR(1) process in the
common time effects, $\lambda_n \asymp T^2 G$, and again
$m_n = T^{1/3}$ suffices.

\begin{theorem}
\label{thm:clt}
Suppose Assumptions \ref{asmp:psi_dependence} and \ref{asmp:clt} hold
a.s. Let $\nu$ be a nonstochastic vector with
$\dim(\nu) = \dim(Y_{n,i})$ and $\|\nu\| = 1$. Then, for
$S_n := \sum_{i \in N_n} \nu'(Y_{n,i} - E[Y_{n,i}])$ and
$\sigma_n^2 := Var(S_n)$,
\begin{equation}
\sup_{t \in \mathbb{R}} \left| \Pr\left\{\frac{S_n}{\sigma_n} \leq t\right\} - \Phi(t) \right| \xrightarrow{a.s.} 0,
\end{equation}
where $\Phi$ denotes the distribution function of $N(0, 1)$.
\end{theorem}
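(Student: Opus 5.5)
The plan is to derive \Cref{thm:clt} from the central limit theorem for $\psi$-dependent random fields of \citet{kojevnikov2021limit} (their Theorem 3.2), which is stated for arrays indexed by a metric space with exactly our distance-based neighborhood quantities. We take the scalar array $X_{n,i}:=\nu'(Y_{n,i}-E[Y_{n,i}\mid\mathcal C_n])$ and work conditionally on $\mathcal C_n$ throughout, so every probability is a regular conditional probability and each step holds on a probability-one event.

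\textbf{Step 1: inheritance.} First I will check that $\{X_{n,i}\}$ is conditionally $\psi$-dependent with the same coefficients $\theta_{n,s}$ and a $\psi$ obeying \Cref{asmp:psi_dependence}(a). If $f\in\mathcal L_{1,a}$, then $\tilde f(y_A):=f\bigl((\nu'(y_i-\mu_i))_{i\in A}\bigr)$ belongs to $\mathcal L_{v,a}$. It has the same sup norm, and since $\|\nu\|=1$ its Lipschitz constant is at most $Lip(f)$. The centering $\mu_i$ is $\mathcal C_n$-measurable, so it is a constant under the conditional law. The covariance bound therefore transfers directly. Moments follow from $|X_{n,i}|\le\|Y_{n,i}\|+\|E[Y_{n,i}\mid\mathcal C_n]\|$ and Jensen, which keeps $\sup_{n,i}\|X_{n,i}\|_p<\infty$ for the same $p>4$. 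Uniform boundedness of the coefficients is \Cref{asmp:psi_dependence}(b).

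\textbf{Step 2: neighborhood conditions.} Next I will translate the KMS neighborhood counts to our metric $d_n$. The KMS boundary set $N_n^\partial(i;s)=\{j:d_n(i,j)=s\}$ has, for $s\ge1$, cardinality at most $|N^T_{t(i)+s}|+|N^T_{t(i)-s}|$. For $s=0$ it is the union of $N^G_{g(i)}$ and $N^T_{t(i)}$, which I will bound by $|N^G_{g(i)}|+|N^T_{t(i)}|$; this is what the $\mathbf 1\{s=0\}$ term in $\delta_n^\partial$ is designed to capture, and I will verify the bookkeeping there. The ball $N_n(i;m)$ minus $N_n(i;s-1)$ is the union over $h=s,\dots,m$ of these shells. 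Hence the KMS quantities $\delta_n^\partial$, $\Delta_n$ and $c_n$ are dominated by ours. The KMS rate conditions, with $\sigma_n^2$ in place of the normalization, then follow from \Cref{asmp:clt}(a)--(b). The only remaining point is that $\sigma_n^2=\nu'V_{\mathrm{true}}\nu\ge\lambda_n$, so replacing $\sigma_n$ by $\lambda_n^{1/2}$ only strengthens the conditions.

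\textbf{Step 3: conclusion.} KMS Theorem 3.2 gives $S_n/\sigma_n\Rightarrow N(0,1)$ conditionally, almost surely. Because $\Phi$ is continuous, Pólya's theorem upgrades this weak convergence to uniform convergence of the distribution functions, which is the sup statement in \Cref{thm:clt}.

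\textbf{Main obstacle.} I expect the difficulty to be the conditional, almost-sure formulation rather than any new estimate. KMS is stated unconditionally, so I must verify that every step of its proof, including the Stein/characteristic-function argument and the moment inequalities it uses, goes through under the regular conditional distribution given $\mathcal C_n$. I also need the exceptional null sets to be countable in number, one per $n$ and per rational $t$, so that their union is still null. My first approach is to apply the unconditional theorem pathwise to the conditional law $P(\cdot\mid\mathcal C_n)(\omega)$ for each fixed $\omega$ in the probability-one set where the assumptions hold. A secondary check is that the $s=0$ term is handled correctly, since KMS define $d$-neighborhoods at distance zero through $\theta_{n,0}=1$.
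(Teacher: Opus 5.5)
Your reduction is the same as the paper's. You project to the centred scalar array and inherit $\psi$-dependence and the conditional $L^p$ bound; your Lipschitz check using $\|\nu\|=1$ is more careful than the paper's one line. You then use $\sigma_n^2=\nu'V_{\mathrm{true}}\nu\ge\lambda_n$ to pass from \Cref{asmp:clt} to the $\sigma_n$-normalized conditions, and show that the paper's neighbourhood quantities dominate those of \citet{kojevnikov2021limit} (KMS). The boundary-set bounds are fine; your ``at most'' for $s\ge1$ is more accurate than the paper's equality.

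The gap is the domination of $\Delta_n$, and the paper's proof makes the same mistake. KMS use
\[
\Delta_n^{\mathrm{KMS}}(s,m;k)=\frac1n\sum_{i\in N_n}\max_{j\in N_n^\partial(i;s)}\bigl|N_n(i;m)\setminus N_n(j;s-1)\bigr|^k ,
\]
so the excised ball is centred at the point $j$ at distance $s$, not at $i$. Their quadruple count $|\mathcal H_n(s,m)|\le 4n\,c_n(s,m;2)$ requires this. In a quadruple $(i,j,k,l)$ with $j\in N_n(i;m)$, $l\in N_n(k;m)$ and separation $s=d_n(i,k)$, one needs $j\in N_n(i;m)\setminus N_n(k;s-1)$. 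With your $N_n(i;s-1)$ instead, the set is empty for $s>m$, and the count would wrongly exclude quadruples such as $(i,i,k,k)$ at separation $s>m$.

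Under $d_n$, suppose $t(j)=t(i)+s$. Then $N_n(j;s-1)$ consists only of the clusters of $j$ and periods $t(i)+1,\dots,t(i)+2s-1$. So $N_n(i;m)\setminus N_n(j;s-1)$ keeps every unit outside cluster $g(j)$ in periods $t(i)-m,\dots,t(i)$, together with most of cluster $g(i)$. In the balanced panel that is of order $mG+T$ units for most $i$ and every $s\ge1$, including $s>m$, where the paper's $\Delta_n(s,m;k)$ and $c_n(s,m;k)$ are zero. Hence KMS's $c_n$ is not bounded by the paper's, and \Cref{asmp:clt}(a) as stated does not deliver KMS's Assumption 3.4(a). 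The discarded terms are the covariances between distant neighbourhood products in the Stein decomposition, which KMS control only through the decay of $\theta_{n,s}$.

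The repair is cheap. You have $|N_n(i;m)\setminus N_n(j;s-1)|\le|N_n(i;m)|\le|N^G_{g(i)}|+\sum_{|h|\le m}|N^T_{t(i)+h}|$, which is exactly the inner sum of the paper's $\Delta_n(0,m;k)$. So $\Delta_n^{\mathrm{KMS}}(s,m;k)\le\Delta_n(0,m;k)$ for every $s$. Your reduction then goes through once \Cref{asmp:clt}(a) is imposed with $c_n(s,m_n;k)$ replaced by $\inf_{\alpha>1}[\Delta_n(0,m_n;k\alpha)]^{1/\alpha}[\delta_n^\partial(s;\tfrac{\alpha}{\alpha-1})]^{1-1/\alpha}$, summed over all $s\ge0$. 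In the paper's balanced running example ($G\asymp T$, $\lambda_n\asymp T^2G$, geometric $\theta_{n,s}$), this modified condition is of order $m_n^kT^{-k/2}$, so $m_n=T^{1/3}$ still works.

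Separately, the obstacle you flag does not arise. KMS is already a conditional theory: conditional $\psi$-dependence given $\mathcal C_n$, with $\mathcal C_n$-measurable coefficients and conditional moment bounds. Their Theorem 3.2 concludes $\sup_t|\Pr\{S_n/\sigma_n\le t\mid\mathcal C_n\}-\Phi(t)|\to0$ a.s.\ directly, which is why the paper simply cites it; you need neither a pathwise argument nor P\'olya's theorem. Had you gone pathwise, null sets indexed by $n$ and rational $t$ would not suffice anyway. The covariance inequality holds a.s.\ separately for each of uncountably many pairs $(f,g)$, so you would also need a separability argument.
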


This result suffices for a uniform multivariate central limit theorem
due to the Cramer--Wold device.

\subsection{Consistency}
\label{sec:consistency}

By Corollary~\ref{cor:chs_optimal}, the optimal variance estimator is the exact
eigenvalue truncation $(W_n^{\mathrm{CHS}})^+ = W_n^{\mathrm{CHS}} + B^-\otimes Q_G$,
with $B^- = (C_M - J_T)_+$ and $Q_G = I_G - J_G/G$. In estimator form,
\begin{equation}
\widehat V_{\mathrm{ET}}^\star
:= \widehat V_{\mathrm{CHS}}
+ \sum_{g=1}^G \check Y_{n,(g,\cdot)}'\, B^-\, \check Y_{n,(g,\cdot)},
\qquad \check Y_{n,(g,t)} := Y_{n,(g,t)} - \bar Y_{\cdot t},
\label{eq:vet-estimator}
\end{equation}
where $\check Y_{n,(g,t)}$ is the score demeaned across groups within period $t$
and $\check Y_{n,(g,\cdot)} := (\check Y_{n,(g,t)})_{t=1}^T$. Equivalently
$\widehat V_{\mathrm{ET}}^\star = Y_n'(W_n^{\mathrm{CHS}} + B^-\otimes Q_G)Y_n$. The
correction is the within-group HAC of the cross-sectionally centered scores, using
the exact negative eigenmass $B^-$ of the indefinite temporal block $B = J_T - C_M$
as kernel; computing it is a single $T\times T$ eigendecomposition applied within
clusters. The corresponding estimand is
\begin{equation}
V_{\mathrm{ET}}^\star := \E[\widehat V_{\mathrm{ET}}^\star\mid\mathcal C_n]
= V_{\mathrm{CHS}}
+ \sum_{g=1}^G \E\big[\check Y_{n,(g,\cdot)}'\,B^-\,\check Y_{n,(g,\cdot)}\mid\mathcal C_n\big].
\label{eq:vet-estimand}
\end{equation}

Using $y_t := \sum_{i \in N_t^T} Y_{n,i}$, we compare the estimand to the kernel-adjusted variance
\begin{align}
V_{\mathrm{adj}} &:= \sum_{i \in N_n} \sum_{j \in N_{g(i)}^G} \Cov(Y_{n,i}, Y_{n,j})
 + \sum_{i \in N_n} \sum_{j \in N_{t(i)}^T} \Cov(Y_{n,i}, Y_{n,j})
 - \sum_{i \in N_n} \sum_{j \in N_{t(i),g(i)}^{T \cap G}} \Cov(Y_{n,i}, Y_{n,j}) \notag\\
 & \quad + \sum_{m=1}^M \sum_{t=1}^{T-m} \omega(m, M)\,\big[\Cov(y_t, y_{t+m}) + \Cov(y_{t+m}, y_t)\big],
\label{eq:vadj}
\end{align}
which will be shown to be asymptotically equivalent to $V_{\mathrm{true}}$. 
This $V_{\mathrm{adj}}$ is $\operatorname{tr}(W_n^{\mathrm{CHS}}\Sigma_n)$, the truncated kernel covariance: the variance part of the CHS plug-in ($\mathbb E[\widehat V_{\mathrm{CHS}}]=V_{\mathrm{adj}}+\mu_n'W_n^{\mathrm{CHS}}\mu_n$), and the quantity $\widehat V_{\mathrm{ET}}^\star$ targets.

We state the assumptions for the consistency theorem.
\begin{assumption}
\label{asmp:vcon_converge}
$\frac{n}{\lambda_n^2} \sum_{s \geq 0} c_n(s, M; 2)\, \theta_{n,s}^{1 - 4/p} \to 0$.
\end{assumption}
This assumption mimics Assumption 4.1(iii) in KMS. In the balanced panel the left-hand
side is $\asymp M^2/T^2$, so the condition reduces to $M=o(T)$; the choice
$M\asymp T^{1/3}$ used in the simulations and application satisfies it.

\begin{assumption}
\label{asmp:vadj_to_vtrue}
The following hold almost surely:
\begin{enumerate}[label=(\alph*)]
\item $\frac{ n}{\lambda_n} \sum_{m=1}^M |\omega(m, M) - 1| \delta_n^\partial(m; 1) \theta_{n,m}^{1 - 2/p} \to 0$;
\item $\frac{1}{\lambda_n} \sum_{m \geq 1} \sum_{t=1}^{T-m} \sum_{g=1}^G |N_{t,g}^{T \cap G}| |N_{t+m,g}^{T \cap G}| \theta_{n,m}^{1 - 2/p} \to 0$;
\item $\frac{1}{\lambda_n} \sum_{m=M+1}^{T-1} \sum_{t=1}^{T-m} |N_t^T| |N_{t+m}^T| \theta_{n,m}^{1 - 2/p} \to 0$.
\end{enumerate}
\end{assumption}
In the running AR(1) examples with $\rho \in (0, 1)$ and the triangular kernel, all
three conditions are implied by $\rho^m$ decay and the balanced panel structure
(where $|N_{t,g}^{T \cap G}|=1$).

\begin{theorem}[Consistency of the eigenvalue-truncation estimator]
\label{thm:panel_dep_consistent_main}
The following hold.
\begin{enumerate}[label=(\alph*)]
\item $V_{\mathrm{ET}}^\star - V_{\mathrm{adj}}$ is positive semidefinite.
\item Under Assumptions~\ref{asmp:psi_dependence} and~\ref{asmp:vcon_converge}, and
provided $\lambda_{\min}(V_{\mathrm{ET}}^\star)/\lambda_n \geq 1 + o(1)$ and
$Mn/\lambda_n\to0$,
\[
(V_{\mathrm{ET}}^\star)^{-1}\,\widehat V_{\mathrm{ET}}^\star \xrightarrow{p} I_v.
\]
\item Under Assumptions~\ref{asmp:psi_dependence} and~\ref{asmp:vadj_to_vtrue},
$V_{\mathrm{true}}^{-1} V_{\mathrm{adj}} \to I_v$.
\end{enumerate}
\end{theorem}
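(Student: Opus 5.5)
Part (a) is an algebraic consequence of the eigenvalue truncation, so I will handle it first. By \eqref{eq:vet-estimand},
\[
V_{\mathrm{ET}}^\star=\tr\big((W_n^{\mathrm{CHS}})^+\Sigma_n\big)+\mu_n'(W_n^{\mathrm{CHS}})^+\mu_n,
\]
applied componentwise in the vector case. The identity $V_{\mathrm{adj}}=\tr(W_n^{\mathrm{CHS}}\Sigma_n)$ was noted after \eqref{eq:vadj}, and Corollary~\ref{cor:chs_optimal} gives $(W_n^{\mathrm{CHS}})^+=W_n^{\mathrm{CHS}}+B^-\otimes Q_G$. Hence
\[
V_{\mathrm{ET}}^\star-V_{\mathrm{adj}}=\tr\big((B^-\otimes Q_G)\Sigma_n\big)+\mu_n'(W_n^{\mathrm{CHS}})^+\mu_n .
\]
In the multivariate case each term is a sum of $v\times v$ blocks of the form $E[\check Y' (B^-\otimes Q_G)\check Y]$ and $\mu' A\mu$, with $A\succeq0$. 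For any $a\in\R^v$, the quadratic form $a'(\cdot)a$ equals the scalar version applied to $a'Y_{n,i}$. Each scalar term is nonnegative, because $B^-\otimes Q_G\succeq0$ and $(W_n^{\mathrm{CHS}})^+\succeq0$, and because $\tr(C\Sigma)\ge0$ whenever $C,\Sigma\succeq0$. This is exactly (C1)--(C2) of Theorem~\ref{thm:conservative} with $\mathcal S_n=\mathrm{Sym}_n^+$.

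For (b), I will write $\widehat V_{\mathrm{ET}}^\star-V_{\mathrm{ET}}^\star$ as the centered quadratic form $Y_n'A_n^\star Y_n-E[Y_n'A_n^\star Y_n\mid\mathcal C_n]$ and bound its second moment, reducing to the appendix result \Cref{lem:quadratic_LLN}. Decompose $Y_n=\mu_n+Z_n$. The centered form then splits into a pure quadratic term $Z_n'A^\star Z_n-\tr(A^\star\Sigma_n)$ and a linear cross term $2\mu_n'A^\star Z_n$.

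\begin{enumerate}
\item For the quadratic part, I will follow the KMS variance-estimator argument (their Proposition 4.1). Expand the variance into fourth-order covariances $\Cov(Y_iY_j,Y_kY_l)$. Pairs at distance $0$ are weighted by entries bounded in magnitude, because the entries of $C_M$ and $B^-$ are bounded: $\|B^-\|_{\max}\le\|C_M-J_T\|_{op}$, which is at most of order $M$. I expect to need the finer fact that $B^-$ is close to banded with bandwidth $M$. The other pairs are controlled by $\psi$-dependence via covariance inequalities for products of $p>4$ moment variables (KMS Lemma A.2). This yields a bound of order $n\sum_s c_n(s,M;2)\theta_{n,s}^{1-4/p}$, which is $o(\lambda_n^2)$ by Assumption~\ref{asmp:vcon_converge}.
\item For the linear part, the variance $4\mu_n'A^\star\Sigma_nA^\star\mu_n$ is bounded using $|\mu_{n,i}|$ bounded and row sums of $A^\star$ of order $Mn$. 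This gives a bound of order $Mn\lambda_n$-type, and $Mn/\lambda_n\to0$ makes it $o(\lambda_n^2)$.
\end{enumerate}

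Chebyshev's inequality then gives $(\widehat V_{\mathrm{ET}}^\star-V_{\mathrm{ET}}^\star)/\lambda_n\to_p0$. The condition $\lambda_{\min}(V_{\mathrm{ET}}^\star)\ge\lambda_n(1+o(1))$ lets me invert, giving $(V_{\mathrm{ET}}^\star)^{-1}\widehat V_{\mathrm{ET}}^\star\to_p I_v$. \textbf{The main obstacle} is controlling the entries and effective bandwidth of $B^-$, which is defined spectrally rather than by a kernel. My first attempt will be to write $B^-=(C_M-J_T)_+$, note that $J_T$ is rank one, and argue that $B^-$ differs from $C_M$ by a low-rank correction, specifically the projection onto $\mathbf 1_T$ and its complement. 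The correction would then contribute only a term bounded by the between-time variance.

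For (c), I will compute $V_{\mathrm{true}}-V_{\mathrm{adj}}$ directly. Write $V_{\mathrm{true}}=\sum_{i,j}\Cov(Y_i,Y_j)$ and subtract \eqref{eq:vadj}. The remainder consists of three pieces:
\begin{enumerate}
\item the kernel-downweighting error $(1-\omega(m,M))$ on lags $m\le M$ across different groups;
\item the double-counting of same-group, different-time pairs already in the $G$-cluster sum versus the $y_t$ autocovariances;
\item the omitted lags $m>M$.
\end{enumerate}
Each pair involved is at distance at least $m\ge1$, so $|\Cov(Y_i,Y_j)|\le C\theta_{n,m}^{1-2/p}$ by the $\psi$-dependence covariance inequality. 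Summing these bounds yields exactly the left-hand sides of Assumption~\ref{asmp:vadj_to_vtrue}(a)--(c), each $o(\lambda_n)$. Dividing by $\lambda_n\le\lambda_{\min}(V_{\mathrm{true}})$ gives $V_{\mathrm{true}}^{-1}V_{\mathrm{adj}}\to I_v$.
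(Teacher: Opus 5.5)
Your parts (a) and (c) follow the paper's proof. Part (a) uses the same decomposition $V_{\mathrm{ET}}^\star-V_{\mathrm{adj}}=\tr((B^-\otimes Q_G)\Sigma_n)+\mu_n'(W_n^{\mathrm{CHS}})^+\mu_n$, with both terms nonnegative. Part (c) uses the same three-piece remainder, bounded term by term against \Cref{asmp:vadj_to_vtrue}(a)--(c).

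Part (b) has a genuine gap, and it is exactly the obstacle you flag but do not resolve. Running the KMS/\Cref{lem:quadratic_LLN} variance bound on $A_n^\star$ itself cannot close under \Cref{asmp:vcon_converge}. The correction $B^-\otimes Q_G$ is globally supported in time, so there is no bandwidth $b_n$ of order $M$. And with only an $O(M)$ bound on its entries, the variance bound picks up an extra factor $M^2$. Your entry bound is also misstated: looking along $\mathbf 1_T$ gives $\|C_M-J_T\|_{\mathrm{op}}\ge T-M-1$. The correct fact is $\lambda_{\max}(B^-)=(\lambda_{\max}(C_M-J_T))_+\le\lambda_{\max}(C_M)\le M+1$. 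The low-rank idea does not rescue the direct route either. You do get $C_M-B^-=J_T-B^+$ with $B^+=(J_T-C_M)_+$ of rank at most one. But each piece is individually large: for example, $Y_n'(J_T\otimes Q_G)Y_n=\sum_g(\sum_t\check Y_{g,t})^2$ is of the same order as $\lambda_n$ in the running example with group effects. Bounding the pieces separately "by the between-time variance" therefore cannot make them negligible. Only their near-cancellation does, and your sketch gives no mechanism for it.

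The missing step is the paper's comparison with a band-limited proxy. Set $A_n^{\mathrm{band}}=W_n^{\mathrm{CHS}}+C_M\otimes Q_G$. It has bounded entries and bandwidth $M$, so \Cref{lem:quadratic_LLN} under \Cref{asmp:vcon_converge} gives $\lambda_n^{-1}(\widehat V_{\mathrm{band}}-V_{\mathrm{band}})\xrightarrow{p}0$. The remainder $D=A_n^{\mathrm{band}}-A_n^\star=(C_M-B^-)\otimes Q_G$ satisfies $-(M+1)I\preceq C_M-B^-\preceq(M+1)I$. This holds because $C_M$ and $B^-$ are both positive semidefinite with top eigenvalue at most $M+1$. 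Hence $|Y_n'DY_n|\le(M+1)\|Y_n\|^2$, whose conditional mean is $O(Mn)$ by the moment bound. So both $Y_n'DY_n$ and its conditional expectation are $O_p(Mn)=o_p(\lambda_n)$, with no concentration or bandedness argument for $B^-$ needed. This is where the hypothesis $Mn/\lambda_n\to0$ does its work.

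In your plan that hypothesis is spent instead on the cross term $2\mu_n'A_n^\star Z_n$, but the $\mu/Z$ split is unnecessary. \Cref{lem:quadratic_LLN} centers the raw products $Y_{n,i}Y_{n,j}$, and \Cref{asmp:psi_dependence}(c) bounds uncentered moments, so the means are absorbed automatically. Your row-sum bound for that term is also not justified as written.
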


Observe that (c) implies $\lambda_{\min}(V_{\mathrm{adj}})/\lambda_n = 1 + o(1)$, so by
(a) $\lambda_{\min}(V_{\mathrm{ET}}^\star)/\lambda_n \geq 1 + o(1)$. The condition
$Mn/\lambda_n\to0$ in (b) is not an additional restriction: in the balanced panel
$n=GT$ and $\lambda_n\asymp T^2G$, so $Mn/\lambda_n = M/T$, which is exactly what
Assumption~\ref{asmp:vcon_converge} reduces to ($M=o(T)$, satisfied by
$M\asymp T^{1/3}$). Both auxiliary conditions in (b) are therefore automatically
satisfied under the assumptions.

\begin{corollary}[Asymptotically valid inference]
\label{cor:coverage}
Let $\nu\in\mathbb R^v$ with $\|\nu\|=1$, and suppose the conditions of
\Cref{thm:clt} and of \Cref{thm:panel_dep_consistent_main} hold. Then, for
$S_n=\sum_{i\in N_n}\nu'(Y_{n,i}-\E[Y_{n,i}])$ and any $\alpha\in(0,1)$,
\[
\limsup_{n\to\infty}\ \Pr\left\{
\frac{|S_n|}{\sqrt{\nu'\widehat V_{\mathrm{ET}}^\star\nu}}
> z_{1-\alpha/2}\right\}\ \le\ \alpha ,
\]
with equality if and only if $\nu'(V_{\mathrm{ET}}^\star-V_{\mathrm{true}})\nu=o\!\big(\nu'V_{\mathrm{true}}\nu\big)$.

\end{corollary}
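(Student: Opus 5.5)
The proof combines the Berry--Esseen-type convergence of \Cref{thm:clt} with the ratio consistency and ordering results of \Cref{thm:panel_dep_consistent_main}, via Slutsky. Set $\sigma_n^2=\nu'V_{\mathrm{true}}\nu$ and $\tau_n^2=\nu'V_{\mathrm{ET}}^\star\nu$. Write
\[
\frac{|S_n|}{\sqrt{\nu'\widehat V_{\mathrm{ET}}^\star\nu}}
=\frac{|S_n|}{\sigma_n}\cdot\frac{\sigma_n}{\tau_n}\cdot\Bigl(\frac{\tau_n^2}{\nu'\widehat V_{\mathrm{ET}}^\star\nu}\Bigr)^{1/2}.
\]
\Cref{thm:clt} gives $S_n/\sigma_n\xrightarrow{d}N(0,1)$, uniformly in $t$. \Cref{thm:panel_dep_consistent_main}(b) gives $(V_{\mathrm{ET}}^\star)^{-1}\widehat V_{\mathrm{ET}}^\star\xrightarrow{p}I_v$, hence $\nu'\widehat V_{\mathrm{ET}}^\star\nu/\tau_n^2\xrightarrow{p}1$. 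Deriving this scalar statement needs a short step: write $\nu'\widehat V\nu-\tau_n^2=\nu'V(V^{-1}\widehat V-I)\nu$ and bound it by $\|V\nu\|\,\|V^{-1}\widehat V-I\|$. The ratio $\|V\nu\|/\nu'V\nu$ must be controlled using $\lambda_{\min}(V_{\mathrm{ET}}^\star)/\lambda_n\ge1+o(1)$ together with a comparable upper bound on the eigenvalues. In the scalar case $v=1$ this step is trivial, and I expect the intended reading is essentially scalar or with well-conditioned matrices, so I will state the condition-number bound explicitly where it is needed.

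Next, the ratio $r_n:=\sigma_n/\tau_n$ satisfies $\limsup r_n\le1$. By \Cref{thm:panel_dep_consistent_main}(a), $\tau_n^2\ge\nu'V_{\mathrm{adj}}\nu$. By part (c), $V_{\mathrm{true}}^{-1}V_{\mathrm{adj}}\to I_v$, so $\nu'V_{\mathrm{adj}}\nu/\sigma_n^2\to1$ (same scalar reduction as above). Hence $\tau_n^2\ge\sigma_n^2(1+o(1))$. Combining, the statistic equals $r_n|Z_n|(1+o_p(1))$ with $Z_n\xrightarrow{d}N(0,1)$. For any subsequence along which $r_n\to r\le1$, Slutsky and continuity of the normal cdf give rejection probability $\to2(1-\Phi(z_{1-\alpha/2}/r))\le\alpha$ (with $r=0$ giving $0$). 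Taking the limsup over subsequences yields the inequality, and the uniformity in \Cref{thm:clt} lets this pass through without continuity-point issues.

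For the equality clause, equality holds along the full sequence iff every subsequential limit $r=1$, i.e.\ $\tau_n^2/\sigma_n^2\to1$, i.e.\ $\nu'(V_{\mathrm{ET}}^\star-V_{\mathrm{true}})\nu=o(\sigma_n^2)$. The map $r\mapsto2(1-\Phi(z/r))$ is strictly increasing, so any subsequence with $r<1$ gives a strictly smaller limit. I read ``equality'' as the limit existing and equalling $\alpha$, and will say so explicitly, since with a mere $\limsup$ the ``only if'' direction needs all subsequences. The main obstacle is the matrix-to-scalar reduction in the first and second steps when $v>1$; the rest is routine Slutsky.
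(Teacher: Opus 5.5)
Your argument is the paper's proof. It combines \Cref{thm:clt} for $|S_n|/\sigma_n$, parts (a) and (c) of \Cref{thm:panel_dep_consistent_main} to get $\phi_n:=\sigma_n^2/\tau_n^2\le 1+o(1)$ (this is your $r_n^2$), and part (b) for $\nu'\widehat V_{\mathrm{ET}}^\star\nu/\tau_n^2\xrightarrow{p}1$. The paper closes the inequality with an $\varepsilon$-sandwich rather than subsequences, and that difference is immaterial. Two remarks follow; the first removes the hypothesis you planned to add.

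\textbf{The matrix-to-scalar step needs no condition-number bound.} Adding one would prove less than the corollary claims. Let $V:=V_{\mathrm{ET}}^\star$, which is symmetric and eventually positive definite since $\lambda_{\min}(V)\ge\lambda_n(1+o(1))$. Let $E:=V^{-1}\widehat V_{\mathrm{ET}}^\star-I_v\xrightarrow{p}0$. Because $\widehat V_{\mathrm{ET}}^\star-V$ is symmetric, $M:=V^{-1/2}(\widehat V_{\mathrm{ET}}^\star-V)V^{-1/2}$ is symmetric and similar to $E=V^{-1/2}MV^{1/2}$. Hence $\|M\|_{\mathrm{op}}=\rho(E)\le\|E\|_{\mathrm{op}}$, and with $w:=V^{1/2}\nu$,
\[
\bigl|\nu'\widehat V_{\mathrm{ET}}^\star\nu-\nu'V\nu\bigr|=|w'Mw|\le\|E\|_{\mathrm{op}}\,\nu'V\nu .
\]
So $\nu'\widehat V_{\mathrm{ET}}^\star\nu/\nu'V\nu\xrightarrow{p}1$ however ill-conditioned $V$ is. Your bound through $\|V\nu\|$ loses exactly the factor that this congruence avoids. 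The same argument with $V_{\mathrm{true}}$ and $V_{\mathrm{adj}}$ turns part (c) into $\nu'V_{\mathrm{adj}}\nu/\sigma_n^2\to1$.

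The paper uses these scalar forms without comment. Its proofs of (b) and (c) in fact deliver $\lambda_n^{-1}\|\widehat V_{\mathrm{ET}}^\star-V\|_{\mathrm{op}}\xrightarrow{p}0$ and $\lambda_n^{-1}\nu'(V_{\mathrm{true}}-V_{\mathrm{adj}})\nu\to0$. Together with $\nu'V\nu\ge\lambda_n(1+o(1))$ and $\sigma_n^2\ge\lambda_n$, these also suffice.

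\textbf{Your caution on the equality clause is warranted, and your version is sharper than the paper's.} The paper's proof asserts that the $\limsup$ attains $\alpha$ if and only if $\phi_n\to1$, but only the ``if'' direction holds. The $\limsup$ equals $\alpha$ exactly when $\limsup_n\phi_n=1$. Suppose $\phi_n\to1$ along even $n$ and $\phi_n\to1/4$ along odd $n$; nothing in the model prevents mean heterogeneity from switching on and off along the triangular array. Then the $\limsup$ is $\alpha$, yet $\nu'(V_{\mathrm{ET}}^\star-V_{\mathrm{true}})\nu\neq o(\nu'V_{\mathrm{true}}\nu)$. If equality is read as convergence of the rejection probability to $\alpha$ (equivalently $\liminf=\alpha$), the stated equivalence holds. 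Your subsequence argument, using strict monotonicity of $r\mapsto2\Phi(-z_{1-\alpha/2}/r)$, proves it.
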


\subsection{Extension to regression}

While the theoretical exposition focuses on means, the extension to the method of moments (which includes standard regressions) is straightforward. 
With a regression equation $Y_i = X_i' \beta + u_i$, and array $\{(X_i', u_i)\}$ satisfying $\psi$-dependence, as long as $\frac{1}{n} \sum_{i \in N_n} X_i X_i'$ has a minimum eigenvalue bounded away from zero, and $\{X_i u_i\}$ and $\{\mathrm{vec}(X_i X_i')\}$ satisfy the conditions of \Cref{thm:clt}, the OLS estimator $\widehat{\beta}$ is asymptotically normal and consistent. 
The variance of the coefficient estimator is computed by applying $\widehat{V}^\star_{\mathrm{ET}}$ to the estimated score $X_i \widehat{u}_i$ in place of $Y_{n,i}$. 
Standard OLS arguments then apply when limit theorems hold. 
Writing $X_i\widehat u_i = X_iu_i - X_iX_i'(\widehat\beta-\beta)$, the feasible estimator differs from its infeasible counterpart $\sum_{i,j}(A_n^\star)_{ij}(X_iu_i)(X_ju_j)'$ only through terms linear and quadratic in $\widehat\beta-\beta$, which the conditions of \Cref{thm:clt} render negligible provided $\max_i\|X_i\|^2\,\|\widehat\beta-\beta\|$ is negligible at the CLT rate.

In the numerical results that follow, we additionally augment the methods with ``C2NW", which is a natural extension of CGM2 to cross-cluster serial correlation. This method is conservative, but is not necessarily optimal --- its details are relegated to \Cref{sec:panel_dep_appendix}.

\subsection{Simulation}
\label{sec:simulation}

The simulations show how the variance estimators that are robust to
heterogeneous means compare with existing methods, and isolate the
regime in which the optimal correction delivers a concrete advantage.
Data are generated from the linear model
\begin{equation}
Y_{gt} = \beta_0 + (\beta_1 + \eta_{gt})\,X_{gt} + \delta_{gt} + U_{gt},
\label{eq:sim-dgp}
\end{equation}
where the regressor and disturbance follow the CHS variance-component
structure,
\begin{align*}
X_{gt} &= w_\alpha \alpha^x_g + w_\gamma \gamma_t^x + w_\varepsilon \varepsilon^x_{gt}, \\
U_{gt} &= w_\alpha \alpha^u_g + w_\gamma \gamma_t^u + w_\varepsilon \varepsilon^u_{gt},
\tag{\stepcounter{equation}\theequation}
\end{align*}
with $(\beta_0, \beta_1) = (0.1, 0.1)$ and $(w_\alpha, w_\gamma,
w_\varepsilon) = (0.15, 0.20, 0.15)$. The latent components
$(\alpha^x_g, \alpha^u_g, \varepsilon^x_{gt}, \varepsilon^u_{gt})$ are
mutually independent $N(0,1)$, and the common time effects $(\gamma^x_t,
\gamma^u_t)$ follow AR(1) processes with coefficient $\rho$.

The two heterogeneity terms are rank-one cluster-by-time interactions
\[
\eta_{gt} = \kappa\, c^\eta_g\, d^\eta_t,
\qquad
\delta_{gt} = \delta\, c^\delta_g\, d^\delta_t,
\]
where the loading vectors $c^\bullet_g$ and $d^\bullet_t$ are independent, mean-zero draws (the $d^\bullet_t$ are serially uncorrelated). 
Centering both factors places each interaction in the two-way demeaned subspace $\mathcal{S}_{00}$---precisely the subspace on which the two-way clustering weight matrix carries its negative eigenvalue (Section~\ref{sec:examples-and-optimal}). 
The term $\eta_{gt}$ is a mean-zero random \emph{slope}: because it is mean-zero and independent of $X$, the OLS estimand for $\beta_1$ remains exactly $0.1$, so the test is correctly centered; but the score $X_{gt} u_{gt}$ inherits the component $\eta_{gt} X_{gt}^2$, which has a heterogeneous \emph{mean} on $\mathcal{S}_{00}$. 
The term $\delta_{gt}$ is an additive outcome-level component omitted from the regression, contributing heterogeneous score \emph{variance} on the same subspace. 
We fix $\kappa = 0.8$ and report results with the level heterogeneity switched off ($\delta = 0$) and on ($\delta = 0.8$). We hold $G = 50$, and use $T=120$ for most designs, comparable to the dimensions of the empirical application in Section~\ref{sec:empirical}, and vary $\rho \in \{0.25, 0.50, 0.75\}$.

In each of $1{,}000$ draws we regress $Y_{gt}$ on $X_{gt}$ and a constant, compute standard errors by each method, and test $H_0: \beta_1 = 0.1$ at the $5\%$ level.
Table~\ref{tab:sim_results} reports the empirical rejection rates. 
Across every design the Monte Carlo mean of $\widehat\beta_1$ remains within sampling error of $0.1$, confirming the test is correctly centered and that the reported rates are size, not bias.

\begin{table}[H]
\centering
\caption{Rejection rates of nominal $5\%$ tests of
$H_0:\beta_1 = 0.1$. $G = 50$; $1{,}000$ Monte Carlo draws. ``Level Het.''
indicates whether the additive cluster-by-time level heterogeneity is
present ($\delta = 0.8$) or absent ($\delta = 0$); the random-slope
heterogeneity $\kappa = 0.8$ is present throughout.}
\label{tab:sim_results}
\begin{tabular}[t]{ccccccccccc}
\toprule
row & T & rho & Level Het. & EHW & CRg & CRt & CGM & CHS & C2NW & ET\\
\midrule
(I) & 120 & 0.25 & No & 0.477 & 0.220 & 0.157 & 0.075 & 0.071 & 0.010 & 0.055\\
(II) & 120 & 0.50 & No & 0.521 & 0.254 & 0.204 & 0.121 & 0.091 & 0.016 & 0.074\\
(III) & 120 & 0.75 & No & 0.631 & 0.364 & 0.343 & 0.215 & 0.120 & 0.041 & 0.107\\
\addlinespace
(IV) & 120 & 0.25 & Yes & 0.201 & 0.130 & 0.115 & 0.078 & 0.075 & 0.000 & 0.026\\
(V) & 120 & 0.50 & Yes & 0.245 & 0.173 & 0.140 & 0.098 & 0.080 & 0.001 & 0.036\\
(VI) & 120 & 0.75 & Yes & 0.364 & 0.274 & 0.267 & 0.200 & 0.098 & 0.012 & 0.068\\
\addlinespace
(VII) & 1000 & 0.75 & No & 0.734 & 0.133 & 0.510 & 0.104 & 0.058 & 0.031 & 0.057\\
(VIII) & 1000 & 0.75 & Yes & 0.551 & 0.137 & 0.457 & 0.125 & 0.064 & 0.030 & 0.056\\
\bottomrule
\end{tabular}

\justifying \small
EHW denotes Eicker--Huber--White; CRg and CRt denote one-way
cluster-robust within $g$ and within $t$; CGM denotes
Cameron--Gelbach--Miller; CHS denotes Chiang--Hansen--Sasaki (as written in their paper, without subtracting within-cluster lag); C2NW denotes
the conservative estimator $\widehat{V}_{\mathrm{C2NW}}$; ET denotes the
eigenvalue-truncation estimator $\widehat{V}^\star_{\mathrm{ET}}$.
\end{table}

Three patterns emerge. 
First, the estimators that ignore dependence (EHW, CRg, CRt) over-reject grossly throughout, confirming that some form of dependence robustness is essential. 
Second, among the dependence-robust estimators, CGM and CHS over-reject substantially when mean heterogeneity is present, and the distortion grows with the persistence $\rho$.
ET controls size far better in these designs, because its correction targets exactly the directions on which the heterogeneity loads. 
The conservative estimator C2NW controls size in all designs, at the cost of being noticeably conservative (rejection rates well below nominal).

Third, ET uniformly has lower rejection rates than CHS, including in the designs without level heterogeneity ($\delta = 0$, rows I--III). 
This reflects the within-cluster correction embedded in $\widehat{V}^\star_{\mathrm{ET}}$. 
The gap in size distortion is modest without level heterogeneity and widens once heterogeneity is present.

There are two qualifications. The cluster-by-time heterogeneity here is serially \emph{uncorrelated}; this is what prevents the CHS HAC lag estimator from absorbing it and exposes the bias in the CHS estimand. 
Under serially correlated heterogeneity the CHS lag term partially captures the structure and CHS is correspondingly more robust in finite samples. 
While ET controls size far better than CGM and CHS under heterogeneity, there is some over-rejection in finite-sample designs, but this over-rejection disappears when we increase $T$ to $1000$, even for the hardest case with high $\rho$. 

\subsection{Empirical Application}
\label{sec:empirical}

The new variance estimator is applied to a panel of industry portfolios across time. 
The Fama--French three-factor model \citep{fama1993common}
can be written as:
\begin{equation}
R_{gt} - R_{ft} = \beta_1 (R_{Mt} - R_{ft}) + \beta_2 SMB_{gt} + \beta_3 HML_{gt} + e_{gt},
\end{equation}
where $R_{gt} - R_{ft}$ is the excess return as $R_{gt}$ denotes the
total return of portfolio $g$ in month $t$, and $R_{ft}$ denotes the
risk-free rate.

I use the industry portfolio dataset from CHS containing 44 industry
portfolios across 119 months. Following CHS, I use a within-transformation
for both $Y$ and $X = (R_M - R_f, SMB, HML)$ to remove additive fixed
effects. Inference on coefficients relies on a CLT applied to
$\sum_{g,t} \widetilde{X}_{gt} \widetilde{e}_{gt}$.

Allowing for heterogeneous means is natural here when the three-factor model is a linear approximation of the return process, and inference conditions on the portfolios and sample months observed. 
Under correct specification the score $\widetilde X_{gt}\widetilde e_{gt}$ would be conditionally mean-zero, and hence no mean heterogeneity.
But suppose a priced factor is omitted---momentum, profitability, investment, or another of the many documented since \citet{fama1993common}. 
Conditional on the observed portfolios and months, the omitted portfolio-by-time specific factor is not a random shock but a fixed feature of cell $(g,t)$: it enters the \emph{mean} of the score $\widetilde X_{gt}\widetilde e_{gt}$. 
We reserve variance for what remains genuinely stochastic given the conditioning---the idiosyncratic innovations---and treat the systematic, omitted-factor contribution as a  heterogeneous conditional mean rather than as excess score variance. 
The within-transformation we apply, following CHS, removes any additive portfolio-level and time-level component of this mean, so a constant industry alpha or a common-to-all-portfolios mispricing in a given month is swept out.
What survives is the \emph{non-separable} interaction between portfolio and time which is precisely the two-way demeaned component on which the two-way clustering weight matrix carries its negative eigenvalue (\Cref{sec:examples-and-optimal}). 
The researcher need not estimate the omitted factor: the role of the eigenvalue truncation is instead to modify the estimand so that inference remains conservative against whatever non-separable mean heterogeneity the omitted structure induces.

\begin{table}[H]
\centering
\caption{Standard Errors for Various Methods: 44 Industry Portfolios, 119 Periods}
\label{tab:emp_results}

\begin{tabular}{lrrrrrrrr}
\toprule
  & coef & EHW & CRg & CRt & CGM & CHS & C2NW & ET\\
\midrule
MKT & 0.959 & 0.022 & 0.055 & 0.031 & 0.059 & 0.059 & 0.076 & 0.062\\
SMB & 0.076 & 0.029 & 0.035 & 0.041 & 0.045 & 0.053 & 0.083 & 0.059\\
HML & 0.358 & 0.030 & 0.066 & 0.049 & 0.076 & 0.080 & 0.109 & 0.084\\
\bottomrule
\end{tabular}

\justifying \small
Estimates of the factor coefficients with various standard error
estimates. EHW denotes Eicker--Huber--White; CRg denotes one-way
cluster-robust within $g$; CRt denotes one-way cluster-robust within
$t$; CGM denotes Cameron--Gelbach--Miller; CHS denotes
Chiang--Hansen--Sasaki; C2NW denotes the conservative estimator
$\widehat{V}_{\mathrm{C2NW}}$; ET denotes the optimal estimator
$\widehat{V}^\star_{\mathrm{ET}}$. 
\end{table}

From Table~\ref{tab:emp_results}, standard errors estimated using ET are higher than the standard errors estimated using CHS.
However, the standard errors are not unreasonably large: we retain the statistical significance of the coefficient on HML at the 5\% level, while the statistical significance of SMB can now be called into question, as CHS already show. 
A further takeaway is that cross-cluster serial correlation is empirically important: if serial correlation were negligible, we would expect the CHS standard errors to be similar to CGM, which is not the case for two of the coefficients. 
We verify that ET meaningfully decreases the standard errors relative to the more conservative C2NW, as expected due to how ET is optimal.

\section{Conclusion}
\label{sec:conclusion}

This paper developed a unified theory of dependence-robust variance estimation under heterogeneous means. 
Every quadratic estimator $Y_n'A_nY_n$ has expectation $\tr(A_n\Sigma)+\mu'A_n\mu$, so its behavior is governed by two structural commitments---a subspace $\mathcal M_n$ of admissible means and a closed convex cone $\mathcal S_n$ of admissible covariances. 
Cone duality turns this decomposition into an exact conservativeness criterion (\Cref{thm:conservative}), and the eigenvalue truncation $W_n^+$ minimizes the adjustment to the plug-in and, in the aligned regime, the mean-squared error and estimand level (\Cref{thm:eig-truncation}). 
The same construction doubles as a diagnostic: one-way cluster-robust estimators, and Bartlett-kernel HAC emerge as optimal, whereas two-way clustering is anticonservative under heterogeneous means and is repaired by an explicit, minimal correction. 
Specializing the theory resolves an open problem in the panel literature--- a variance estimator simultaneously robust to heterogeneous means and cross-cluster serial correlation, with consistency and conservativeness established under $\psi$-dependence and borne out in simulations and a Fama--French application.

\appendix
\crefalias{section}{appendix}
\crefalias{subsection}{appendix}

\section{Supplementary Results}
\label{sec:panel_dep_appendix}

\subsection{Details on Optimization Problems}

In the minimal correction problem, the constraint $\|\mu\| = 1$ excludes $\mu =0$. 
When $\mathcal{M}_n = \{0\}$---the mean is known exactly, so that the demeaned mean vanishes---the supremum in~\eqref{eq:risk-bias} is over the empty set; we adopt the convention $\mathcal{R}_n(A_n) := 0$, reflecting that (C2) is vacuous and no mean-induced correction is warranted, so the plug-in $A_n = W_n$ is feasible and optimal.
The criterion reads $A_n$ only through its $\mathcal{M}_n$-block $P_{\mathcal{M}_n}A_nP_{\mathcal{M}_n}$ and is silent about the off-$\mathcal{M}_n$ block. 

Under finite fourth moments, expanding the squared error gives
\begin{equation}
\mathrm{MSE}_n(A_n; P_n)
=
\bigl[\tr((A_n - W_n)\Sigma) + \mu' A_n \mu\bigr]^2
+ 2\tr((A_n\Sigma)^2)
+ 4\mu' A_n \Sigma A_n \mu
+ \mathcal{T}_3(A_n; P_n)
+ \mathcal{K}_4(A_n; P_n),
\label{eq:mse-expansion}
\end{equation}
where, writing $Z = Y_n - \mu$, $m^{(3)}_{ijk}(P_n) = \E_{P_n}[Z_i Z_j Z_k]$, and
$\kappa_{ijkl}(P_n) = \E_{P_n}[Z_iZ_jZ_kZ_l] - \E[Z_iZ_j]\E[Z_kZ_l] - \E[Z_iZ_k]\E[Z_jZ_l] - \E[Z_iZ_l]\E[Z_jZ_k]$,
\[
\mathcal{T}_3(A_n; P_n) := 4 \sum_{i,j,k} (A_n)_{ij} (A_n \mu)_k\, m^{(3)}_{ijk}(P_n),
\qquad
\mathcal{K}_4(A_n; P_n) := \sum_{i,j,k,l} (A_n)_{ij}(A_n)_{kl}\, \kappa_{ijkl}(P_n).
\]

Under Assumption~\ref{asmp:no-skew}, MSE reduces to
\begin{equation}
\mathrm{MSE}_n(A_n; \mu, \Sigma)
= \bigl[\tr((A_n - W_n)\Sigma) + \mu' A_n \mu\bigr]^2
+ 2\tr((A_n\Sigma)^2)
+ 4\mu' A_n \Sigma A_n \mu,
\label{eq:mse-clean}
\end{equation}
depending on $P_n$ only through $(\mu, \Sigma)$.

\begin{remark}[Both restrictions in \Cref{thm:eig-truncation}(iii) are essential]
\label{rem:mse-scope}
The optimality in Theorem~\ref{thm:eig-truncation}(iii) holds over the
\emph{spectral} class under a \emph{commuting} cone, and fails if either
restriction is dropped. $A_n^\star$ is in particular \emph{not} globally
MSE-minimal, and feasible Loewner-refinements $A_n\succeq A_n^\star$ can
strictly improve on it.

\emph{The cone must commute with $W_n$.} Take $\mathcal M_n=\R^2$,
$\mathcal S_n=\mathrm{Sym}_2^+$, $Y_n$ Gaussian, $W_n=\diag(1,-1)$, so
$A_n^\star=W_n^+=\diag(1,0)$. The feasible correction
$A_n=\bigl(\begin{smallmatrix}2&1\\1&1/2\end{smallmatrix}\bigr)$ satisfies
$A_n\succeq0$, $A_n-W_n\succeq0$, and at $\mu=0$,
$\Sigma=\bigl(\begin{smallmatrix}1&-1\\-1&1\end{smallmatrix}\bigr)$ attains
$\mathrm{MSE}_n(A_n)=3/4<3=\mathrm{MSE}_n(A_n^\star)$: a correction not sharing
the eigenstructure of $ W_n$ places less weight where $\Sigma$
concentrates. Here $\Sigma$ does not commute with $W_n$.

\emph{The correction must be spectral, even when the cone commutes.} Take
$\mathcal M_n=\R^2$, $Y_n$ Gaussian, $\mathcal S_n=\mathcal S_2^{\mathrm{diag}}$
(so condition (c)'s commuting holds), and
\[
W_n=\begin{pmatrix}1&0\\0&0\end{pmatrix},\qquad A_n^\star=W_n^+=W_n .
\]
Fix $\mu=(1,1)'$, $\Sigma=\diag(1,\sigma)$ with $0<\sigma<1/9$, and move into the
\emph{non-spectral} feasible direction $D=vv'$, $v=(1,-3)'$, i.e.
$A_n=A_n^\star+tD$ for small $t>0$. Then $A_n-A_n^\star=tD\succeq0$ and
$\diag(A_n-W_n)=t(1,9)\ge0$, so $A_n$ is feasible and Loewner-dominates
$A_n^\star$. Yet, using
$\Cov(Y_n'A_n^\star Y_n,Y_n'DY_n)=2\tr(A_n^\star\Sigma D\Sigma)
+4\mu'A_n^\star\Sigma D\mu=2-8=-6$ and
$\E[Y_n'A_n^\star Y_n]-\theta_n=\mu'A_n^\star\mu=1$,
$\E[Y_n'DY_n]=\tr(D\Sigma)+\mu'D\mu=5+9\sigma$,
\[
\frac{d}{dt}\,\mathrm{MSE}_n(A_n^\star+tD;\mu,\Sigma)\Big|_{t=0}
=2\,\E\!\big[(Y_n'A_n^\star Y_n-\theta_n)\,Y_n'DY_n\big]
=2\big(-6+(5+9\sigma)\big)=2(9\sigma-1)<0 .
\]
So $\mathrm{MSE}_n(A_n)<\mathrm{MSE}_n(A_n^\star)$ for small $t>0$: a feasible,
conservative, Loewner-dominating refinement strictly beats the truncation. The
obstruction is the off-eigenbasis tilt of $D$, which lowers the
mean--variance cross term $4\mu'A_n\Sigma A_n\mu$ faster than it raises bias;
confining the search to $\mathcal A_n^{\mathrm{sp}}$ removes exactly this freedom.
\end{remark}

\subsection{Examples for Anticonservativeness}

We provide two illustrative examples. The first example shows how the plug-in estimator is anticonservative with m-dependence. The second example shows that the CHS estimand is anticonservative under heterogeneous means. 

\begin{example}
Suppose we have an $m$-dependent series, where $m=1$, and $3$ time periods. 
This example can be extended to the $T\rightarrow\infty$ case for a sequence of heterogeneous means. 
The variance for the sum of a scalar random variable $y_{t}$ is:
\begin{equation}
\Var\left(\sum_{t=1}^{3}y_{t}\right)  =\sum_{t=1}^{3}\Var\left(y_{t}\right)+2\sum_{t=1}^{2}\Cov\left(y_{t},y_{t+1}\right).
\end{equation}

When using a standard heteroskedasticity and autocorrelation robust (HAR) variance estimator, we target $\sum_{t=1}^{3}E\left[y_{t}^{2}\right]+2\sum_{t=1}^{2}E\left[y_{t}y_{t+1}\right]$, but
\begin{align*}
D_{1} & :=\sum_{t=1}^{3}E\left[y_{t}^{2}\right]+2\sum_{t=1}^{2}E\left[y_{t}y_{t+1}\right]-\Var\left(\sum_{t=1}^{3}y_{t}\right)\\
 & =\sum_{t=1}^{3}E\left[y_{t}\right]^{2}+2\sum_{t=1}^{2}E\left[y_{t}\right]E\left[y_{t+1}\right]. \tag{\stepcounter{equation}\theequation}
\end{align*}

The expression $D_{1}$ above can be negative while having $\sum_{t=1}^3 E\left[y_{t}\right]=0$ when $E[y_1] = 0.5, E[y_2] = -1, E[y_3]= 0.5$. 
Then, $\sum_{t=1}^{3}E\left[y_{t}\right]^{2}+2\sum_{t=1}^{2}E\left[y_{t}\right]E\left[y_{t+1}\right]=1.5-2=-0.5$.
To restore validity, we may add (a scaled version of) $\sum_{t=1}^{3}y_{t}^{2}$ to the variance estimator, so that we target $2\sum_{t=1}^{3}E\left[y_{t}^{2}\right]+2\sum_{t=1}^{2}E\left[y_{t}y_{t+1}\right]$ instead. 
Using this expression,
\begin{align*}
D_{2} & :=2\sum_{t=1}^{3}E\left[y_{t}^{2}\right]+2\sum_{t=1}^{2}E\left[y_{t}y_{t+1}\right]-\Var\left(\sum_{t=1}^{3}y_{t}\right)\\
 & =2\sum_{t=1}^{3}E\left[y_{t}\right]^{2}+2\sum_{t=1}^{2}E\left[y_{t}\right]E\left[y_{t+1}\right]+\sum_{t=1}^{3}\Var\left(y_{t}\right)\\
 & \geq\sum_{t=1}^{2}E\left[y_{t}\right]^{2}+2\sum_{t=1}^{2}E\left[y_{t}\right]E\left[y_{t+1}\right]+\sum_{t=1}^{2}E\left[y_{t+1}\right]^{2}+\sum_{t=1}^{3}\Var\left(y_{t}\right)\geq 0, \tag{\stepcounter{equation}\theequation}
\end{align*}
so the variance estimand is conservative and robust to mean heterogeneity.
When the means are homogeneous, we would overestimate the true variance by $D_{2}=\sum_{t=1}^{3}\Var\left(y_{t}\right)$. 
\end{example}

\begin{example}
Suppose we have one observation in each $\left(g,t\right)$ intersection, and we have $m=1$ dependence. 
We can compare the oracle CHS variance estimand (where $\omega(m,M)=1$ for all $m\leq 1$ and 0 otherwise) with the true variance.
As before, this example can be extended to $T \rightarrow \infty$ and $G \rightarrow \infty$. 
The scalar $E\left[Y_{n,i}\right]$ values are in \Cref{tab:chs_counterexample}.

\begin{table}[H]
\caption{Values of scalar $E[Y_{n,i}]$}
\label{tab:chs_counterexample}
\centering
\begin{tabular}{|c|c|c|c|c|}
\hline 
 & $t=1$ & $t=2$ & $t=3$ & $t=4$ \tabularnewline
\hline 
$g=1$ & $-1$ & $-1$ & $1$ & $1$ \tabularnewline
\hline 
$g=2$ & $1$ & $1$ & $-1$ & $-1$ \tabularnewline
\hline 
\end{tabular}
\end{table}

Then, observe that $\sum_{i \in N_n}\sum_{j\in N_{g(i)}^{G}}E\left[Y_{n,i}\right]E\left[Y_{n,j}\right]=0$, $\sum_{i \in N_n}\sum_{j\in N_{t(i)}^{T}}E\left[Y_{n,i}\right]E\left[Y_{n,j}\right]=0$, and $\sum_{i \in N_n}\sum_{j\in N_{t(i),g(i)}^{T\cap G}}E\left[Y_{n,i}\right]E\left[Y_{n,j}\right]=8$.
In the serially correlated components, setting $m=1$, $\sum_{t=1}^{T-m}E\left[y_{t}\right]E\left[y_{t+m}^{\prime}\right]=0$ and $\sum_{t=1}^{T-m}\sum_{g=1}^{G}\sum_{i\in N_{t,g}^{T\cap G}}\sum_{j\in N_{t+m,g}^{T\cap G}}E\left[Y_{n,i}\right]E\left[Y_{n,j}\right]=2$.
Then, on aggregate, 
\begin{align*}
& \sum_{i \in N_n}\sum_{j\in N_{g(i)}^{G}}E\left[Y_{n,i}\right]E\left[Y_{n,j}\right] + \sum_{i \in N_n}\sum_{j\in N_{t(i)}^{T}}E\left[Y_{n,i}\right]E\left[Y_{n,j}\right] - 
\sum_{i \in N_n}\sum_{j\in N_{t(i),g(i)}^{T\cap G}}E\left[Y_{n,i}\right]E\left[Y_{n,j}\right] \\
& +2\sum_{t=1}^{T-m}E\left[y_{t}\right]E\left[y_{t+m}\right] -
2\sum_{t=1}^{T-m}\sum_{g=1}^{G}\sum_{i\in N_{t,g}^{T\cap G}}\sum_{j\in N_{t+m,g}^{T\cap G}}E\left[Y_{n,i}\right]E\left[Y_{n,j}\right] = -12<0, \tag{\stepcounter{equation}\theequation}
\end{align*}
so the standard CHS estimand is anticonservative. 
\end{example}

\subsection{Variance estimators}
\label{sec:two-estimators}

We now define both variance estimators and their target estimands.
The CHS variance estimator (excluding their EVC operation) is:
\begin{align}
\widehat{V}_{\mathrm{CHS}}
&= \sum_{i \in N_n} \sum_{j \in N_{g(i)}^G} Y_{n,i} Y_{n,j}'
 + \sum_{i \in N_n} \sum_{j \in N_{t(i)}^T} Y_{n,i} Y_{n,j}'
 - \sum_{i \in N_n} \sum_{j \in N_{t(i),g(i)}^{T \cap G}}
   Y_{n,i} Y_{n,j}' \notag \\
&\quad + \sum_{m=1}^M \omega(m,M)
\left(
\sum_{t=1}^{T-m} y_t y_{t+m}'
+ \sum_{t=1}^{T-m} y_{t+m} y_t'
\right) \notag \\
&\quad - \sum_{m=1}^M \omega(m,M)
\left(
\sum_{t=1}^{T-m} \sum_{g=1}^G
\sum_{i \in N_{t,g}^{T \cap G}}
\sum_{j \in N_{t+m,g}^{T \cap G}}
Y_{n,i}Y_{n,j}'
\right.  \left.
+ \sum_{t=1}^{T-m} \sum_{g=1}^G
\sum_{i \in N_{t,g}^{T \cap G}}
\sum_{j \in N_{t+m,g}^{T \cap G}}
Y_{n,j}Y_{n,i}'
\right).
\label{eq:Vchs}
\end{align}
The first line is the CGM estimator, the second accounts for serial
correlation across clusters, and the third adjusts for double counting.
The CHS estimand $V_{\mathrm{CHS}}$ replaces each empirical product
with its expectation.

\begin{lemma}[Matrix form of the CHS estimator]
\label{lem:chs-matrix-form}
In the balanced panel with one observation per cell, $i=(g,t)$ and
$\mathbb{R}^n\cong\mathbb{R}^T\otimes\mathbb{R}^G$, the CHS variance estimator
\eqref{eq:Vchs} admits the quadratic-form representation
\[
\widehat V_{\mathrm{CHS}} = Y_n'W_n^{\mathrm{CHS}}Y_n,
\qquad
W_n^{\mathrm{CHS}} = J_T\otimes I_G + C_M\otimes(J_G - I_G).
\]
\end{lemma}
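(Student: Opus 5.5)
The plan is to compare the two sides entry by entry. Write $\widehat V_{\mathrm{CHS}}=\sum_{i,j}w_{ij}Y_{n,i}Y_{n,j}'$, where $w_{ij}$ is the total coefficient that \eqref{eq:Vchs} places on the ordered pair $(i,j)$. Then show that $w_{ij}$ equals the $(i,j)$ entry of $W_n^{\mathrm{CHS}}$ for every ordered pair. For vector-valued $Y_{n,i}$ the same identity holds with $Y_n'W_nY_n$ read as $\sum_{i,j}(W_n)_{ij}Y_{n,i}Y_{n,j}'$, i.e.\ $(Y_n'\otimes I_v)$-style, so it is enough to track the scalar coefficients. Index $i=(g,t)$ and $j=(h,s)$. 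Under the identification $\mathbb R^n\cong\mathbb R^T\otimes\mathbb R^G$, the Kronecker entries are
\[
(J_T\otimes I_G)_{(g,t),(h,s)}=\mathbf 1\{g=h\},\qquad
\big(C_M\otimes(J_G-I_G)\big)_{(g,t),(h,s)}=\omega(|t-s|,M)\,\mathbf 1\{g\neq h\},
\]
with the convention $\omega(0,M)=1$ and $\omega(m,M)=0$ for $m>M$. This matches $(C_M)_{ts}$ as defined.

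Next, I read off the coefficient of each term in \eqref{eq:Vchs}, using one observation per cell so that $N^{T\cap G}_{t,g}=\{(g,t)\}$.

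\begin{itemize}
\item The CGM line contributes $\mathbf 1\{g=h\}+\mathbf 1\{t=s\}-\mathbf 1\{g=h,t=s\}$.
\item Since $y_t=\sum_g Y_{n,(g,t)}$, the term $\sum_{t}y_ty_{t+m}'+y_{t+m}y_t'$ assigns coefficient $1$ to every ordered pair with $|t-s|=m$, whether $g=h$ or not. The two orderings supply exactly the $(i,j)$ and $(j,i)$ entries of a symmetric matrix, so there is no factor-of-two issue. The second line therefore contributes $\omega(|t-s|,M)\mathbf 1\{1\le|t-s|\le M\}$.
\item The third line removes the same weight on pairs with $g=h$ and $1\le |t-s|\le M$.
\end{itemize}

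Finally, I combine the contributions in two cases.

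\emph{Case $g=h$.} The CGM part gives $1+\mathbf 1\{t=s\}-\mathbf 1\{t=s\}=1$. The serial terms cancel between lines two and three. The total is $1=\mathbf 1\{g=h\}$.

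\emph{Case $g\neq h$.} The CGM part gives $\mathbf 1\{t=s\}$. Line two gives $\omega(|t-s|,M)$ for $1\le|t-s|\le M$, and line three gives nothing. The total is $\omega(|t-s|,M)$ under the stated convention.

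Both cases agree with the Kronecker entries above, which proves the lemma. There is no real obstacle; the only care points are getting the Kronecker index ordering consistent (time as the outer factor) and checking that the symmetric double counting in the lag terms matches the symmetric matrix entries rather than doubling them.
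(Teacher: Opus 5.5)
Your proof is correct and follows essentially the same route as the paper's: both identify the coefficient that each line of \eqref{eq:Vchs} places on an ordered pair $\big((g,t),(h,s)\big)$ and then sum. The paper records these coefficients as Kronecker blocks ($J_T\otimes I_G$, $I_T\otimes J_G$, $-I_T\otimes I_G$, $(C_M-I_T)\otimes J_G$, $-(C_M-I_T)\otimes I_G$) and collects the $\otimes J_G$ and $\otimes I_G$ parts; that is exactly your case split on $g\neq h$ versus $g=h$, written in matrix form.
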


A simpler conservative estimator extends \citet{davezies2021empirical}'s CGM2 construction to the serially correlated setting. 
CGM2 adds $W_n^T$ to the CGM weight matrix to restore uniform conservativeness in the standard two-way setting (Section~\ref{sec:twoway_optimal}); its analog here adds a tractable PSD term $2 \sum_t y_t y_t'$ in the kernel-weighted block:
\begin{align*}
\widehat{V}_{\mathrm{C2NW}} &:= \sum_{i \in N_n} \sum_{j \in N_{g(i)}^G} Y_{n,i} Y_{n,j}' + \sum_{i \in N_n} \sum_{j \in N_{t(i)}^T} Y_{n,i} Y_{n,j}' \\
 & \quad + \sum_{m=1}^M \omega(m, M) \left(\sum_{t=1}^{T-m} y_t y_{t+m}' + \sum_{t=1}^{T-m} y_{t+m} y_t' + 2\sum_{t=1}^T y_t y_t'\right). \tag{\stepcounter{equation}\theequation}
\end{align*}
$V_{\mathrm{C2NW}}$ is defined analogously by replacing empirical
products with expectations. The estimator omits the
$\sum_{t,g,i,j}$ double-counting line of $\widehat{V}_{\mathrm{CHS}}$,
matching CHS's own implementation choice.

By construction, $\widehat{V}_{\mathrm{C2NW}}$ adds weight on the
group-symmetric subspace ($v \otimes \mathbf{1}_G$ directions), while
$\widehat{V}^\star_{\mathrm{ET}}$ adds weight precisely on the group-demeaned
subspace where the negative eigenvalues live. Both are uniformly
conservative, but the optimal correction targets the pathology
directly while the conservative correction overshoots.

For implementation, $M$ is selected using the bandwidth selection
procedure of \citet{andrews1991heteroskedasticity} and $\omega(\cdot)$ is
the triangular kernel.
This choice follows CHS and is used here for comparability, but we note that the
\citet{andrews1991heteroskedasticity} rule is derived under stationarity for a
single time series and its optimality has not been established for the
$\psi$-dependent two-way array considered here; \Cref{asmp:vcon_converge} requires
only $M=o(T)$, which the resulting bandwidths satisfy.

\subsection{Consistency Details}
\label{sec:consistency_appendix}

\begin{lemma}[Quadratic-form law of large numbers]
\label{lem:quadratic_LLN}
Let $\{Y_{n,i}\}$ be a triangular array of scalar random variables that is conditionally $\psi$-dependent given $\{\mathcal{C}_n\}$, with dependence coefficients $\{\theta_{n,s}\}$ satisfying Assumption~\ref{asmp:psi_dependence}(b). 
Let $A_n \in \mathrm{Sym}_n$ be a (possibly random but $\mathcal{C}_n$-measurable) symmetric weight matrix, and define
\[
\hat{V}_{A,n} := Y_n' A_n Y_n, \qquad V_{A,n} := \mathbb{E}[Y_n' A_n Y_n \mid \mathcal{C}_n].
\]
Suppose the following primitive conditions hold for some $p > 4$:
\begin{enumerate}
\item[(i)] (Moment bound) $\sup_{n \geq 1} \max_{i \in N_n} \|Y_{n,i}\|_{\mathcal{C}_n, p} < \infty$ a.s.
\item[(ii)] (Weight bound) There is a nonnegative sequence $\{\bar a_n\}$ with
$\max_{i,j \in N_n} |(A_n)_{ij}| \leq \bar{a}_n$ a.s.\ for every $n$.
\item[(iii)] (Support bandwidth) There exists a sequence $b_n \geq 0$ such that 
$(A_n)_{ij} = 0$ whenever $d_n(i, j) > b_n$.
\item[(iv)] (Variance summability)
\[
\bar{a}_n^2 \cdot n \sum_{s \geq 0} c_n(s, b_n; 2) \, \theta_{n,s}^{1 - 4/p} \to 0 \quad \text{a.s.}
\]
\end{enumerate}
Then
\[
\hat{V}_{A,n} - V_{A,n} \xrightarrow{p} 0. 
\]
\end{lemma}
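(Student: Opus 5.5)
Since $\hat V_{A,n}$ is a scalar quadratic form, Chebyshev's inequality (conditionally on $\mathcal C_n$) reduces the claim to showing $\Var(\hat V_{A,n}\mid\mathcal C_n)\to0$ a.s. Conditional convergence in probability then follows, and dominated convergence upgrades it to unconditional convergence in probability. Writing
\[
\hat V_{A,n}=\sum_{i,j}(A_n)_{ij}Y_{n,i}Y_{n,j}
\]
and using condition (iii), only pairs with $d_n(i,j)\le b_n$ contribute. The conditional variance is therefore
\[
\sum_{i,j}\sum_{k,l}(A_n)_{ij}(A_n)_{kl}\Cov(Y_{n,i}Y_{n,j},\,Y_{n,k}Y_{n,l}\mid\mathcal C_n),
\]
which condition (ii) bounds by $\bar a_n^2$ times a sum of absolute covariances over quadruples with $d_n(i,j)\le b_n$ and $d_n(k,l)\le b_n$. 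The plan is to reproduce the variance bound in the proof of KMS's HAC consistency theorem (their Proposition 4.1 / Lemma A.x), with the kernel weights replaced by $(A_n)_{ij}$.

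\textbf{Covariance inequality.} Fix a quadruple and let $s:=d_n(\{i,j\},\{k,l\})$. For $s\ge1$, apply the covariance inequality for $\psi$-dependent arrays with unbounded functions (KMS Theorem A.5 / Corollary A.2). This requires truncating $Y$ at a level $L$, applying the $\psi$-dependence bound with Assumption~\ref{asmp:psi_dependence}(a) to the bounded Lipschitz pieces $f(y_i,y_j)=\phi_L(y_i)\phi_L(y_j)$, controlling the truncation remainders with the $p$-th moments from (i), and optimizing over $L$. The outcome is
\[
|\Cov(Y_iY_j,Y_kY_l\mid\mathcal C_n)|\le C\,\theta_{n,s}^{1-4/p},
\]
with $C$ depending only on $\sup\|Y\|_p$ and on $\psi$-constants that are uniform because $a=b=2$. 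For $s=0$, Hölder's inequality and (i) bound the covariance by a constant, and $\theta_{n,0}=1$ makes this consistent with the same bound. Hence
\[
\Var(\hat V_{A,n}\mid\mathcal C_n)\le C\bar a_n^2\sum_{s\ge0}\theta_{n,s}^{1-4/p}\,\#\{\text{quadruples at distance } s\}.
\]

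\textbf{Counting step (main obstacle).} The remaining task is to show that the number of quadruples $(i,j,k,l)$ with $d(i,j)\le b_n$, $d(k,l)\le b_n$ and $d(\{i,j\},\{k,l\})=s$ is at most $C\,n\,c_n(s,b_n;2)$, exactly as in KMS's HAC proof. I would do this as follows.
\begin{itemize}
\item Fix $i$ and take WLOG $s=d(i,k)$ (up to a factor 4 for the choice of the closest pair).
\item The number of $j$ within $b_n$ of $i$ and of $l$ within $b_n$ of $k$ are bounded by neighborhood sizes of the $\Delta_n(\cdot,b_n;1)$ type.
\item The number of $k$ at distance exactly $s$ from $i$ is bounded by the boundary count $|N^T_{t(i)+s}|+|N^T_{t(i)-s}|$ (plus the $s=0$ cluster term), i.e.\ the summand of $\delta_n^\partial(s;1)$.
\end{itemize}
The subtlety is that the $l$-count depends on $k$ rather than on $i$. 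I would handle it, as KMS do, by bounding both $j$- and $l$-counts by a neighborhood count anchored at $i$ with radius $s+b_n$ (in the paper's normalization, $\Delta_n(s,b_n;\cdot)$ absorbs this). Averaging over $i$ then gives $\frac1n\sum_i(\text{boundary count})\times(\text{neighborhood})^2$, and Hölder's inequality with exponents $\alpha,\alpha/(\alpha-1)$, optimized over $\alpha$, gives precisely $c_n(s,b_n;2)$. Multiplying by $n$ yields
\[
\Var(\hat V_{A,n}\mid\mathcal C_n)\le C\,\bar a_n^2\,n\sum_{s\ge0}c_n(s,b_n;2)\,\theta_{n,s}^{1-4/p},
\]
which tends to $0$ a.s.\ by (iv). The geometry of $d_n$, which is $0$ when units share a cluster, is where care is needed: the bookkeeping of the $\mathbf 1\{s=0\}(|N^G_{g(i)}|-|N^T_{t(i)}|)$ term has to match the way same-cluster pairs are counted.
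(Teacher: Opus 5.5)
Your skeleton is exactly the paper's proof. It runs conditional Chebyshev, then bounds $\bar a_n^2$ times the absolute covariances of $Y_{n,i}Y_{n,j}$ and $Y_{n,k}Y_{n,l}$ grouped by $s=d_n(\{i,j\},\{k,l\})$, then applies the $\theta_{n,s}^{1-4/p}$ covariance inequality, counts quadruples, and finishes with (iv) and dominated convergence. The paper disposes of the count in one line by citing inequality (A.13) of KMS. The count you supply yourself fails, for two reasons.

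\emph{The triangle inequality fails.} The paper's $d_n$ does not satisfy it. If $k$ shares $i$'s group and $l$ shares $k$'s period, then $d_n(i,k)=d_n(k,l)=0$, while $d_n(i,l)=|t(i)-t(l)|$ can be $T-1$. So $l$ need not lie within $s+b_n$ of $i$.

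\emph{The ball is the wrong neighborhood.} Even for a genuine metric, a ball of radius $s+b_n$ is not what $\Delta_n(s,b_n;\cdot)$ controls, since that quantity only sums lags $h\in[s,b_n]$.

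\emph{The device KMS actually use} needs only symmetry of $d_n$. If $(i,k)$ is the closest cross pair, then $j\in N_n(i;b_n)\setminus N_n(k;s-1)$ and $l\in N_n(k;b_n)\setminus N_n(i;s-1)$. Bound the product of the two counts by half the sum of their squares. Then use that the set of ordered pairs with $d_n(i,k)=s$ is invariant under $i\leftrightarrow k$, and apply H\"older. This produces the KMS version of $c_n$, built from $\max_{k\in N_n^\partial(i;s)}|N_n(i;b_n)\setminus N_n(k;s-1)|$.

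\emph{That quantity is not dominated by the paper's $c_n(s,b_n;2)$.} So the bound $\#\{\text{quadruples at distance }s\}\le C\,n\,c_n(s,b_n;2)$ you aim for is false, and it is also what the paper's Step 4 asserts. For $s>b_n$ the sum defining the paper's $\Delta_n(s,b_n;\cdot)$ is empty, so $c_n(s,b_n;2)=0$. Yet quadruples with $j=i$, $l=k$ and $d_n(i,k)=s$ exist. (The paper's CLT proof evaluates $N_n(i;m)\setminus N_n(j;s-1)$ as if $N_n(j;s-1)$ were centered at $t(i)$.)

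\emph{The lemma fails as stated.} Take a balanced panel with $G=T$ and $Y_{n,i}=Z\sim N(0,1)$ for every $i$. This array is $\psi$-dependent with $\theta_{n,s}\equiv1$ and $\psi_{a,b}(f,g)=2\|f\|_\infty\|g\|_\infty$. Set $b_n=0$ and $A_n=T^{-11/4}(J_T\otimes I_G+I_T\otimes J_G)$. Then $\bar a_n=2T^{-11/4}$, and only the $s=0$ term of (iv) survives. That term equals $\bar a_n^2\,n\,c_n(0,0;2)=4T^{-11/2}\cdot T^2\cdot 8T^3=32T^{-1/2}\to0$. But $\hat V_{A,n}-V_{A,n}=2T^{1/4}(Z^2-1)$ does not converge to zero.

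\emph{Repair.} No argument can close your counting step as posed, so condition (iv) has to be restated. One option is to use the KMS $c_n$. Another is to put $\Delta_n(0,b_n;\cdot)$ in place of $\Delta_n(s,b_n;\cdot)$. This works because the inner sum defining $\Delta_n(0,b_n;1)$ dominates $|N_n(i;b_n)|$, and the paper's $\delta_n^\partial(s;\cdot)$ dominates $|N_n^\partial(i;s)|$. With that change, your outline plus the symmetrization above goes through.
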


The lemma is applied to weight matrices that have already been normalized by
$\lambda_n$: for an unnormalized $A_n$ with $\max_{i,j}|(A_n)_{ij}|\le a_n$, taking
$A_n/\lambda_n$ in the lemma gives $\bar a_n = a_n/\lambda_n$ and recovers
$\lambda_n^{-1}(\hat V_{A,n}-V_{A,n})\xrightarrow{p}0$ under
$\tfrac{a_n^2}{\lambda_n^2}\,n\sum_{s\ge0}c_n(s,b_n;2)\theta_{n,s}^{1-4/p}\to0$.
Condition (ii) alone does not deliver (iv): the sum
$n\sum_{s\ge0}c_n(s,b_n;2)\theta_{n,s}^{1-4/p}$ diverges, so the rate at which
$\bar a_n$ vanishes is what does the work.

\begin{remark}
Conditions (i), (ii), and (iv) are direct adaptations of Assumption~4.1 of KMS to
matrix-form weights. Condition (iii) holds whenever $A_n^\star$ inherits the support
pattern of the underlying plug-in $W_n$, as for the block and banded cones of
\Cref{sec:framework}. It fails for the exact truncation of \Cref{cor:chs_optimal}, whose
correction $B^-\otimes Q_G$ is globally supported in time; that case is handled in
\Cref{prop:exact-opt-consistent} below by applying the lemma to the band-limited proxy and
bounding the difference in operator norm.
\end{remark}

We establish consistency of the eigenvalue-truncation estimator
$\widehat V_{\mathrm{ET}}^\star$ of~\eqref{eq:vet-estimator} and of the conservative
estimator $\widehat V_{\mathrm{C2NW}}$ of Appendix~\ref{sec:two-estimators}. The
argument has three layers. First, two band-limited quadratic forms---the conservative
$\widehat V_{\mathrm{C2NW}}$ and a proxy $\widehat V_{\mathrm{band}}$ for the
truncation---obey a law of large numbers directly through
Lemma~\ref{lem:quadratic_LLN}; the C2NW branch additionally invokes
\Cref{asmp:vcon_converge_c2nw}, since its within-period block is not entrywise bounded.  Second, the exact truncation
$\widehat V_{\mathrm{ET}}^\star$, which is not band-limited, inherits consistency from
the proxy $\widehat V_{\mathrm{band}}$ because the two differ by an operator of vanishing relative size. Third, the
common target $V_{\mathrm{adj}}$ is asymptotically equivalent to $V_{\mathrm{true}}$.

Throughout, $\widehat V_{\mathrm{band}} := Y_n'A_n^{\mathrm{band}}Y_n$ for
\[
A^{\mathrm{band}}_n := W_n^{\mathrm{CHS}} + C_M\otimes Q_G
= J_T\otimes I_G + \tfrac{G-1}{G}\,C_M\otimes J_G,
\qquad V_{\mathrm{band}} := \E[\widehat V_{\mathrm{band}}\mid\mathcal C_n],
\]
the band-limited proxy; it plays no role beyond the consistency proof.

\begin{proposition}[Conservativeness]
\label{prop:both-conservative}
$V_{\mathrm{ET}}^\star - V_{\mathrm{adj}}$ and $V_{\mathrm{C2NW}} - V_{\mathrm{adj}}$ are
both positive semidefinite.
\end{proposition}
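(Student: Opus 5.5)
The plan is to reduce both claims to \Cref{thm:conservative} with $\mathcal S_n=\mathrm{Sym}_n^+$ and $\mathcal M_n=\R^n$. In that case (C1) and (C2) say exactly that $A_n-W_n^{\mathrm{CHS}}\succeq0$ and $A_n\succeq0$. The vector-valued case is handled by scalarization. Fix $\nu\in\R^v$ and apply everything to the scalar array $\nu'Y_{n,i}$, whose conditional mean vector $\mu_\nu$ and covariance $\Sigma_\nu\succeq0$ are $\mathcal C_n$-measurable. Positive semidefiniteness of a $v\times v$ difference is equivalent to nonnegativity of every such quadratic form.

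\textbf{Step 1: write all three matrices as quadratic forms in a single weight.} For any weight $A_n$,
\[
\nu'\E[\textstyle\sum_{i,j}(A_n)_{ij}Y_{n,i}Y_{n,j}'\mid\mathcal C_n]\nu=\tr(A_n\Sigma_\nu)+\mu_\nu'A_n\mu_\nu .
\]
By \Cref{lem:chs-matrix-form} and the identification stated after \eqref{eq:vadj}, $\nu'V_{\mathrm{adj}}\nu=\tr(W_n^{\mathrm{CHS}}\Sigma_\nu)$. I work in the balanced panel with one observation per cell, which is where \Cref{lem:chs-matrix-form} and \Cref{cor:chs_optimal} are stated. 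Hence, for $A_n\in\{A_n^\star,A_n^{\mathrm{C2NW}}\}$,
\[
\nu'(V_A-V_{\mathrm{adj}})\nu=\tr\bigl((A_n-W_n^{\mathrm{CHS}})\Sigma_\nu\bigr)+\mu_\nu'A_n\mu_\nu .
\]
This is nonnegative whenever $A_n-W_n^{\mathrm{CHS}}\succeq0$ and $A_n\succeq0$, because the trace of a product of two PSD matrices is nonnegative.

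\textbf{Step 2: the ET estimand.} By \Cref{cor:chs_optimal}, $A_n^\star=(W_n^{\mathrm{CHS}})^+$. This matrix is PSD by construction, and $A_n^\star-W_n^{\mathrm{CHS}}=(W_n^{\mathrm{CHS}})^-\succeq0$; this negative part is also written as $B^-\otimes Q_G$, a Kronecker product of two PSD matrices. Both conditions of Step 1 hold, which settles the first claim.

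\textbf{Step 3: the C2NW estimand.} First I read off its weight matrix in the balanced panel. The $g$-cluster sum gives $J_T\otimes I_G$ and the $t$-cluster sum gives $I_T\otimes J_G$. The kernel lag terms give $(C_M-I_T)\otimes J_G$. The added term $2\sum_{m=1}^M\omega(m,M)\sum_t y_ty_t'$ gives $c_M I_T\otimes J_G$ with $c_M:=2\sum_{m=1}^M\omega(m,M)\ge0$. Therefore
\[
A_n^{\mathrm{C2NW}}=J_T\otimes I_G+(C_M+c_MI_T)\otimes J_G .
\]
This is PSD, since $C_M\succeq0$ by the Fej\'er-kernel argument of \Cref{sec:newey-west-revisit}. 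Subtracting $W_n^{\mathrm{CHS}}=J_T\otimes I_G+C_M\otimes J_G-C_M\otimes I_G$ leaves
\[
A_n^{\mathrm{C2NW}}-W_n^{\mathrm{CHS}}=C_M\otimes I_G+c_M\,I_T\otimes J_G\succeq0 ,
\]
and Step 1 again gives nonnegativity.

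\textbf{Main obstacle.} The delicate point is the bookkeeping that matches the displayed $V_{\mathrm{adj}}$ in \eqref{eq:vadj} and the C2NW formula to the claimed Kronecker weights. In particular, the within-cluster lag double-counting line of \eqref{eq:Vchs} produces the $-C_M\otimes I_G$ off-diagonal structure, and I have to check that it is treated consistently. I would verify this entrywise, comparing entries $(g,t),(g',s)$ according to whether $g=g'$ and whether $|t-s|\le M$. If the C2NW weight turns out to differ by such a within-cluster lag block, the difference remains a Kronecker product of $C_M-I_T$ or $C_M$ with $I_G$ or $J_G$. I would then re-verify its sign using $C_M\succeq0$ and $c_M\ge\lambda_{\max}(I_T-C_M)$.
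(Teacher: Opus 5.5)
Your ET half is the paper's argument. For C2NW your route differs from the paper's. You read off the Kronecker weight $A_n^{\mathrm{C2NW}}=J_T\otimes I_G+(C_M+c_MI_T)\otimes J_G$, which is the matrix the paper records in \Cref{rem:efficiency-ordering}, with $c_M=M$ for the triangular kernel. You then verify (C1)--(C2) for $\mathcal S_n=\Sym_n^+$, i.e.\ $A_n^{\mathrm{C2NW}}\succeq0$ and $A_n^{\mathrm{C2NW}}-W_n^{\mathrm{CHS}}=c_MI_T\otimes J_G+C_M\otimes I_G\succeq0$.

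The paper instead expands $V_{\mathrm{C2NW}}-V_{\mathrm{adj}}$ in the general cell notation. The pieces are group- and time-mean outer products, the within-cell covariance, and a kernel-weighted mean block, which it shows is PSD by completing the square lag by lag using only $\omega(m,M)\ge0$. That route covers unbalanced panels with several observations per cell and never needs $C_M\succeq0$. Yours is shorter, applies one criterion to both estimators, and avoids the entrywise bookkeeping. Indeed, the paper's expansion omits the term $2\sum_m\omega(m,M)\sum_t\Var(y_t\mid\mathcal C_n)$ from the doubled within-period block; this is harmless only because that term is PSD.

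Your flagged obstacle has a definite resolution, and your choice is the right one. As printed, \eqref{eq:vadj} lacks the within-group lag double-count line. So literally $V_{\mathrm{adj}}=\tr\big((J_T\otimes I_G+C_M\otimes J_G-I_T\otimes I_G)\Sigma_n\big)=\tr(W_n^{\mathrm{CHS}}\Sigma_n)+\tr\big(((C_M-I_T)\otimes I_G)\Sigma_n\big)$, and this literal version is the benchmark the paper's C2NW computation uses.

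Against the literal benchmark, C2NW still passes, since its gap becomes $c_MI_T\otimes J_G+I_T\otimes I_G$. ET does not pass: its gap picks up $-(C_M-I_T)\otimes I_G$. For $M\ge1$, take $\mu_n=0$ and $\Sigma_n=(u\otimes\mathbf 1_G)(u\otimes\mathbf 1_G)'$, where $u$ is a unit eigenvector of the nonzero traceless matrix $C_M-I_T$ for an eigenvalue $\lambda>0$. In the scalar case this gives $V_{\mathrm{ET}}^\star-V_{\mathrm{adj}}=-G\lambda<0$, because the $B^-\otimes Q_G$ term vanishes on $\mathbf 1_G$.

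So state $V_{\mathrm{adj}}:=\tr(W_n^{\mathrm{CHS}}\Sigma_n)$ explicitly as the definition; the paper's ET argument relies on the same identification. With that, your proof is complete. Note that the C2NW half then genuinely needs $C_M\succeq0$, because on the $Q_G$ block its gap is $C_M\otimes Q_G$.
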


The weight matrix of $\widehat V_{\mathrm{C2NW}}$, unlike those of $\widehat
V_{\mathrm{CHS}}$ and $\widehat V^\star_{\mathrm{ET}}$, is not entrywise bounded: the
conservative term $2\sum_t y_ty_t'$ sits inside the bracket summed over $m$, so every lag
deposits on the same within-period pairs, which accumulate weight
$2\sum_{m=1}^M\omega(m,M)=M$. 
The lag terms are innocuous by contrast: $\omega(m,M)\sum_t y_ty_{t+m}'$ deposits on pairs at temporal distance \emph{exactly} $m$, so distinct lags occupy disjoint entries and each receives a single kernel weight --- which is why $W_n^{\mathrm{CHS}}$ has entries in $[0,2]$ irrespective of $M$.
Since the variance bound of \Cref{lem:quadratic_LLN} scales with the
square of the largest entry, $\widehat V_{\mathrm{C2NW}}$ requires an additional condition.

\begin{assumption}
\label{asmp:vcon_converge_c2nw}
$\frac{M^2 n}{\lambda_n^2} \sum_{s \geq 0} c_n(s, 0; 2)\, \theta_{n,s}^{1 - 4/p} \to 0$.
\end{assumption}

The $O(M)$ entries are confined to pairs at distance $0$, so
\Cref{asmp:vcon_converge_c2nw} evaluates $c_n(\cdot)$ at bandwidth $0$ rather than $M$;
since $c_n(s,0;2)=0$ for $s\ge1$, the sum collapses to its $s=0$ term. In the balanced
panel $c_n(0,0;2)\asymp(G+T)^2$ and $\lambda_n\asymp T^2G$, so the left-hand side is
$\asymp M^2/T^2$ --- the same order as that of \Cref{asmp:vcon_converge}, and likewise
reducing to $M=o(T)$. 
The two estimators $\widehat{V}_{\mathrm{C2NW}}$ and $\widehat{V}^\star_{\mathrm{ET}}$ therefore require the same bandwidth condition in
the balanced panel, and the case for $\widehat V^\star_{\mathrm{ET}}$ rests on the
efficiency ordering of \Cref{rem:efficiency-ordering}, not on weaker regularity.

\begin{lemma}[Consistency of the band-limited estimators]
\label{lem:both-consistent}
Suppose \Cref{asmp:psi_dependence} holds and
$\lambda_{\min}(V_\bullet)/\lambda_n \geq 1 + o(1)$. Then
\[
V_\bullet^{-1} \widehat V_\bullet \xrightarrow{p} I_v ,
\]
for $\bullet=\mathrm{band}$ under \Cref{asmp:vcon_converge}, and for
$\bullet=\mathrm{C2NW}$ under Assumptions~\ref{asmp:vcon_converge}
and~\ref{asmp:vcon_converge_c2nw}.
\end{lemma}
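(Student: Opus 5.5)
The plan is to reduce both cases to \Cref{lem:quadratic_LLN}, applied to the normalized weights $A_n^\bullet/\lambda_n$, and then to convert additive consistency into the multiplicative statement using the eigenvalue condition on $V_\bullet$.

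\textbf{Step 1: reduction to scalars.} By the Cramér--Wold device I work with $\nu'\widehat V_\bullet\nu$ for a fixed unit $\nu$, so that $Y_{n,i}$ is replaced by the scalar $\nu'Y_{n,i}$. This scalar array inherits $\psi$-dependence: Lipschitz functions of $\nu'Y$ are Lipschitz in $Y$ with the same constant. The moment bound (i) follows from \Cref{asmp:psi_dependence}(c). The symmetrized cross terms $Y_{n,i}Y_{n,j}'+Y_{n,j}Y_{n,i}'$ become $2(\nu'Y_{n,i})(\nu'Y_{n,j})$, so $\nu'\widehat V_\bullet\nu$ is exactly the quadratic form $Y^{\nu\prime}A^\bullet_n Y^\nu$.

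\textbf{Step 2: verify the lemma's conditions for each weight matrix.} For $\bullet=\mathrm{band}$, $A_n^{\mathrm{band}}=J_T\otimes I_G+\frac{G-1}{G}C_M\otimes J_G$. Its entries lie in $[0,2]$, so $a_n=2$. It vanishes unless the pair shares $g$ (distance $0$) or $|t-s|\le M$, so $b_n=M$. Condition (iv) with $\bar a_n=2/\lambda_n$ is then exactly \Cref{asmp:vcon_converge}.

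For $\bullet=\mathrm{C2NW}$, I split the weight matrix into two pieces.
\begin{itemize}
\item The CGM2 and lag part has entries bounded by a constant and bandwidth $M$; it is handled as in the band case under \Cref{asmp:vcon_converge}.
\item The within-period piece carries weight $M$ on pairs at distance $0$. I apply the lemma separately with $\bar a_n=M/\lambda_n$ and $b_n=0$, and condition (iv) becomes \Cref{asmp:vcon_converge_c2nw}.
\end{itemize}
Summing the two pieces gives $\lambda_n^{-1}\nu'(\widehat V_\bullet-V_\bullet)\nu\xrightarrow{p}0$ for every unit $\nu$. Applying this to $e_k$ and $(e_k\pm e_l)/\sqrt2$ gives it entrywise, hence $\lambda_n^{-1}\|\widehat V_\bullet-V_\bullet\|\xrightarrow{p}0$.

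\textbf{Step 3: normalization.} Write $V_\bullet^{-1}\widehat V_\bullet-I_v=(V_\bullet/\lambda_n)^{-1}\lambda_n^{-1}(\widehat V_\bullet-V_\bullet)$. Since $\lambda_{\min}(V_\bullet)/\lambda_n\ge1+o(1)$, we have $\|(V_\bullet/\lambda_n)^{-1}\|\le 1+o(1)$, and the product tends to zero in probability.

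\textbf{Main obstacle.} The delicate step is the C2NW split. The lemma's bound depends on the largest entry squared, so lumping the $O(M)$ within-period weights together with the bandwidth-$M$ part would require $M^2\cdot(\text{band term})\to0$, which is too strong. Two things must therefore be checked. First, the within-period block is genuinely supported at distance $0$: the same period always shares the $t$ cluster, so $d_n=0$ there. Second, the lemma applies to each piece separately, and linearity of expectation lets the two errors add.
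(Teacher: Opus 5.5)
Your proposal is correct and follows the paper's proof essentially step for step. You scalarize via $\nu$ and apply \Cref{lem:quadratic_LLN} to $A_n^{\mathrm{band}}/\lambda_n$ with bandwidth $M$, and to the same split $A_n^{\mathrm{C2NW}}=(J_T\otimes I_G+C_M\otimes J_G)+M\,I_T\otimes J_G$ with bandwidths $M$ and $0$ (exactly where \Cref{asmp:vcon_converge_c2nw} enters), then recover the matrix statement by polarization and normalize using $\lambda_{\min}(V_\bullet)/\lambda_n\ge 1+o(1)$.
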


\begin{proposition}[Consistency of the exact eigenvalue-truncation estimator]
\label{prop:exact-opt-consistent}
Suppose Assumptions~\ref{asmp:psi_dependence} and~\ref{asmp:vcon_converge} hold,
$\lambda_{\min}(V_{\mathrm{ET}}^\star)/\lambda_n\ge 1+o(1)$, and $Mn/\lambda_n\to0$
(equivalently $M/T\to0$ in the balanced panel, where $\lambda_n\asymp T^2G$). Then
\[
(V_{\mathrm{ET}}^\star)^{-1}\,\widehat V_{\mathrm{ET}}^\star \xrightarrow{p} I_v.
\]
\end{proposition}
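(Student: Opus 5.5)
The plan is to transfer consistency from the band-limited proxy $\widehat V_{\mathrm{band}}$, which is covered by \Cref{lem:both-consistent}, to the exact truncation. The two weight matrices differ by
\[
A_n^\star - A_n^{\mathrm{band}} = (W_n^{\mathrm{CHS}}+B^-\otimes Q_G) - (W_n^{\mathrm{CHS}}+C_M\otimes Q_G) = (B^- - C_M)\otimes Q_G =: D_n .
\]
Writing $B^-=(C_M-J_T)_+$, we have $B^- - C_M = (C_M-J_T)_+ - (C_M - J_T) - J_T = (C_M-J_T)_- - J_T$. The $J_T\otimes Q_G$ piece and the negative part $(C_M-J_T)_-$ are the terms to control.

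\textbf{Step 1: operator norm of $D_n$.} First, $C_M\succeq 0$ by nonnegativity of the Fej\'er window (\Cref{sec:newey-west-revisit}), so $C_M - J_T\succeq -J_T$. Since $J_T$ has rank one, $C_M-J_T$ has at most one negative eigenvalue, bounded in magnitude by $T$. The positive eigenvalues of $C_M-J_T$ are bounded by $\|C_M\|_{op}\le \max_\omega f_{\mathrm{Fej}}(\omega)=M+1$. Hence $\|B^- - C_M\|_{op}\lesssim M$ off the constant-in-time direction, plus a rank-one term of size $T$ in the direction of $J_T$.

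The $J_T\otimes Q_G$ part carries weight $T$ on a rank $G-1$ subspace, so bounding it crudely is too weak. I would therefore not bound $D_n$ directly. Instead I would replace the proxy by $A_n^{\mathrm{band}} - J_T\otimes Q_G$ plus a separately handled term. The term $Y_n'(J_T\otimes Q_G)Y_n=\sum_g(\sum_t \check Y_{gt})^2$ is a one-way (within-$g$) cluster sum of centered scores. Its weight matrix has bounded entries and support at distance $0$, so \Cref{lem:quadratic_LLN} applies to it under \Cref{asmp:vcon_converge} directly.

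The remainder $(C_M-J_T)_-\otimes Q_G$ has operator norm $\le M+1$, since $(C_M - J_T)_- = (J_T-C_M)_+$. Wait — this is exactly $B^-$, so I need to recompute. We have $B^- - C_M = -(C_M-J_T)_- - J_T$, and $(C_M-J_T)_-=(J_T-C_M)_+\preceq J_T$, a rank-one PSD matrix with norm $\le T$. This residual is rank $\le G-1$ with norm $\lesssim T$, which is what must be shown negligible.

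\textbf{Step 2: relative negligibility.} For any symmetric $D$ the bound is $|Y_n'DY_n|\le \|D\|_{op}\|Y_n\|^2$, and $\E\|Y_n\|^2\lesssim n$ by \Cref{asmp:psi_dependence}(c). This gives a contribution of order $\|D\|_{op}\, n/\lambda_n$. With $\|D\|_{op}\lesssim M$ on the non-rank-one part, the contribution is $Mn/\lambda_n\to 0$, which is exactly the stated rate condition. For the rank-deficient pieces supported near the constant-in-time direction, I would use the finer bound $Y_n'(vv'\otimes Q_G)Y_n=\sum_g(\sum_t v_t\check Y_{gt})^2$. Its expectation is controlled by a covariance summation using the $\psi$-dependence coefficients, which again yields order $n\cdot(\text{bounded})$ divided by $\lambda_n$.

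\textbf{Step 3: conclude.} Combine Steps 1–2 with \Cref{lem:both-consistent}, with the estimands differing by the same expectations. This gives $\lambda_n^{-1}(\widehat V^\star_{\mathrm{ET}}-V^\star_{\mathrm{ET}})\to_p 0$. Then divide by $\lambda_{\min}(V^\star_{\mathrm{ET}})/\lambda_n\ge 1+o(1)$ and apply componentwise for vector $Y$ via the Cramér–Wold device.

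\textbf{Main obstacle.} The hardest step is the spectral bookkeeping in Step 1: the rank-one $J_T$ interaction makes $B^-$ non-banded with eigenvalues up to order $T$, so a crude operator-norm bound fails. The fix I would try first is to split $B^-$ along $\mathbf 1_T$ and its orthocomplement. The $\mathbf 1_T$ component reduces to a within-$g$ cluster sum handled by \Cref{lem:quadratic_LLN}, and the orthocomplement has norm $O(M)$, handled by $Mn/\lambda_n\to0$.
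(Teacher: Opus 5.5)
\textbf{There is a genuine gap, and it is in the spectral bookkeeping you flag as the main obstacle.}

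You already have what the crude bound needs. You show $\|C_M\|_{\mathrm{op}}\le M+1$, and that the positive eigenvalues of $C_M-J_T$ are at most $M+1$, i.e.\ $\|B^-\|_{\mathrm{op}}\le M+1$. Since $B^-$ and $C_M$ are both positive semidefinite, $-B^-\preceq C_M-B^-\preceq C_M$. Hence $\|C_M-B^-\|_{\mathrm{op}}\le M+1$ and $\|D_n\|_{\mathrm{op}}\le M+1$. Your own first bound in Step 2, $|Y_n'D_nY_n|\le\|D_n\|_{\mathrm{op}}\|Y_n\|^2$, then gives $O_p(Mn)=o_p(\lambda_n)$ for both $Y_n'D_nY_n$ and its conditional expectation. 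Together with \Cref{lem:both-consistent}, that is the whole proof, and it is the paper's argument.

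The ``rank-one term of size $T$'' lives in $(C_M-J_T)_-=(J_T-C_M)_+=B^+$, not in $B^-$. In your correct identity $B^--C_M=B^+-J_T$, this term is cancelled by $J_T$. The cancellation is lost when you later write $B^--C_M=-(C_M-J_T)_--J_T$, which flips the sign of the $B^+$ term.

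\textbf{Why the split route does not close.} Having separated $D_n$ into $B^+\otimes Q_G$ and $-J_T\otimes Q_G$, your argument as written fails, because each piece is of order $\lambda_n$ rather than $o(\lambda_n)$.
\begin{itemize}
\item Pairs sharing $g$ are at distance $0$, so $\psi$-dependence gives no decay within a group. With group random effects and $v\approx\mathbf 1_T/\sqrt T$ (the top eigenvector of $J_T-C_M$), the expectation of $\sum_g\big(\sum_t v_t\check Y_{gt}\big)^2$ is of order $GT$. Multiplied by the top eigenvalue of $B^+$, which is of order $T$, this gives order $T^2G\asymp\lambda_n$. So your Step 2 claim that the rank-one piece is ``order $n\cdot(\text{bounded})$'' is false. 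You would instead need a separate law of large numbers for the fluctuation of each piece, which you do not supply.
\item The claim that \Cref{lem:quadratic_LLN} applies directly to $J_T\otimes Q_G$ also fails. We have $J_T\otimes Q_G=J_T\otimes I_G-J_n/G$, and the $J_n/G$ part puts weight $-1/G$ on every pair. So the support condition (iii) does not hold with $b_n=0$.
\end{itemize}
None of this machinery is needed once the two size-$T$ terms are kept together.
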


\begin{proposition}
\label{prop:vadj_to_vtrue}
Under Assumptions~\ref{asmp:psi_dependence} and~\ref{asmp:vadj_to_vtrue},
$V_{\mathrm{true}}^{-1} V_{\mathrm{adj}} \to I_v$.
\end{proposition}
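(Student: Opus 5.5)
The plan is to write the gap $V_{\mathrm{true}}-V_{\mathrm{adj}}$ as an explicit sum of cross-covariances over pairs of observations. I then bound each block of pairs by the covariance inequality implied by $\psi$-dependence, and sum using the counting objects $\delta_n^\partial$ and the cell sizes. The final step is
\[
\big\|V_{\mathrm{true}}^{-1}V_{\mathrm{adj}}-I_v\big\|
\le \frac{\|V_{\mathrm{adj}}-V_{\mathrm{true}}\|}{\lambda_n},
\]
since $\lambda_n=\lambda_{\min}(V_{\mathrm{true}})$. It therefore suffices to show $\|V_{\mathrm{adj}}-V_{\mathrm{true}}\|/\lambda_n\to0$ almost surely, working entrywise in the $v\times v$ matrices.

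\textbf{Step 1: pair bookkeeping.} Write $V_{\mathrm{true}}=\sum_{i,j\in N_n}\Cov(Y_{n,i},Y_{n,j}\mid\mathcal C_n)$. By inclusion--exclusion, the first line of \eqref{eq:vadj} counts each pair sharing $g$ or sharing $t$ exactly once. The kernel line $\sum_{m\le M}\omega(m,M)[\Cov(y_t,y_{t+m})+\Cov(y_{t+m},y_t)]$ covers all pairs at temporal lag $1\le m\le M$. This includes the pairs in the same group $g$ at different periods, which the group term has already counted. Consequently $V_{\mathrm{adj}}-V_{\mathrm{true}}$ splits into three pieces:
\begin{enumerate}
\item[(A)] pairs in different groups at lag $1\le m\le M$, weighted by $\omega(m,M)-1$;
\item[(B)] same-group pairs at lag $m\ge1$ that carry an extra kernel weight (the double count);
\item[(C)] pairs in different groups at lag $m>M$, which carry weight $-1$ because $\omega=0$ there.
\end{enumerate}
If one instead reads $V_{\mathrm{adj}}=\tr(W_n^{\mathrm{CHS}}\Sigma_n)$, piece (B) is absent and the argument only simplifies.

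\textbf{Step 2: covariance inequality.} Two observations in different groups and different periods at lag $m$ are at distance $d_n(i,j)=m$. Combining Assumption~\ref{asmp:psi_dependence}(a),(c) with the KMS truncation argument (their Corollary A.2) gives
\[
|\Cov(Y_{n,i},Y_{n,j}\mid\mathcal C_n)|\le C\,\theta_{n,m}^{1-2/p}
\]
uniformly. For same-group pairs I use the same form of bound, which is what Assumption~\ref{asmp:vadj_to_vtrue}(b) is designed for.

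\textbf{Step 3: counting pairs in each piece.}
\begin{itemize}
\item[(A)] The number of pairs at lag $m$ is at most $n\,\delta_n^\partial(m;1)$. Hence $|(\mathrm A)|\le C n\sum_{m=1}^M|\omega(m,M)-1|\,\delta_n^\partial(m;1)\,\theta_{n,m}^{1-2/p}$, and this is $o(\lambda_n)$ by Assumption~\ref{asmp:vadj_to_vtrue}(a).
\item[(B)] This piece is bounded by $2\sum_{m\ge1}\sum_t\sum_g|N^{T\cap G}_{t,g}||N^{T\cap G}_{t+m,g}|\,\theta_{n,m}^{1-2/p}$, which is $o(\lambda_n)$ by Assumption~\ref{asmp:vadj_to_vtrue}(b).
\item[(C)] This piece is bounded by $\sum_{m>M}\sum_t|N_t^T||N_{t+m}^T|\,\theta_{n,m}^{1-2/p}$, which is $o(\lambda_n)$ by Assumption~\ref{asmp:vadj_to_vtrue}(c).
\end{itemize}

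\textbf{Main obstacle.} The delicate part is Step 1: getting the pair accounting exactly right, so that every pair outside the ``share $g$ or share $t$'' set is assigned to exactly one of (A)--(C) with the correct signed weight. In particular, the within-group lagged pairs must be matched to condition (b) without being counted twice. The covariance inequality in Step 2 also needs care in the vector case: I would apply it to each coordinate $e_k'Y_{n,i}$, truncated at a level chosen as a function of $p$, and then let the bound absorb the constant $v^2$.
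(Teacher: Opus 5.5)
Your route is the paper's. You split $V_{\mathrm{adj}}-V_{\mathrm{true}}$ into a kernel-distortion piece, a within-group double-count piece and a long-lag piece, bound covariances by $\theta_{n,m}^{1-2/p}$ via the KMS covariance inequality, and count pairs so the pieces match Assumption~\ref{asmp:vadj_to_vtrue}(a), (b), (c). Your bookkeeping is tighter than the paper's. The paper applies the $\theta_{n,m}$ bound to \emph{every} lag-$m$ pair, same-group ones included, and puts a $+$ where a $-$ belongs on its long-lag term (harmless after absolute values). Restricting (A) and (C) to different-group pairs is exactly what makes $d_n(i,j)=m\ge1$ true there.

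The step that fails is (B), and the paper's proof fails at the same point. By the definition of $d_n$, observations with $g(i)=g(j)$ are at distance $0$ whatever their time gap. So $\psi$-dependence gives no bound $C\theta_{n,m}^{1-2/p}$ on their covariance, and the way Assumption~\ref{asmp:vadj_to_vtrue}(b) is written cannot supply one. This is not a technicality: with $V_{\mathrm{adj}}$ as displayed in \eqref{eq:vadj}, the proposition is false. Take the paper's own within-firm AR(1) example (independent across $g$, trivial $\mathcal C_n$). Sets at positive distance are independent, so one may take $\theta_{n,s}=0$ for $s\ge1$, and Assumption~\ref{asmp:vadj_to_vtrue} holds trivially. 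Yet (A) and (C) vanish and
\[
V_{\mathrm{adj}}-V_{\mathrm{true}}=(\mathrm{B})=\frac{2G}{1-\rho^2}\sum_{m=1}^{M}\omega(m,M)\,(T-m)\,\rho^m .
\]
As $M,T\to\infty$ this is asymptotically the fraction $2\rho/(1+\rho)$ of $V_{\mathrm{true}}\sim GT/(1-\rho)^2$, so $V_{\mathrm{true}}^{-1}V_{\mathrm{adj}}\not\to1$.

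The repair is the reading you mention only in passing, and you should make it the statement you prove: $V_{\mathrm{adj}}=\tr(W_n^{\mathrm{CHS}}\Sigma_n)$. The paper asserts this in the text, and it is what $\widehat V_{\mathrm{CHS}}$, with its double-count line, actually targets. Under this reading (B) is identically zero, and every remaining pair has $d_n(i,j)=m\ge1$. Your argument then closes using parts (a) and (c) only. If you want to keep \eqref{eq:vadj}, you need an additional hypothesis that directly controls within-group lagged covariances. One choice is $\lambda_n^{-1}\sum_{m=1}^{M}\sum_{t=1}^{T-m}\sum_{g=1}^{G}|N_{t,g}^{T\cap G}|\,|N_{t+m,g}^{T\cap G}|\to0$, essentially $Mn/\lambda_n\to0$ in the balanced panel. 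With it, the plain moment bound $|\Cov(Y_{n,i},Y_{n,j}\mid\mathcal C_n)|\le C$ disposes of (B).
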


\begin{remark}[Efficiency ordering]
\label{rem:efficiency-ordering}
$\widehat V_{\mathrm{C2NW}}$ overshoots by placing weight on the group-symmetric subspace, where $W_n^{\mathrm{CHS}}$ needs no correction, rather than on the group-demeaned subspace where its negative eigenvalues live. 
Since $2\sum_{m=1}^M\omega(m,M)=M$ for the triangular kernel, the within-period pairs of $\widehat V_{\mathrm{C2NW}}$ carry weight $M$, so
\[
A_n^{\mathrm{C2NW}} = J_T\otimes I_G + (C_M+M\,I_T)\otimes J_G
= \big[J_T+G\,C_M+GM\,I_T\big]\otimes P_G \;+\; J_T\otimes Q_G ,
\]
while $A_n^{\mathrm{band}} = \big[J_T+(G-1)C_M\big]\otimes P_G + J_T\otimes Q_G$ from the
proof of \Cref{cor:chs_optimal}. The two coincide on the $Q_G$ block, and
\[
A_n^{\mathrm{C2NW}}-A_n^{\mathrm{band}} = \big(C_M+GM\,I_T\big)\otimes P_G \;\succeq\; 0,
\]
positive definite on the group-symmetric block, so $V_{\mathrm{C2NW}}-V_{\mathrm{band}}
\succeq0$. Since $A_n^{\mathrm{band}}-A_n^\star=(C_M-B^-)\otimes Q_G$ has operator norm at
most $M+1$, we have $V_{\mathrm{band}}-V^\star_{\mathrm{ET}} = O(Mn) = o(\lambda_n)$, and
$V_{\mathrm{C2NW}}\succeq V^\star_{\mathrm{ET}}$ up to that order. 
\end{remark}

Under the conditions of Proposition~\ref{prop:vadj_to_vtrue}, the auxiliary condition
$\lambda_{\min}(V_\bullet)/\lambda_n \geq 1 + o(1)$ holds for
$\bullet\in\{\mathrm{C2NW},\mathrm{ET}^\star\}$, since
$\lambda_{\min}(V_\bullet)/\lambda_{\min}(V_{\mathrm{adj}}) \geq 1$ by
Proposition~\ref{prop:both-conservative} and
$\lambda_{\min}(V_{\mathrm{adj}})/\lambda_n = 1 + o(1)$ by
Proposition~\ref{prop:vadj_to_vtrue}. In the running AR(1) examples with
$\rho\in(0,1)$ and the triangular kernel, all three parts of
Assumption~\ref{asmp:vadj_to_vtrue} are implied by $\rho^m$ decay and the balanced
panel structure.


\section{Proofs} \label{sec:proofs}

Let $N_{n}\left(i;s\right)$ denote the set of observations that are within distance $s$ of observation $i$, and let $N_{n}^{\partial}\left(i;s\right)$ denote the set of observations that are exactly at distance $s$ from observation $i$:
\begin{align*}
N_{n}\left(i;s\right) & :=\left\{ j\in N_{n}:d_{n}\left(i,j\right)\leq s\right\}; \text{ and } \\
N_{n}^{\partial}\left(i;s\right) & :=\left\{ j\in N_{n}:d_{n}\left(i,j\right)=s\right\}.  \tag{\stepcounter{equation}\theequation}
\end{align*}

For matrices $A,B$, the notation $A \geq B$ means $A-B$ is positive semidefinite (psd).

\subsection{Proofs of Theorems}

\begin{proof}[Proof of \Cref{thm:impossibility}]
Fix any $A_n \in \mathrm{Sym}_n$ and any
$\Sigma_0 \in \mathcal{S}_n$, $v_n \in \mathcal{M}_n$ satisfying the hypotheses. Define
$\Sigma_1 := \Sigma_0 + v_n v_n' \in \mathcal{S}_n$ and consider
\[
P_n: Y_n \sim N(0, \Sigma_1), \qquad Q_n: Y_n \sim N(v_n, \Sigma_0).
\]
Both have finite second moments, mean in $\mathcal{M}_n$ (since
$0, v_n \in \mathcal{M}_n$), and covariance in $\mathcal{S}_n$, so both lie in
$\mathcal{P}_n(\mathcal{M}_n, \mathcal{S}_n)$. The expectations of the quadratic
estimator under each distribution are
\begin{align*}
\mathbb{E}_{P_n}[\widehat{V}_{A,n}] &= \mathrm{tr}(A_n \Sigma_1) + 0 = \mathrm{tr}(A_n \Sigma_0) + v_n' A_n v_n, \\
\mathbb{E}_{Q_n}[\widehat{V}_{A,n}] &= \mathrm{tr}(A_n \Sigma_0) + v_n' A_n v_n,
\end{align*}
which are identical: $\mathbb{E}_{P_n}[\widehat{V}_{A,n}] = \mathbb{E}_{Q_n}[\widehat{V}_{A,n}]$
for every symmetric $A_n$. However, the target values differ:
\[
\theta_n(P_n) - \theta_n(Q_n) = \mathrm{tr}(W_n(\Sigma_1 - \Sigma_0)) = v_n' W_n v_n.
\]
Let $\Delta := v_n' W_n v_n$. For any choice of $A_n$, since the estimator
expectation is common across $P_n$ and $Q_n$ while the target differs by
$\Delta$, the absolute bias must satisfy
\[
\big|\mathbb{E}_{P_n}[\widehat{V}_{A,n}] - \theta_n(P_n)\big| + \big|\mathbb{E}_{Q_n}[\widehat{V}_{A,n}] - \theta_n(Q_n)\big| \geq |\Delta|,
\]
so $\sup_{P \in \{P_n, Q_n\}} |\mathbb{E}_P[\widehat{V}_{A,n}] - \theta_n(P)| \geq |\Delta|/2$,
and hence the supremum over the full model is at least $|\Delta|/2$. 
\end{proof}

\begin{proof}[Proof of \Cref{thm:conservative}]
By \eqref{eq:plug-decomp}, conservativeness is
\[
\tr\!\bigl((A_n-W_n)\Sigma\bigr)+\mu'A_n\mu\geq0
\quad\text{for all }(\mu,\Sigma)\in\mathcal{M}_n\times\mathcal{S}_n.
\]
Since $0\in\mathcal{M}_n$ as $\mathcal{M}_n$ is a linear subspace of $\mathbb{R}^n$ and $0\in\mathcal{S}_n$ as $\mathcal{S}_n$ is a closed convex cone, taking $\mu=0$ and
$\Sigma=0$ separately decouples the joint requirement into
\eqref{eq:C1}--\eqref{eq:C2}. Sufficiency is immediate.
\end{proof}

\begin{proof}[Proof of \Cref{thm:eig-truncation}]
Throughout, $\mathcal M_n=\R^n$, so (C2) reads $A_n\succeq0$ and
$\mathcal R_n(A_n)=\sup_{\|\mu\|=1}\mu'(A_n-W_n)\mu=\lambda_{\max}(A_n-W_n)$.

\emph{(i).} Let $A_n$ be uniformly conservative and set $\Delta:=A_n-W_n$. Two lower
bounds combine. First, take a unit eigenvector $u$ with $W_nu=\lambda_n^{\min}u$; by
(C2), $u'A_nu\ge0$, so
\[
\mathcal R_n(A_n)\ \ge\ u'\Delta u\ =\ u'A_nu-\lambda_n^{\min}\ \ge\ -\lambda_n^{\min}.
\]
Second, since $\mathcal S_n\neq\{0\}$ pick $\Sigma_0\in\mathcal S_n\setminus\{0\}$; it is
positive semidefinite and nonzero, so if $\mathcal R_n(A_n)<0$, i.e.\ $\Delta\prec0$,
then $\tr(\Delta\Sigma_0)<0$, contradicting $\Delta\in\mathcal S_n^*$. Hence
$\mathcal R_n(A_n)\ge0$. Combining, $\mathcal R_n(A_n)\ge(-\lambda_n^{\min})_+$. The
truncation attains this bound and is feasible: $A_n^\star-W_n=W_n^-\succeq0\in
\mathcal S_n^*$ gives (C1), $A_n^\star=W_n^+\succeq0$ gives (C2), and
$\mathcal R_n(A_n^\star)=\lambda_{\max}(W_n^-)=(-\lambda_n^{\min})_+$. So $A_n^\star$
solves~\eqref{eq:opt-min-corr}.

\medskip
In (ii) and (iii), \eqref{eq:containment} enters only through the memberships
$q_{n,k}q_{n,k}'\in\mathcal S_n$, which turn (C1) into the scalar inequalities
$q_{n,k}'(A_n-W_n)q_{n,k}\ge0$.

\emph{The feasible box.} Under~\eqref{eq:containment}, the uniformly conservative
members of $\mathcal A_n^{\mathrm{sp}}$ are exactly
\begin{equation}
\{A_n(a):a_j\ge\lambda_{n,j}^{-}\ \forall j\},
\qquad\text{whose Loewner-least element is }A_n(\lambda^{-})=W_n^{+}=A_n^\star .
\label{eq:box}
\end{equation}
Indeed, if $a_j\ge\lambda_{n,j}^{-}$ for all $j$ then $a\ge0$, so
$\Delta(a)\succeq0\in\Sym_n^+\subseteq\mathcal S_n^*$, and
$A_n(a)=\sum_j(\lambda_{n,j}+a_j)q_{n,j}q_{n,j}'\succeq0$: both (C1) and (C2) hold.
Conversely let $A_n(a)$ be uniformly conservative. If $\lambda_{n,k}\le0$, (C2) gives
$a_k\ge-\lambda_{n,k}=\lambda_{n,k}^{-}$. If $\lambda_{n,k}>0$, then (C1) at
$q_{n,k}q_{n,k}'\in\mathcal S_n$ gives
$a_k=\tr\bigl(\Delta(a)\,q_{n,k}q_{n,k}'\bigr)\ge0=\lambda_{n,k}^{-}$.

\emph{(ii)(a).} $\tr(A_n(a)\Psi)=\tr(W_n\Psi)+\sum_ja_j\psi_j$ with $\psi_j\ge0$ is
coordinatewise nondecreasing on the box~\eqref{eq:box}, hence minimized at its corner
$a=\lambda^{-}$; linearity in $a$ makes the minimizer unique exactly when every
$\psi_j>0$.

\emph{(ii)(b).} Let $A_n$ be uniformly conservative and fix $k$. If $\lambda_{n,k}>0$,
(C1) at $q_{n,k}q_{n,k}'\in\mathcal S_n$ gives
$q_{n,k}'A_nq_{n,k}\ge q_{n,k}'W_nq_{n,k}=\lambda_{n,k}=\lambda_{n,k}^{+}$; if
$\lambda_{n,k}\le0$, (C2) gives $q_{n,k}'A_nq_{n,k}\ge0=\lambda_{n,k}^{+}$. Hence, for
$\Psi$ diagonal in the eigenbasis,
$\tr(A_n\Psi)=\sum_j\psi_j\,q_{n,j}'A_nq_{n,j}\ge\sum_j\psi_j\lambda_{n,j}^{+}
=\tr(A_n^\star\Psi)$.

\emph{(ii)(c).} 
Recall $A_n(a)=W_n+\Delta(a)$ from \Cref{def:spectral-aligned}, so at
$\Psi=q_{n,k}q_{n,k}'$ every spectral candidate has level
$\tr(A_n(a)\Psi)=\lambda_{n,k}+a_k$, while $\tr(A_n^\star\Psi)=\lambda_{n,k}$. The
claim is therefore equivalent to exhibiting a uniformly conservative $A_n(a)$ with
$a_k<0$. By the box~\eqref{eq:box}, no such $a$ exists when~\eqref{eq:containment}
holds; the construction below extracts an
$a''$ with $a_k''<0$ when $q_{n,k}q_{n,k}'\notin\mathcal S_n$.

Let $\mathcal C:=\{d(\Sigma):\Sigma\in\mathcal S_n\}\subseteq\R^n_+$,
where $d(\Sigma):=\bigl(q_{n,1}'\Sigma q_{n,1},\dots,q_{n,n}'\Sigma q_{n,n}\bigr)'$; it
is a convex cone, being a linear image of one, and $\overline{\mathcal C}$ is a closed
convex cone.
Here $\{e_j\}_{j=1}^n$ is the standard basis of $\R^n$ in the eigen-index $j$, so
that $e_k=d(q_{n,k}q_{n,k}')$ is the diagonal image of the $k$-th eigenprojector.

We first claim $e_k\notin\overline{\mathcal C}$. If it were, there would be
$\Sigma^{(m)}\in\mathcal S_n$ with $d(\Sigma^{(m)})\to e_k$; then
$q_{n,k}'\Sigma^{(m)}q_{n,k}\to1$, so rescaling within the cone we may take
$q_{n,k}'\Sigma^{(m)}q_{n,k}=1$, whence
$\tr(\Sigma^{(m)})=\sum_jq_{n,j}'\Sigma^{(m)}q_{n,j}\to1$. The sequence is bounded, so
along a subsequence $\Sigma^{(m)}\to\Sigma^\star$, and $\Sigma^\star\in\mathcal S_n$
because $\mathcal S_n$ is closed, with
$\tr(\Sigma^\star)=q_{n,k}'\Sigma^\star q_{n,k}=1$. For $\Sigma\succeq0$ of unit trace,
$q'\Sigma q\le1$ with equality only at $\Sigma=qq'$; hence
$\Sigma^\star=q_{n,k}q_{n,k}'\in\mathcal S_n$, a contradiction.

Separation therefore supplies $a\in\R^n$ with $a'd\ge0$ for all
$d\in\overline{\mathcal C}$ and $a_k=a'e_k<0$. Write
$\mathcal C^*:=\{b:b'd\ge0\ \forall d\in\overline{\mathcal C}\}$, a convex cone
containing $\R^n_+$ because $\overline{\mathcal C}\subseteq\R^n_+$. Since
$\tr\bigl(\Delta(b)\Sigma\bigr)=b'd(\Sigma)$ for every $\Sigma$, we have
$\Delta(b)\in\mathcal S_n^*$ whenever $b\in\mathcal C^*$.

The separating $a$ delivers the negative $k$-th coordinate we need but may violate (C2) off coordinate $k$; we correct this without disturbing the sign of $a_k$ by adding a small nonnegative shift in the remaining coordinates.
Fix $0<\varepsilon<\lambda_{n,k}/|a_k|$, set $\delta:=\varepsilon\max_j|a_j|$, and let
\[
a'':=\varepsilon a+\lambda^{-}+\delta\textstyle\sum_{j\neq k}e_j\ \in\ \mathcal C^* ,
\]
the membership because $\mathcal C^*$ is a convex cone containing $\varepsilon a$ and
$\R^n_+$. Then $\Delta(a'')\in\mathcal S_n^*$, giving (C1). For (C2): at $j=k$,
$\lambda_{n,k}^{-}=0$ and $\lambda_{n,k}+a_k''=\lambda_{n,k}+\varepsilon a_k>0$; for
$j\neq k$, $\lambda_{n,j}+a_j''=\lambda_{n,j}^{+}+\varepsilon a_j+\delta\ge0$ since
$\delta\ge\varepsilon|a_j|$. So $A_n(a'')\in\mathcal A_n^{\mathrm{sp}}$ is uniformly
conservative, yet at $\Psi=q_{n,k}q_{n,k}'$,
\[
\tr\bigl(A_n(a'')\Psi\bigr)=\lambda_{n,k}+\varepsilon a_k<\lambda_{n,k}
=\tr(A_n^\star\Psi),
\]
which is the assertion of (c). 

\emph{(iii).} By~\eqref{eq:containment},
$\Sigma\in\mathcal S_n^{W\text{-}\mathrm{diag}}$ lies in $\mathcal S_n$, so any $P_n$
with moments $(\mu,\Sigma)$, $\mu\in\R^n$, lies in $\mathcal P_n(\R^n,\mathcal S_n)$:
\Cref{asmp:no-skew} applies and the MSE is given by~\eqref{eq:mse-clean}. Write
$\Sigma=\Delta(s)$, so $\Sigma q_{n,j}=s_jq_{n,j}$ with $s_j\ge0$, and set
$m_j:=q_{n,j}'\mu$ and $c_j:=\lambda_{n,j}+a_j$. On the box~\eqref{eq:box}, $a\ge0$ and
$c_j\ge\lambda_{n,j}^{+}\ge0$. Diagonality in $\{q_{n,j}\}$ collapses each term
of~\eqref{eq:mse-clean}:
\begin{align*}
\tr((A_n(a)-W_n)\Sigma)=\sum_ja_js_j,\quad
\mu'A_n(a)\mu=\sum_jc_jm_j^2, \\
\tr((A_n(a)\Sigma)^2)=\sum_jc_j^2s_j^2,\quad
\mu'A_n(a)\Sigma A_n(a)\mu=\sum_jc_j^2s_jm_j^2 ,
\end{align*}
so that, with $b(a):=\sum_ja_js_j+\sum_jc_jm_j^2$,
\[
\mathrm{MSE}_n(A_n(a);\mu,\Sigma)=b(a)^2+2\sum_jc_j^2s_j^2+4\sum_jc_j^2s_jm_j^2 .
\]
The quantity $b(a)$ is the bias, nonnegative by \Cref{thm:conservative} and, on the
box, termwise. For each $\ell$,
\[
\frac{\partial\,\mathrm{MSE}_n}{\partial a_\ell}
=2b(a)\,(s_\ell+m_\ell^2)+2c_\ell\bigl(2s_\ell^2+4s_\ell m_\ell^2\bigr)\ \ge\ 0 ,
\]
so $\mathrm{MSE}_n$ is coordinatewise nondecreasing on the box and is minimized at its
corner $a=\lambda^{-}$, i.e.\ at $A_n^\star$. If every $s_j>0$ then
$2s_\ell^2+4s_\ell m_\ell^2>0$ and $c_\ell>0$ off the corner, so the derivative is
strictly positive there and the minimizer is unique.
\end{proof}

\begin{proof}[Proof of \Cref{thm:clt}]
It suffices to verify the conditions of KMS Theorem 3.2, using notation in this setup. 
Our \Cref{asmp:psi_dependence} is identical to KMS Assumption 2.1. 
$Y_{n,i}$ satisfying that condition implies that $\nu^{\prime}Y_{n,i}$ also satisfies that condition.
\Cref{asmp:psi_dependence}(c)'s finite moment condition satisfies KMS Assumption 3.3. 
Observe that, since $\left\Vert \nu\right\Vert =1$, we have $\sigma_{n}^{2}\geq \lambda_{n}$ as the worst case $\sigma_{n}^{2}$ loads all weight on a combination of variance components that results in the smallest eigenvalue. 
Since $\sigma_{n}^{2}\geq \lambda_{n}$, for some arbitrary constant $C>0$, 
\begin{align*}
\frac{n}{\sigma_{n}^{2+k}}\sum_{s\geq0}c_{n}\left(s,m_{n};k\right)\theta_{n,s}^{1-\frac{2+k}{p}} & \leq\frac{Cn}{\lambda_{n}^{1+k/2}}\sum_{s\geq0}c_{n}\left(s,m_{n};k\right)\theta_{n,s}^{1-\frac{2+k}{p}}\xrightarrow{a.s.}0, \text{ and } \tag{\stepcounter{equation}\theequation}\\
\frac{n^{2}\theta_{n,m_{n}}^{1-(1/p)}}{\sigma_{n}} & \leq\frac{Cn^{2}\theta_{n,m_{n}}^{1-(1/p)}}{\lambda_{n}^{1/2}}\xrightarrow{a.s.}0.  \tag{\stepcounter{equation}\theequation} 
\end{align*}

It then remains to show that the setting in this paper satisfies the corresponding definitions in KMS. 
First, normalize $N_{n}\left(j;s\right)=\emptyset$ for $s<0$, and observe that, when using $s=0$, 
\begin{equation} \label{eq:s0_kms_match}
\frac{1}{n}\sum_{i\in N_{n}}\left|N_{n}^{\partial}\left(i;0\right)\right|^{k}\leq\frac{1}{n}\sum_{i\in N_{n}}\left(\left|N_{g(i)}^{G}\right|+\left|N_{t(i)}^{T}\right|\right)^{k}=\delta_{n}^{\partial}\left(0;k\right).
\end{equation}

There are two cases to consider: when $m=0$ and when $m\geq1$.
When $m=0$, observe that, for all $s \geq 0$, we obtain 
\begin{equation}
\frac{1}{n}\sum_{i\in N_{n}}\left|N_{n}\left(i;0\right)\right|^{k} \leq\frac{1}{n}\sum_{i\in N_{n}}\left(\left|N_{g(i)}^{G}\right|+\left|N_{t(i)}^{T}\right|\right)^{k}=\Delta_{n}\left(s,0;k\right),
\end{equation}
so the $\delta^\partial_n(0;k)$ and $\Delta_n(s,0;k)$ defined here place an upper bound on the corresponding definitions in KMS for $m=0$.

Now, consider the case of $m\geq 1$.
The case of $s=0$ is derived in \Cref{eq:s0_kms_match}.
When $s \geq 1$, the corresponding definitions are also satisfied as
\begin{equation}
\frac{1}{n}\sum_{i\in N_{n}}\left|N_{n}^{\partial}\left(i;s\right)\right|^{k} =\frac{1}{n}\sum_{i\in N_{n}}\left(\left|N_{t(i)+s}^{T}\right|+\left|N_{t(i)-s}^{T}\right|\right)^{k} =\delta_{n}^{\partial}\left(s;k\right), \text{ and }
\end{equation}
\begin{align*}
 & \frac{1}{n}\sum_{i\in N_{n}}\max_{j\in N_{n}^{\partial}\left(i;s\right)}\left|N_{n}\left(i;m\right)\backslash N_{n}\left(j;s-1\right)\right|^{k}\\
 & =\frac{1}{n}\sum_{i\in N_{n}}\max_{j\in N_{n}^{\partial}\left(i;s\right)}\left|\sum_{h=0}^{m}\left|N_{t(i)+h}^{T}\right|+\sum_{h=0}^{m}\left|N_{t(i)-h}^{T}\right|-\sum_{h=0}^{s-1}\left|N_{t(i)+h}^{T}\right|-\sum_{h=0}^{s-1}\left|N_{t(i)-h}^{T}\right|\right|^{k}\\
 & =\frac{1}{n}\sum_{i\in N_{n}}\left(\sum_{h=0}^{m}\left|N_{t(i)+h}^{T}\right|+\sum_{h=0}^{m}\left|N_{t(i)-h}^{T}\right|-\sum_{h=0}^{s-1}\left|N_{t(i)+h}^{T}\right|-\sum_{h=0}^{s-1}\left|N_{t(i)-h}^{T}\right|\right)^{k}\\
 & =\frac{1}{n}\sum_{i\in N_{n}}\left(\sum_{h=s}^{m}\left|N_{t(i)+h}^{T}\right|+\sum_{h=s}^{m}\left|N_{t(i)-h}^{T}\right|\right)^{k}=\Delta_{n}\left(s,m;k\right). \tag{\stepcounter{equation}\theequation}
\end{align*}

The expression $c_{n}\left(s,m;k\right)=\inf_{\alpha>1}\left[\Delta_{n}\left(s,m;k\alpha\right)\right]^{1/\alpha}\left[\delta_{n}^{\partial}\left(s;\frac{\alpha}{\alpha-1}\right)\right]^{1-\frac{1}{\alpha}}$ is identical to KMS. 
Consequently, the conditions of KMS Theorem 3.2 are satisfied under the Assumptions \ref{asmp:psi_dependence} and \ref{asmp:clt}. 
\end{proof}

\begin{proof}[Proof of \Cref{thm:panel_dep_consistent_main}]
(a) By Corollary~\ref{cor:chs_optimal},
$(W_n^{\mathrm{CHS}})^+ - W_n^{\mathrm{CHS}} = B^-\otimes Q_G\succeq0$, and
$(W_n^{\mathrm{CHS}})^+\succeq0$ on $\mathcal M_n=\mathbb R^n$. Hence the bias decomposition
\[
V_{\mathrm{ET}}^\star - V_{\mathrm{adj}}
= \operatorname{tr}\!\big((B^-\otimes Q_G)\,\Sigma_n\big)
+ \mu_n'(W_n^{\mathrm{CHS}})^+\mu_n
\]
has both terms nonnegative: the first is a trace of positive semidefinite matrices,
the second is nonnegative by positive semidefiniteness of $(W_n^{\mathrm{CHS}})^+$.

(b) This is Proposition~\ref{prop:exact-opt-consistent} in
Appendix~\ref{sec:consistency_appendix}. The argument writes
$\widehat V_{\mathrm{ET}}^\star = \widehat V_{\mathrm{band}} - Y_n'D Y_n$, where
$\widehat V_{\mathrm{band}} = Y_n'(W_n^{\mathrm{CHS}} + C_M\otimes Q_G)Y_n$ is a
band-limited proxy consistent by the quadratic LLN of
Lemma~\ref{lem:quadratic_LLN}, and $D = (C_M - B^-)\otimes Q_G$ has
$\|D\|_{\mathrm{op}}\le M+1$, so $Y_n'DY_n = O_p(Mn) = o_p(\lambda_n)$.

(c) This is Proposition~\ref{prop:vadj_to_vtrue} in
Appendix~\ref{sec:consistency_appendix}: under Assumption~\ref{asmp:vadj_to_vtrue} the
discarded long-lag and boundary terms vanish relative to $\lambda_n$.
\end{proof}

\subsection{Proofs of Propositions}

\begin{proof}[Proof of \Cref{prop:estimand-vs-estimate}]
(i)$\Leftrightarrow$(iii) is the definition of positive
semidefiniteness. For (ii)$\Leftrightarrow$(iii): the plug-in $A_n=W_n$ satisfies
(C1) trivially, since $A_n-W_n=0\in\mathcal S_n^*$ for every cone. By
\Cref{thm:conservative}, conservativeness on
$\mathcal P_n(\mathcal M_n,\mathcal S_n)$ is therefore equivalent to (C2) alone,
$\mu'W_n\mu\ge0$ for all $\mu\in\mathcal M_n$, independently of $\mathcal S_n$. At
$\mathcal M_n=\mathbb R^n$ this reads $W_n\succeq0$.
\end{proof}

\begin{proof}[Proof of \Cref{prop:union-cone}]
\emph{(i).} Let $\Sigma_i=\Sigma_i^{(1)}+\Sigma_i^{(2)}$ with $\Sigma_i^{(1)}\in\mathcal S_n$
and $\Sigma_i^{(2)}\in\mathcal S_n^{W\text{-}\mathrm{diag}}$, $i=1,2$, and let $t\in[0,1]$.
Then
\[
t\Sigma_1+(1-t)\Sigma_2
=\underbrace{\bigl[t\Sigma_1^{(1)}+(1-t)\Sigma_2^{(1)}\bigr]}_{\in\ \mathcal S_n}
+\underbrace{\bigl[t\Sigma_1^{(2)}+(1-t)\Sigma_2^{(2)}\bigr]}_{\in\ \mathcal S_n^{W\text{-}\mathrm{diag}}}
\ \in\ \mathcal S_n^{\oplus},
\]
each bracket lying in its own cone by the convexity of that cone; likewise
$c(\Sigma^{(1)}+\Sigma^{(2)})=c\Sigma^{(1)}+c\Sigma^{(2)}\in\mathcal S_n^{\oplus}$ for $c\ge0$.
Convexity of the summands is doing the work that the union of the two cones, being nonconvex
in general, cannot do. Also $\mathcal S_n^{\oplus}\subseteq\Sym_n^+$ because $\Sym_n^+$ is
closed under addition, and $0$ belongs to each summand, so $\mathcal S_n^{\oplus}$ contains
both $\mathcal S_n$ and $\mathcal S_n^{W\text{-}\mathrm{diag}}$: it is admissible. For
minimality, let $\mathcal T$ be admissible, and take
$\Sigma_1\in\mathcal S_n\subseteq\mathcal T$ and
$\Sigma_2\in\mathcal S_n^{W\text{-}\mathrm{diag}}\subseteq\mathcal T$. Convexity of
$\mathcal T$ gives $\tfrac12\Sigma_1+\tfrac12\Sigma_2\in\mathcal T$, and the cone property
gives $\Sigma_1+\Sigma_2\in\mathcal T$. Hence $\mathcal S_n^{\oplus}\subseteq\mathcal T$.

\emph{(ii).} If $D\in\mathcal S_n^*\cap(\mathcal S_n^{W\text{-}\mathrm{diag}})^*$ then
$\tr\bigl(D(\Sigma_1+\Sigma_2)\bigr)=\tr(D\Sigma_1)+\tr(D\Sigma_2)\ge0$; conversely, taking
$\Sigma_2=0$ and then $\Sigma_1=0$ in $D\in(\mathcal S_n^{\oplus})^*$ delivers the two
memberships. The stated form of $(\mathcal S_n^{W\text{-}\mathrm{diag}})^*$ follows from
$\tr\bigl(D\Delta(s)\bigr)=\sum_js_j\,q_{n,j}'Dq_{n,j}$, which is nonnegative for every
$s\ge0$ if and only if every coefficient is. The characterization of uniform conservativeness
follows because (C2) does not involve the cone. Finally, if $\mathcal T$ is admissible then
$\mathcal S_n^{\oplus}\subseteq\mathcal T$ by (i), and dualization reverses inclusion, so
$\mathcal T^*\subseteq(\mathcal S_n^{\oplus})^*$; since (C1) requires
$A_n-W_n\in\mathcal T^*$, the feasible set under $\mathcal T$ is contained in the feasible set
under $\mathcal S_n^{\oplus}$.
\end{proof}

\begin{proof}[Proof of \Cref{prop:conicity}]
In all three parts the feasible set is fixed: $A_n-W_n\in\mathcal S_n^*$ is a
conic constraint (the dual cone $\mathcal S_n^*$ is closed convex), and
$P_{\mathcal M_n}A_nP_{\mathcal M_n}\succeq0$ is a linear matrix inequality,
hence conic. It remains to treat the objectives.

\textit{(i).} Introduce $t\in\R$ and rewrite \eqref{eq:opt-min-corr} as $\min t$
subject to $\mu'(A_n-W_n)\mu\le t$ for all $\mu\in\mathcal{M}_n$ with
$\|\mu\|=1$, equivalently the linear matrix inequality
$t\,P_{\mathcal{M}_n}\succeq P_{\mathcal{M}_n}(A_n-W_n)P_{\mathcal{M}_n}$.
This is a linear objective subject to an linear matrix inequality (LMI) and the conic feasibility constraints, hence a convex conic program.

\textit{(ii).} For $\pi$-almost every $(\mu,\Sigma)$ the integrand
$\mathrm{MSE}_n(\,\cdot\,;\mu,\Sigma)$ is convex in $A_n$ by hypothesis. The
averaged objective
\[
\overline{\mathrm{MSE}}_n(A_n;\pi)
=\int \mathrm{MSE}_n(A_n;\mu,\Sigma)\,d\pi(\mu,\Sigma)
\]
is therefore an integral of convex functions, hence convex in $A_n$; the
finite-fourth-moment assumption guarantees the integral is finite for every
$A_n\in\mathrm{Sym}_n$. Minimizing a convex objective over the convex conic
feasible set is a convex conic program.

\textit{(iii).} By \eqref{eq:plug-decomp}, for each $(\mu,\Sigma)$ the level
$V_{A,n}(\mu,\Sigma)=\tr\!\big(A_n(\Sigma+\mu\mu')\big)$ is a linear functional of
$A_n$. Integrating against $\pi$ and using linearity of the trace,
\[
\overline V_n(A_n;\pi)
=\int \tr\!\big(A_n(\Sigma+\mu\mu')\big)\,d\pi(\mu,\Sigma)
=\tr\!\big(A_n\,\Psi_\pi\big),
\qquad \Psi_\pi:=\E_\pi[\Sigma+\mu\mu'],
\]
finite since $\pi$ has finite second moments, and linear in $A_n$. Minimizing a
linear objective over the conic feasible set is a linear conic program.
\end{proof}

\begin{proof}[Proof of \Cref{prop:both-conservative}]
Throughout write $a_i:=\E[Y_{n,i}\mid\mathcal C_n]$ and
$\bar y_t:=\E[y_t\mid\mathcal C_n]=\sum_{i\in N_t^T}a_i$.

\emph{$\widehat V_{\mathrm{ET}}^\star$.} By \Cref{cor:chs_optimal},
$(W_n^{\mathrm{CHS}})^+-W_n^{\mathrm{CHS}}=B^-\otimes Q_G\succeq0$ and
$(W_n^{\mathrm{CHS}})^+\succeq0$, so
\[
V_{\mathrm{ET}}^\star-V_{\mathrm{adj}}
=\operatorname{tr}\!\big((B^-\otimes Q_G)\,\Sigma_n\big)+\mu_n'(W_n^{\mathrm{CHS}})^+\mu_n\ \ge\ 0 ,
\]
the first term a trace of a product of positive semidefinite matrices, the second
nonnegative by $(W_n^{\mathrm{CHS}})^+\succeq0$.

\emph{$\widehat V_{\mathrm{C2NW}}$.} Replacing each empirical product in
$\widehat V_{\mathrm{C2NW}}$ by its conditional expectation and subtracting
$V_{\mathrm{adj}}$ of~\eqref{eq:vadj}, the covariance parts of the two cluster
double-sums and of the lag terms cancel against $V_{\mathrm{adj}}$, leaving the
within-cell covariance, the group/time \emph{mean} outer products, and the mean part
of the kernel-weighted block:
\begin{align*}
V_{\mathrm{C2NW}}-V_{\mathrm{adj}}
&= \underbrace{\sum_{i}\sum_{j\in N_{g(i)}^G} a_i a_j'}_{\text{(I) within-group means}}
 + \underbrace{\sum_{i}\sum_{j\in N_{t(i)}^T} a_i a_j'}_{\text{(II) within-time means}}
 + \underbrace{\sum_{i}\sum_{j\in N_{t(i),g(i)}^{T\cap G}}\Cov(Y_{n,i},Y_{n,j}\mid\mathcal C_n)}_{\text{(III) within-cell covariance}}\\
&\quad + \underbrace{\sum_{m=1}^M\omega(m,M)\sum_{t=1}^{T-m}
   \big(\bar y_t\bar y_{t+m}'+\bar y_{t+m}\bar y_t'\big)
   \;+\; 2\sum_{m=1}^M\omega(m,M)\sum_{t=1}^{T}\bar y_t\bar y_t'}_{\text{(IV) kernel-weighted mean block}} .
\end{align*}

Terms (I)--(III) are positive semidefinite: (I) telescopes to
$\sum_g\big(\sum_{i\in N_g^G}a_i\big)\big(\sum_{i\in N_g^G}a_i\big)'$ and (II)
likewise to $\sum_t\big(\sum_{i\in N_t^T}a_i\big)\big(\sum_{i\in N_t^T}a_i\big)'$,
each a sum of outer products; (III) equals
$\sum_{g,t}\Var\!\big(\sum_{i\in N_{t,g}^{T\cap G}}Y_{n,i}\bigm|\mathcal C_n\big)\succeq0$.
For (IV), fix $m$ and complete the square lag by lag. For each $t\le T-m$,
\[
\bar y_t\bar y_{t+m}'+\bar y_{t+m}\bar y_t'
=(\bar y_t+\bar y_{t+m})(\bar y_t+\bar y_{t+m})'
-\bar y_t\bar y_t'-\bar y_{t+m}\bar y_{t+m}' ,
\]
so the two cross terms are a square minus the two diagonal squares. The doubled
within-period term supplies those diagonal squares. For each $m$,
\[
2\sum_{t=1}^{T}\bar y_t\bar y_t'
-\sum_{t=1}^{T-m}\big(\bar y_t\bar y_t'+\bar y_{t+m}\bar y_{t+m}'\big)
=\sum_{t=1}^{T} c_{t,m}\,\bar y_t\bar y_t',
\qquad c_{t,m}\in\{0,1,2\},
\]
a nonnegative-weighted sum of outer products, since each period is used as an endpoint at most twice. Therefore
\[
\text{(IV)}
=\sum_{m=1}^M\omega(m,M)\!\left[
\sum_{t=1}^{T-m}(\bar y_t+\bar y_{t+m})(\bar y_t+\bar y_{t+m})'
+\sum_{t=1}^{T} c_{t,m}\,\bar y_t\bar y_t'\right]\ \succeq\ 0 ,
\]
every summand a positive semidefinite outer product and $\omega(m,M)\ge0$ for the
triangular kernel. Adding (I)--(IV) gives $V_{\mathrm{C2NW}}-V_{\mathrm{adj}}\succeq0$.
\end{proof}

\begin{proof}[Proof of \Cref{prop:exact-opt-consistent}]
By the Cramér--Wold device and the minimum-eigenvalue normalization, as in the proof
of Lemma~\ref{lem:both-consistent}, it suffices to show
$\frac{1}{\lambda_n}\nu'(\widehat V_{\mathrm{ET}}^\star-V_{\mathrm{ET}}^\star)\nu
\xrightarrow{p}0$ for every unit vector $\nu$; set $\tilde Y_{n,i}:=\nu'Y_{n,i}$, a
scalar $\psi$-dependent array with the dependence coefficients of $Y_n$.

\emph{Decomposition.} The proxy and the exact estimator differ by
$D:=(C_M-B^-)\otimes Q_G$, since
$\widehat V_{\mathrm{band}} - \widehat V_{\mathrm{ET}}^\star
= Y_n'\big((C_M-B^-)\otimes Q_G\big)Y_n$. Writing
$\widehat\Delta := \tilde Y_n'D\tilde Y_n$ and
$\Delta := \E[\widehat\Delta\mid\mathcal C_n]$,
\[
\nu'(\widehat V_{\mathrm{ET}}^\star-V_{\mathrm{ET}}^\star)\nu
= \nu'(\widehat V_{\mathrm{band}}-V_{\mathrm{band}})\nu - (\widehat\Delta-\Delta).
\]
By Lemma~\ref{lem:both-consistent} the first term is $o_p(\lambda_n)$, so it remains to
bound $\widehat\Delta-\Delta$.

\emph{Operator-norm bound on $D$.} Both $C_M$ and $B^- = (C_M-J_T)_+$ are positive
semidefinite. Since the spectral window of $C_M$ is the Fej\'er kernel,
$\lambda_{\max}(C_M)\le f_{\mathrm{Fej}}(0)=M+1$; and as $-J_T\preceq0$,
$\lambda_{\max}(B^-) = \big(\lambda_{\max}(C_M-J_T)\big)_+\le\lambda_{\max}(C_M)\le M+1$.
Hence $\lambda_{\max}(C_M-B^-)\le\lambda_{\max}(C_M)\le M+1$ and
$\lambda_{\min}(C_M-B^-)\ge-\lambda_{\max}(B^-)\ge-(M+1)$, so
$\|C_M-B^-\|_{\mathrm{op}}\le M+1$. As $\|Q_G\|_{\mathrm{op}}=1$,
\[
\|D\|_{\mathrm{op}} = \|C_M-B^-\|_{\mathrm{op}}\,\|Q_G\|_{\mathrm{op}}\le M+1.
\]

\emph{Conclusion.} By the variational characterization of the operator norm,
\[
|\widehat\Delta| = |\tilde Y_n'D\tilde Y_n|
\le \|D\|_{\mathrm{op}}\,\|\tilde Y_n\|^2
\le (M+1)\sum_{i\in N_n}\tilde Y_{n,i}^2 .
\]
By Assumption~\ref{asmp:psi_dependence}(c),
$\E[\tilde Y_{n,i}^2\mid\mathcal C_n]\le\bar c$ a.s.\ uniformly in $i,n$, so
$\E[|\widehat\Delta|\mid\mathcal C_n]\le(M+1)\bar c\,n = O(Mn)$ a.s. This gives both
$|\Delta|\le\E[|\widehat\Delta|\mid\mathcal C_n]=O(Mn)$ and, by the conditional Markov
inequality, $\widehat\Delta = O_p(Mn)$. Therefore
\[
\frac{|\widehat\Delta-\Delta|}{\lambda_n}
\le \frac{|\widehat\Delta|+|\Delta|}{\lambda_n}
= O_p\!\Big(\frac{Mn}{\lambda_n}\Big) = o_p(1),
\]
using $Mn/\lambda_n\to0$. Combining the two displays gives
$\frac{1}{\lambda_n}\nu'(\widehat V_{\mathrm{ET}}^\star-V_{\mathrm{ET}}^\star)\nu
\xrightarrow{p}0$ for each fixed unit $\nu$, the unconditional convergence following by
dominated convergence as in Lemma~\ref{lem:quadratic_LLN}. Exactly as in the proof of
Lemma~\ref{lem:both-consistent}, in fixed dimension $v$ this per-direction bound yields
$\lambda_n^{-1}\|\widehat V_{\mathrm{ET}}^\star-V_{\mathrm{ET}}^\star\|_{\mathrm{op}}\xrightarrow{p}0$,
whence $(V_{\mathrm{ET}}^\star)^{-1}\widehat V_{\mathrm{ET}}^\star\xrightarrow{p}I_v$ using
$\|(V_{\mathrm{ET}}^\star)^{-1}\|_{\mathrm{op}}\le\lambda_n^{-1}(1+o(1))$ from the hypothesis
$\lambda_{\min}(V_{\mathrm{ET}}^\star)/\lambda_n\ge1+o(1)$.
\end{proof}

\begin{proof}[Proof of \Cref{prop:vadj_to_vtrue}]
It suffices to show $\frac{1}{\lambda_n}\nu'(V_{\mathrm{true}}-V_{\mathrm{adj}})\nu = o(1)$
for any unit vector $\nu$; setting $\tilde Y_{n,i}=\nu'Y_{n,i}$, this is
Lemma~\ref{lemma:vadj_to_vtrue_scalar}, whose conditions are immediate from
Assumption~\ref{asmp:vadj_to_vtrue}.
\end{proof}

\subsection{Statements and Proofs of Lemmas}

\begin{proof}[Proof of \Cref{lem:chs-matrix-form}]
Read $\widehat V_{\mathrm{CHS}}=\sum_{i,j}(W_n^{\mathrm{CHS}})_{ij}Y_{n,i}Y_{n,j}'$, so that
each line of \eqref{eq:Vchs} is the quadratic form of a weight matrix on the
index pair $\big((t,g),(s,h)\big)$. With one observation per cell,
$N_{g(i)}^G=\{(g,s)\}_s$, $N_{t(i)}^T=\{(h,t)\}_h$,
$N_{t(i),g(i)}^{T\cap G}=\{(g,t)\}$, and $y_t=\sum_h Y_{(h,t)}$, so the five
lines map to
\begin{align*}
\text{within-group}\quad &\textstyle\sum_i\sum_{j\in N_{g(i)}^G}
  && \longmapsto\quad J_T\otimes I_G,\\
\text{within-time}\quad &\textstyle\sum_i\sum_{j\in N_{t(i)}^T}
  && \longmapsto\quad I_T\otimes J_G,\\
\text{within-cell}\quad &-\textstyle\sum_i\sum_{j\in N_{t(i),g(i)}^{T\cap G}}
  && \longmapsto\quad -\,I_T\otimes I_G,\\
\text{aggregated lag}\quad &\textstyle\sum_{m\ge1}\omega(m,M)\sum_t
  \big(y_ty_{t+m}'+y_{t+m}y_t'\big)
  && \longmapsto\quad (C_M-I_T)\otimes J_G,\\
\text{double-count}\quad &-\textstyle\sum_{m\ge1}\omega(m,M)\sum_t\sum_g
  \big(Y_{(g,t)}Y_{(g,t+m)}'+\text{tr.}\big)
  && \longmapsto\quad -\,(C_M-I_T)\otimes I_G,
\end{align*}
using $(C_M)_{ts}=\omega(|t-s|,M)$ and $\omega(0,M)=1$. Summing the five
contributions and collecting the $\otimes J_G$ and $\otimes I_G$ blocks,
\[
\underbrace{\big[I_T+(C_M-I_T)\big]}_{=\,C_M}\otimes J_G
\;+\;
\underbrace{\big[J_T-I_T-(C_M-I_T)\big]}_{=\,J_T-C_M}\otimes I_G
\;=\;
C_M\otimes J_G+(J_T-C_M)\otimes I_G,
\]
which rearranges to $J_T\otimes I_G+C_M\otimes(J_G-I_G)=W_n^{\mathrm{CHS}}$.
\end{proof}

\begin{proof} [Proof of \Cref{lem:quadratic_LLN}]
We adapt the proof of Proposition~4.1 of \citet{kojevnikov2021limit}, translating from 
the HAC weight notation $\omega_n(s)$ used there to the matrix entries $(A_n)_{ij}$ used here. 
Throughout, all probabilistic statements are conditional on $\mathcal{C}_n$; the unconditional statement follows by the dominated convergence theorem applied to the conditional  probabilities, since they are bounded by $1$ and converge to $0$ a.s.

\textbf{Step 1: Centering.} Define $z_{ij} := Y_{n,i} Y_{n,j} - \mathbb{E}[Y_{n,i} Y_{n,j} \mid \mathcal{C}_n]$, 
so that $\mathbb{E}[z_{ij} \mid \mathcal{C}_n] = 0$. Then
\[
\hat{V}_{A,n} - V_{A,n} = \sum_{i \in N_n} \sum_{j \in N_n} (A_n)_{ij}\, z_{ij}.
\]
By condition (iii), the inner sum is restricted to pairs $(i, j)$ with $d_n(i,j) \leq b_n$:
\[
\hat{V}_{A,n} - V_{A,n} = \sum_{i \in N_n} \sum_{j \in N_n : d_n(i,j) \leq b_n} (A_n)_{ij}\, z_{ij}.
\]

\textbf{Step 2: Conditional variance bound.} We compute
\[
\mathrm{Var}\!\left(\hat{V}_{A,n} - V_{A,n} \mid \mathcal{C}_n\right) = \sum_{(i,j,k,l)} (A_n)_{ij} (A_n)_{kl} \, \mathbb{E}[z_{ij} z_{kl} \mid \mathcal{C}_n],
\]
where the sum runs over quadruples $(i, j, k, l) \in N_n^4$ with $d_n(i,j) \leq b_n$ and 
$d_n(k, l) \leq b_n$. By condition (ii), $|(A_n)_{ij}(A_n)_{kl}| \leq \bar{a}_n^2$, so
\[
\mathrm{Var}\!\left(\hat{V}_{A,n} - V_{A,n} \mid \mathcal{C}_n\right) \leq \bar{a}_n^2 \sum_{(i,j,k,l)} \bigl|\mathbb{E}[z_{ij} z_{kl} \mid \mathcal{C}_n]\bigr|.
\]

Group the quadruples by the distance between the pairs $\{i, j\}$ and $\{k, l\}$:
\[
\mathcal{H}_n(s, b_n) := \bigl\{(i, j, k, l) \in N_n^4 : d_n(i,j) \leq b_n,\ d_n(k,l) \leq b_n,\ d_n(\{i,j\}, \{k,l\}) = s\bigr\},
\]
so that
\[
\mathrm{Var}\!\left(\hat{V}_{A,n} - V_{A,n} \mid \mathcal{C}_n\right) \leq \bar{a}_n^2 \sum_{s \geq 0} \sum_{(i,j,k,l) \in \mathcal{H}_n(s, b_n)} \bigl|\mathbb{E}[z_{ij} z_{kl} \mid \mathcal{C}_n]\bigr|.
\]

\textbf{Step 3: Bounding individual covariances.} The variables $z_{ij}$ do not involve
$A_n$, so the bounds here are free of $\bar a_n$. For $s \geq 1$, the pairs $\{i,j\}$ and
$\{k,l\}$ are at distance $s$, so by KMS Theorem A.1 (combined with
Assumption~\ref{asmp:psi_dependence}(a) and condition (i) above with $p > 4$),
\[
\bigl|\mathbb{E}[z_{ij} z_{kl} \mid \mathcal{C}_n]\bigr| \leq C_1 \cdot \theta_{n,s}^{1 - 4/p} \quad \text{a.s.},
\]
for a constant $C_1 > 0$ depending only on $p$ and the moment bound in (i). For 
$s = 0$, the bound is uniform: 
$|\mathbb{E}[z_{ij} z_{kl} \mid \mathcal{C}_n]| \leq C_1$ by Cauchy--Schwarz applied to the conditional 
fourth moment, which is finite by condition (i).

\textbf{Step 4: Counting quadruples.} By inequality (A.13) of KMS,
\[
|\mathcal{H}_n(s, b_n)| \leq 4 n\, c_n(s, b_n; 2),
\]
where $c_n(s, b_n; 2)$ is defined in equation (11) of the paper.

\textbf{Step 5: Combining.} Substituting Steps 3 and 4 into the variance bound from Step 2:
\[
\mathrm{Var}\!\left(\hat{V}_{A,n} - V_{A,n} \mid \mathcal{C}_n\right)
\leq 4 C_1 \, \bar{a}_n^2 \cdot n \sum_{s \geq 0} c_n(s, b_n; 2)\, \theta_{n,s}^{1 - 4/p}
\xrightarrow{\text{a.s.}} 0
\]
by condition (iv).

\textbf{Step 6: Conclusion.} By Chebyshev's inequality, for any $\varepsilon > 0$,
\[
\mathbb{P}\!\left(\bigl|\hat{V}_{A,n} - V_{A,n}\bigr| > \varepsilon \,\Big|\, \mathcal{C}_n\right)
\leq \frac{1}{\varepsilon^2}\,\mathrm{Var}\!\left(\hat{V}_{A,n} - V_{A,n} \mid \mathcal{C}_n\right)
\xrightarrow{\text{a.s.}} 0.
\]
Taking unconditional expectations and applying dominated convergence (the conditional probabilities are bounded by $1$), $\mathbb{P}\!\left(\bigl|\hat{V}_{A,n} - V_{A,n}\bigr| > \varepsilon\right) \to 0$. 
\end{proof}

\begin{proof}[Proof of \Cref{lem:both-consistent}]
By Slutsky's lemma and the Cramér--Wold device it suffices to show
$\frac{1}{\lambda_n}\nu'(\widehat V_\bullet - V_\bullet)\nu = o_p(1)$ for any unit vector
$\nu$; set $\tilde Y_{n,i}:=\nu'Y_{n,i}$, a scalar $\psi$-dependent array with the
dependence coefficients of $Y_n$. Using $2\sum_{m=1}^M\omega(m,M)=M$, decompose
\[
A_n^{\mathrm{C2NW}} = \underbrace{J_T\otimes I_G + C_M\otimes J_G}_{=:A_n^{\mathrm{lag}}}
\;+\; \underbrace{M\,I_T\otimes J_G}_{=:A_n^{0}} ,
\]
and apply \Cref{lem:quadratic_LLN} to each of
$A_n^{\mathrm{band}}/\lambda_n$, $A_n^{\mathrm{lag}}/\lambda_n$, and
$A_n^{0}/\lambda_n$; the conclusion for $\bullet=\mathrm{C2NW}$ follows by summing the two
latter limits. We verify the conditions for $\bullet \in \{\mathrm{C2NW}, \mathrm{band}\}$.
\begin{enumerate}[label=(\roman*)]
\item Moment bound: from \Cref{asmp:psi_dependence}(c).
\item Weight bound: the entries of $A_n^{\mathrm{band}} = W_n^{\mathrm{CHS}}+C_M\otimes Q_G$
and of $A_n^{\mathrm{lag}}$ are bounded by a constant, so $\bar a_n\asymp\lambda_n^{-1}$;
the entries of $A_n^{0}$ equal $M$, so $\bar a_n\asymp M/\lambda_n$.
\item Support bandwidth: $b_n = M$ for $A_n^{\mathrm{band}}$ and $A_n^{\mathrm{lag}}$,
whose nonzero entries lie on pairs $(i,j)$ with $g(i)=g(j)$, $t(i)=t(j)$, or
$|t(i)-t(j)|\le M$; and $b_n=0$ for $A_n^{0}$, whose nonzero entries lie only on pairs with
$t(i)=t(j)$.
\item Variance summability: for $A_n^{\mathrm{band}}$ and $A_n^{\mathrm{lag}}$, condition
(iv) of \Cref{lem:quadratic_LLN} reads, up to a constant,
$\frac{n}{\lambda_n^2}\sum_{s\ge0}c_n(s,M;2)\theta_{n,s}^{1-4/p}\to0$, which is
\Cref{asmp:vcon_converge}. For $A_n^{0}$ it reads
$\frac{M^2n}{\lambda_n^2}\sum_{s\ge0}c_n(s,0;2)\theta_{n,s}^{1-4/p}\to0$, which is
\Cref{asmp:vcon_converge_c2nw}.
\end{enumerate}
\Cref{lem:quadratic_LLN} then gives, for each fixed unit $\nu$,
$\frac{1}{\lambda_n}\,\nu'(\widehat V_\bullet - V_\bullet)\nu\xrightarrow{p}0$. Because $v$
is fixed, evaluating this at the coordinate directions $e_a$ and the pairwise sums
$e_a+e_b$ controls every entry of the symmetric $v\times v$ matrix
$\widehat V_\bullet-V_\bullet$, so
$\lambda_n^{-1}\|\widehat V_\bullet-V_\bullet\|_{\mathrm{op}}\xrightarrow{p}0$. Writing
$V_\bullet^{-1}\widehat V_\bullet-I_v=V_\bullet^{-1}(\widehat V_\bullet-V_\bullet)$ and using
$\|V_\bullet^{-1}\|_{\mathrm{op}}=\lambda_{\min}(V_\bullet)^{-1}\le\lambda_n^{-1}(1+o(1))$ from the
hypothesis,
\[
\big\|V_\bullet^{-1}\widehat V_\bullet-I_v\big\|_{\mathrm{op}}
\le\lambda_{\min}(V_\bullet)^{-1}\,\|\widehat V_\bullet-V_\bullet\|_{\mathrm{op}}
\xrightarrow{p}0 ,
\]
which is the claim.
\end{proof}

\begin{lemma} \label{lemma:vadj_to_vtrue_scalar}
Suppose Assumptions \ref{asmp:psi_dependence} and \ref{asmp:vadj_to_vtrue} hold and $Y_{n,i}$ is scalar. 
Then,
\begin{equation}
\frac{1}{\lambda_{n}}\left(V_{true}-V_{adj}\right)\xrightarrow{p}0.
\end{equation}
\end{lemma}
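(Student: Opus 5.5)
Write $\sigma_{ij}:=\Cov(Y_{n,i},Y_{n,j}\mid\mathcal C_n)$. The plan is to write both $V_{\mathrm{true}}=\sum_{i,j}\sigma_{ij}$ and $V_{\mathrm{adj}}$ as weighted double sums over pairs $(i,j)$, subtract them, and bound the remainder pair by pair with the $\psi$-dependence covariance inequality. Under this notation \eqref{eq:vadj} reads $V_{\mathrm{adj}}=\sum_{i,j}w_{ij}\sigma_{ij}$, with weights determined by the relation of the pair:
\begin{itemize}
\item $w_{ij}=1$ if $g(i)=g(j)$ or $t(i)=t(j)$, by inclusion--exclusion of the first three sums;
\item $w_{ij}=\omega(m,M)$ if $g(i)\ne g(j)$ and $|t(i)-t(j)|=m\in[1,M]$;
\item $w_{ij}=\omega(m,M)$ also for same-group pairs at lag $m$, because the aggregated lag term $\Cov(y_t,y_{t+m})$ includes them.
\end{itemize}
The last case double-counts same-group pairs, which already received weight one. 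Hence
\[
V_{\mathrm{true}}-V_{\mathrm{adj}}=\sum_{m=1}^{M}(1-\omega(m,M))\!\!\sum_{\substack{i,j:\ g(i)\neq g(j)\\ |t(i)-t(j)|=m}}\!\!\sigma_{ij}\;-\;\sum_{m=1}^{M}\omega(m,M)\!\!\sum_{\substack{g(i)=g(j)\\ |t(i)-t(j)|=m}}\!\!\sigma_{ij}\;+\!\!\sum_{\substack{g(i)\neq g(j)\\ |t(i)-t(j)|>M}}\!\!\sigma_{ij}.
\]
The first step is to verify this bookkeeping carefully against the matrix form in \Cref{lem:chs-matrix-form}. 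In the balanced case the decomposition corresponds to $J_n-W_n^{\mathrm{CHS}}$.

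The second step bounds each covariance. For $i,j$ in different groups and different periods, $d_n(i,j)=|t(i)-t(j)|=m\ge1$. Applying the KMS covariance inequality (Theorem A.1, as in Step 3 of the proof of \Cref{lem:quadratic_LLN}) with singletons $A=\{i\}$, $B=\{j\}$, \Cref{asmp:psi_dependence}(a) and the $p$-moment bound in (c) give $|\sigma_{ij}|\le C\theta_{n,m}^{1-2/p}$ a.s.

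The same-group pairs share a cluster, so their distance is $0$ and $\psi$-dependence does not apply to them directly. This is the step I expect to be the main obstacle. I would handle it by reading the distance for these pairs as the time gap $m$, which is what \Cref{asmp:vadj_to_vtrue}(b) implicitly does by attaching $\theta_{n,m}^{1-2/p}$ to them. If that reading is not licensed by the definition of $d_n$, I would instead invoke the within-$g$ serial dependence (observations with the same $g$ at lag $m$ covary at most at rate $\theta_{n,m}$) as the interpretation under which (b) is stated. Either way, the second sum is bounded by $\sum_{m\ge1}\sum_{t}\sum_g|N_{t,g}^{T\cap G}||N_{t+m,g}^{T\cap G}|\,C\theta_{n,m}^{1-2/p}$, using $0\le\omega\le1$, and this is $o(\lambda_n)$ by (b).

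The third step counts pairs in the remaining two sums. The number of pairs at time distance exactly $m$ is at most $\sum_i(|N^T_{t(i)+m}|+|N^T_{t(i)-m}|)=n\,\delta_n^\partial(m;1)$. So the first sum is at most $C n\sum_{m\le M}|\omega(m,M)-1|\delta_n^\partial(m;1)\theta_{n,m}^{1-2/p}=o(\lambda_n)$ by (a). The third sum is bounded by $2\sum_{m>M}\sum_t|N_t^T||N_{t+m}^T|C\theta_{n,m}^{1-2/p}=o(\lambda_n)$ by (c). Dividing by $\lambda_n$ and adding the three bounds gives the lemma.
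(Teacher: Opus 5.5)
This is essentially the paper's proof: the same three-term split matched to Assumption~\ref{asmp:vadj_to_vtrue}(a)--(c), the same KMS Theorem A.1 bound $C\theta_{n,m}^{1-2/p}$, and the same pair counts. Your grouping is slightly tidier than the paper's, whose first term also sweeps same-group pairs in through $\Cov(y_t,y_{t+m})$. The distance-zero problem you flag for same-group pairs is real, but the paper has it too: it simply attaches $\theta_{n,m}^{1-2/p}$ to those pairs without comment, which is exactly your reading of (b).

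One correction: your decomposition is right for \eqref{eq:vadj} as written, but it is not $J_n-W_n^{\mathrm{CHS}}$. The matrix $W_n^{\mathrm{CHS}}$ includes the double-count line, so it gives same-group pairs weight one at every lag. If you take $V_{\mathrm{adj}}=\tr(W_n^{\mathrm{CHS}}\Sigma_n)$, the identity the paper asserts and that Theorem~\ref{thm:panel_dep_consistent_main}(a) relies on, your middle term vanishes. Only cross-group pairs at genuine distance $m\ge1$ then remain, so neither the obstacle nor part (b) is needed.
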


\begin{proof}[Proof of \Cref{lemma:vadj_to_vtrue_scalar}]
Let $\vartheta_1, \vartheta_2, \vartheta_3$ denote arbitrary positive and finite constants.

\begin{align*}
V_{adj}-V_{true} & =\sum_{m=1}^{M}\sum_{t=1}^{T-m}\left(\omega\left(m,M\right)-1\right) \left( \Cov\left(y_{t},y_{t+m}\right)+\Cov\left(y_{t+m},y_{t}\right) \right)\\
 & \quad+\sum_{m\geq1}\sum_{t=1}^{T-m}\sum_{g=1}^{G}\sum_{i\in N_{t,g}^{T\cap G}}\sum_{j\in N_{t+m,g}^{T\cap G}}\left(\Cov\left(Y_{n,i},Y_{n,j}\right)+\Cov\left(Y_{n,j},Y_{n,i}\right)\right)\\
 & \quad+\sum_{m=M+1}^{T-1}\sum_{t=1}^{T-m}\Cov\left(y_{t},y_{t+m}\right)+\sum_{m=M+1}^{T-1}\sum_{t=1}^{T-m}\Cov\left(y_{t+m},y_{t}\right). \tag{\stepcounter{equation}\theequation}
\end{align*}

By applying Theorem A1 from KMS, for $d_{n}\left(i,j\right)=s\geq1$, we obtain $\left|\Cov\left(Y_{n,i},Y_{n,j}\right)\right|\leq\vartheta_{2}\theta_{n,s}^{1-\frac{2}{p}}$, 
where $\vartheta_{2}=C\left(\mu\vee1\right)^{2}\bar{\theta}$ for some constant $C>0$. 
Then,
\begin{align*}
 & \left|\sum_{m=1}^{M}\sum_{t=1}^{T-m}\left(\omega\left(m,M\right)-1\right)\Cov\left(y_{t},y_{t+m}\right)\right|\\
 & \leq\sum_{m=1}^{M}\sum_{t=1}^{T-m}\left|\omega\left(m,M\right)-1\right|\sum_{i\in N_{t}^{T}}\sum_{j\in N_{t+m}^{T}}\left|\Cov\left(Y_{n,i},Y_{n,j}\right)\right|\\
 & \leq\sum_{m=1}^{M}\left|\omega\left(m,M\right)-1\right|\sum_{t=1}^{T}\sum_{i\in N_{t}^{T}}\left|N_{t(i)+m}^{T}\right|\vartheta_{2}\theta_{n,m}^{1-\frac{2}{p}}\\
 & \leq\sum_{m=1}^{M}\left|\omega\left(m,M\right)-1\right|\sum_{i\in N_{n}}\left|N_{t(i)+m}^{T}\right|\vartheta_{2}\theta_{n,m}^{1-\frac{2}{p}}\\
 & \leq\vartheta_{2}n\sum_{m=1}^{M}\left|\omega\left(m,M\right)-1\right|\delta_{n}^{\partial}\left(m;1\right)\theta_{n,m}^{1-\frac{2}{p}}, \tag{\stepcounter{equation}\theequation}
\end{align*}
so it suffices for this scaled object to converge. 

Now, we show that $\frac{1}{\lambda_n} \sum_{m\geq1}\sum_{t=1}^{T-m}\sum_{g=1}^{G}\sum_{i\in N_{t,g}^{T\cap G}}\sum_{j\in N_{t+m,g}^{T\cap G}}\Cov\left(Y_{n,i},Y_{n,j}\right)$
is asymptotically negligible:
\begin{align*}
 & \sum_{m\geq1}\sum_{t=1}^{T-m}\sum_{g=1}^{G}\sum_{i\in N_{t,g}^{T\cap G}}\sum_{j\in N_{t+m,g}^{T\cap G}}\left|\Cov\left(Y_{n,i},Y_{n,j}\right)\right|\\
 & \leq\sum_{m\geq1}\sum_{t=1}^{T-m}\sum_{g=1}^{G}\left|N_{t,g}^{T\cap G}\right|\left|N_{t+m,g}^{T\cap G}\right|\vartheta_{2}\theta_{n,m}^{1-\frac{2}{p}}. \tag{\stepcounter{equation}\theequation}
\end{align*}
The object converges to zero after scaling due to the assumption used.

Finally, we show $\frac{1}{\lambda_n}\sum_{m=M+1}^{T-1}\sum_{t=1}^{T-m}\Cov\left(y_{t},y_{t+m}\right)$ is asymptotically negligible: 
\begin{align*}
\sum_{m=M+1}^{T-1}\sum_{t=1}^{T-m}\left|\Cov\left(y_{t},y_{t+m}\right)\right| & \leq\sum_{m=M+1}^{T-1}\sum_{t=1}^{T-m}\left|N_{t}^{T}\right|\left|N_{t+m}^{T}\right|\vartheta_{3}\theta_{n,m}^{1-\frac{2}{p}}. \tag{\stepcounter{equation}\theequation}
\end{align*}
The object converges to zero after scaling due to the assumption used.
\end{proof}

\subsection{Proof of Corollaries}

\begin{proof}[Proof of Corollary~\ref{cor:twoway_optimal}]
Write $P_G := J_G/G$, $Q_G := I_G - P_G$ and $P_T, Q_T$ analogously;
these are orthogonal idempotents on each factor. Substituting
$J_G = G P_G$, $J_T = T P_T$, $I = P + Q$ into $W_n^{2\text{-way}}$ and
collecting terms in the commuting blocks $\{P_T,Q_T\}\otimes\{P_G,Q_G\}$,
\[
W_n^{2\text{-way}}
= (G{+}T{-}1)\,P_T\!\otimes\!P_G
+ (G{-}1)\,Q_T\!\otimes\!P_G
+ (T{-}1)\,P_T\!\otimes\!Q_G
- Q_T\!\otimes\!Q_G .
\]
The coefficients are the eigenvalues; the only negative one is $-1$, on
$Q_T\otimes Q_G$ (nonempty iff $G,T\ge2$). The positive part zeros this
block, so $A_n^\star = W_n^{2\text{-way}} + Q_T\otimes Q_G$. Finally
$Q_T\otimes Q_G = (I_T - P_T)\otimes(I_G - P_G) = I_n - W_n^G/T - W_n^T/G + J_n/n = P_{00}$
by inclusion--exclusion, with the stated quadratic form.
\end{proof}

\begin{proof}[Proof of Corollary~\ref{cor:chs_optimal}]
Decompose $I_G = P_G + Q_G$ and $J_G - I_G = (G-1)P_G - Q_G$, and
substitute into $W_n^{\mathrm{CHS}}$. Since $P_G, Q_G$ are orthogonal
idempotent,
\[
W_n^{\mathrm{CHS}}
= \underbrace{\big[J_T + (G-1)C_M\big]}_{=:A}\otimes P_G
\;+\; \underbrace{\big[J_T - C_M\big]}_{=B}\otimes Q_G,
\]
an orthogonal direct sum over $\mathbb{R}^T\otimes\mathrm{range}(P_G)$ and
$\mathbb{R}^T\otimes\mathrm{range}(Q_G)$, so the positive part acts
blockwise: $(W_n^{\mathrm{CHS}})^+ = A^+\otimes P_G + B^+\otimes Q_G$.
Both $J_T$ and $C_M$ are PSD (the latter a Fej\'er kernel), so
$A = J_T + (G-1)C_M\succeq0$ and $A^+ = A$. For the second block,
$B^+ = B + B^-$ with $B^- = (C_M - J_T)_+\succeq0$, giving
$B^+\otimes Q_G = B\otimes Q_G + B^-\otimes Q_G$. Summing,
$A_n^\star = W_n^{\mathrm{CHS}} + B^-\otimes Q_G$. Since $Q_G$ acts on the
cluster index to produce $\check Y$, the quadratic form is
$Y_n'W_n^{\mathrm{CHS}}Y_n + \sum_g \check Y_{g,\cdot}'B^-\check Y_{g,\cdot}$.
\end{proof}

\begin{proof}[Proof of \Cref{cor:coverage}]
Write $\sigma_n^2:=\Var(S_n)=\nu'V_{\mathrm{true}}\nu$ and
$\phi_n:=\nu'V_{\mathrm{true}}\nu/\nu'V_{\mathrm{ET}}^\star\nu$; by
\Cref{thm:panel_dep_consistent_main}(a),(c),
$\nu'V_{\mathrm{ET}}^\star\nu\ge\nu'V_{\mathrm{adj}}\nu=\nu'V_{\mathrm{true}}\nu\,(1+o(1))$,
so $\phi_n\le1+o(1)$. By \Cref{thm:panel_dep_consistent_main}(b),
$\nu'\widehat V_{\mathrm{ET}}^\star\nu=\nu'V_{\mathrm{ET}}^\star\nu\,(1+o_p(1))$, hence
\[
r_n^2:=\frac{\sigma_n^2}{\nu'\widehat V_{\mathrm{ET}}^\star\nu}
=\phi_n\,\frac{\nu'V_{\mathrm{ET}}^\star\nu}{\nu'\widehat V_{\mathrm{ET}}^\star\nu}
=\phi_n\,(1+o_p(1))\ \le\ 1+o_p(1).
\]
Since $|S_n|/\sqrt{\nu'\widehat V_{\mathrm{ET}}^\star\nu}=(|S_n|/\sigma_n)\,r_n$ and
\Cref{thm:clt} gives $|S_n|/\sigma_n\xrightarrow{d}|Z|$, $Z\sim N(0,1)$, fix
$\varepsilon>0$. On $\{r_n\le1+\varepsilon\}$,
$\{(|S_n|/\sigma_n)\,r_n>z_{1-\alpha/2}\}\subseteq
\{|S_n|/\sigma_n>z_{1-\alpha/2}/(1+\varepsilon)\}$, so
\[
\Pr\!\Big\{\tfrac{|S_n|}{\sqrt{\nu'\widehat V_{\mathrm{ET}}^\star\nu}}>z_{1-\alpha/2}\Big\}
\ \le\ \Pr\!\Big\{\tfrac{|S_n|}{\sigma_n}>\tfrac{z_{1-\alpha/2}}{1+\varepsilon}\Big\}
+\Pr\{r_n>1+\varepsilon\}.
\]
The second term vanishes since $r_n\le1+o_p(1)$; the first, $z_{1-\alpha/2}/(1+\varepsilon)$
being a continuity point of the law of $|Z|$, converges to
$2\Phi\!\big(-z_{1-\alpha/2}/(1+\varepsilon)\big)$. Taking $\limsup_n$ and then
$\varepsilon\downarrow0$,
\[
\limsup_{n\to\infty}\Pr\!\Big\{\tfrac{|S_n|}{\sqrt{\nu'\widehat V_{\mathrm{ET}}^\star\nu}}>z_{1-\alpha/2}\Big\}
\ \le\ 2\Phi(-z_{1-\alpha/2})=\alpha .
\]

For the equality claim, $r_n^2=\phi_n(1+o_p(1))$ with $\phi_n\le1+o(1)$ deterministic, so
the limsup attains $\alpha$ iff $\phi_n\to1$; by (a) and (c) this holds iff
$\nu'(V_{\mathrm{ET}}^\star-V_{\mathrm{true}})\nu=o\!\big(\nu'V_{\mathrm{true}}\nu\big)$.
\end{proof}

\bibliographystyle{ecta}
\bibliography{het_means}

\begin{thebibliography}{13}
\newcommand{\enquote}[1]{``#1''}
\expandafter\ifx\csname natexlab\endcsname\relax\def\natexlab#1{#1}\fi

\bibitem[\protect\citeauthoryear{Abadie, Athey, Imbens, and Wooldridge}{Abadie et~al.}{2020}]{abadie2020sampling}
\textsc{Abadie, A., S.~Athey, G.~W. Imbens, and J.~M. Wooldridge} (2020): \enquote{Sampling-based versus design-based uncertainty in regression analysis,} \emph{Econometrica}, 88, 265--296.

\bibitem[\protect\citeauthoryear{Abadie, Athey, Imbens, and Wooldridge}{Abadie et~al.}{2023}]{abadie2023should}
---\hspace{-.1pt}---\hspace{-.1pt}--- (2023): \enquote{When should you adjust standard errors for clustering?} \emph{The Quarterly Journal of Economics}, 138, 1--35.

\bibitem[\protect\citeauthoryear{Andrews}{Andrews}{1991}]{andrews1991heteroskedasticity}
\textsc{Andrews, D.~W.} (1991): \enquote{Heteroskedasticity and autocorrelation consistent covariance matrix estimation,} \emph{Econometrica: Journal of the Econometric Society}, 817--858.

\bibitem[\protect\citeauthoryear{Cameron, Gelbach, and Miller}{Cameron et~al.}{2011}]{cameron2011robust}
\textsc{Cameron, A.~C., J.~B. Gelbach, and D.~L. Miller} (2011): \enquote{Robust inference with multiway clustering,} \emph{Journal of Business \& Economic Statistics}, 29, 238--249.

\bibitem[\protect\citeauthoryear{Casini}{Casini}{2023}]{casini2023theory}
\textsc{Casini, A.} (2023): \enquote{Theory of evolutionary spectra for heteroskedasticity and autocorrelation robust inference in possibly misspecified and nonstationary models,} \emph{Journal of Econometrics}, 235, 372--392.

\bibitem[\protect\citeauthoryear{Chan}{Chan}{2022}]{chan2022optimal}
\textsc{Chan, K.~W.} (2022): \enquote{Optimal difference-based variance estimators in time series: A general framework,} \emph{The Annals of Statistics}, 50, 1376--1400.

\bibitem[\protect\citeauthoryear{Chiang, Hansen, and Sasaki}{Chiang et~al.}{2024}]{chiang2024standard}
\textsc{Chiang, H.~D., B.~E. Hansen, and Y.~Sasaki} (2024): \enquote{Standard errors for two-way clustering with serially correlated time effects,} \emph{Review of Economics and Statistics}, 1--40.

\bibitem[\protect\citeauthoryear{Davezies, D’haultf{\oe}uille, and Guyonvarch}{Davezies et~al.}{2021}]{davezies2021empirical}
\textsc{Davezies, L., X.~D’haultf{\oe}uille, and Y.~Guyonvarch} (2021): \enquote{Empirical process results for exchangeable arrays,} \emph{The Annals of Statistics}, 49, 845--862.

\bibitem[\protect\citeauthoryear{Fama and French}{Fama and French}{1993}]{fama1993common}
\textsc{Fama, E.~F. and K.~R. French} (1993): \enquote{Common risk factors in the returns on stocks and bonds,} \emph{Journal of financial economics}, 33, 3--56.

\bibitem[\protect\citeauthoryear{Kojevnikov, Marmer, and Song}{Kojevnikov et~al.}{2021}]{kojevnikov2021limit}
\textsc{Kojevnikov, D., V.~Marmer, and K.~Song} (2021): \enquote{Limit theorems for network dependent random variables,} \emph{Journal of Econometrics}, 222, 882--908.

\bibitem[\protect\citeauthoryear{Newey and West}{Newey and West}{1987}]{newey1987asimple}
\textsc{Newey, W. and K.~West} (1987): \enquote{A Simple, Positive Semi-definite, Heteroskedasticity and Autocorrelation Consistent Covariance Matrix,} \emph{Econometrica}, 55, 703--708.

\bibitem[\protect\citeauthoryear{Xu and Yap}{Xu and Yap}{2024}]{xu2024clustering}
\textsc{Xu, R. and L.~Yap} (2024): \enquote{Clustering with Potential Multidimensionality: Inference and Practice,} \emph{arXiv preprint arXiv:2411.13372}.

\bibitem[\protect\citeauthoryear{Yap}{Yap}{2025}]{yap2025asymptotic}
\textsc{Yap, L.} (2025): \enquote{Asymptotic theory for two-way clustering,} \emph{Journal of Econometrics}, 249, 106001.

\end{thebibliography}

\end{document}